\documentclass[letterpaper, 11pt]{article}
\usepackage[margin=1in]{geometry}
\usepackage{thm-restate}
\usepackage{authblk}
%% Packages
\usepackage[T1]{fontenc}
\usepackage{soul} 
\usepackage{phfqit}
\usepackage{proba} 
\usepackage{amssymb}
\usepackage{amsmath}
\usepackage{amsthm}
\usepackage{amsfonts}
\usepackage{mathtools}
\usepackage{xspace}
\usepackage{bm}
\usepackage{nicefrac}
\usepackage{commath}
\usepackage{bbm}
\usepackage{boxedminipage}
\usepackage{xparse}
\usepackage{xcolor}
\usepackage{float}
\usepackage{multirow}
\usepackage{graphicx}
\usepackage{caption}
\usepackage{subcaption}
\usepackage{ifthen}
\usepackage{algorithm}
\usepackage{algpseudocode}
\usepackage{colortbl} 
\usepackage{fancyhdr}   
  \pagestyle{fancy}
\usepackage{hyperref}
    \hypersetup{ colorlinks=true, linkcolor=blue, filecolor=magenta, urlcolor=red,}
\usepackage{fullpage}
\usepackage[utf8]{inputenc}
\usepackage{environ}
\usepackage[colorinlistoftodos]{todonotes}
\usepackage{mdframed}

\usepackage{tikz}
\usetikzlibrary{shapes}
\usetikzlibrary{positioning}
\usetikzlibrary{fit}
\usetikzlibrary{calc}
\usetikzlibrary{backgrounds}
\usetikzlibrary{patterns}
\usetikzlibrary{matrix}
\usetikzlibrary{arrows}

%% Marking

%% Theorems and Formal Statements
\newtheorem{proposition}{Proposition}
\newtheorem{lemma}{Lemma}
\newtheorem{theorem}{Theorem}
\newtheorem{corollary}{Corollary}
\newtheorem{definition}{Definition}

\newtheorem{claim}{Claim}

%% English Abbreviations

%% Crypto

\newcommand{\polylog}{\ensuremath{\mathrm{polylog}}\xspace}
\newcommand{\negl}{\ensuremath{\mathsf{negl}}\xspace}

%% Math Abbreviations

\newcommand{\tuple}[1]{\ensuremath{\left(#1\right)}}
\newcommand{\tp}{\tuple}
\newcommand{\eps}{\ensuremath{\varepsilon}}

\newcommand{\ceil}[1]{\ensuremath{\left\lceil{#1}\right\rceil}\xspace}
\newcommand{\floor}[1]{\ensuremath{\left\lfloor{#1}\right\rfloor}\xspace}

\newcommand{\itn}[1]{^{(#1)}}
\def\*#1{\mathbf{#1}}
\def\+#1{\mathcal{#1}}

%% Problem Box
\NewEnviron{problem}[1]{%
	\begin{center}\fbox{\parbox{6in}{%
				{\centering\scshape #1\par}%
				\parskip=1ex
				\everypar{\hangindent=1em}%
				\BODY
}}\end{center}}

%% Random Variables

%% Cal Alphabets

%% Bold Alphabets

%% Coding Theory

%    \newcommand{\corrcodeword}{\ensuremath{y'}\xspace}

    %% encoder-decoder constructions
    \newcommand{\Enc}{\ensuremath{\mathsf{Enc}}\xspace}
    \newcommand{\Dec}{\ensuremath{\mathsf{Dec}}\xspace}

    \newcommand{\ED}{\ensuremath{\mathsf{ED}}\xspace}%edit distance

%% Convenience
    
         % 'The i-th entry it a list...' --> i\th
         % 'The 1-st entry it a list...' --> 1\st
         % 'The 2-nd entry it a list...' --> 2\nd
         % 'The 3-rd entry it a list...' --> 3\rd
    
    \newcommand{\ignore}[1]{{}}

\newcommand{\Goody}{\ensuremath{\mathsf{Good}}}

 % parallel versions of def above

 % pebbling complexity
 % set of legal pebblings

%% pROM Model

\def\final{0} % set this to 1 to get a comment-free version
\ifnum\final=0  %namely if we allow comments in the output
\newcommand{\authnote}[3]{\textcolor{#2}{{\sf (#1's Note: {\sl{#3}})}}}
\newcommand{\jeremiah}{\authnote{Jeremiah}{blue}}
\newcommand{\minshen}{\authnote{Minshen}{purple}}
\newcommand{\xnote}{\authnote{Xin}{magenta}}
\newcommand{\enote}{\authnote{Elena}{green}}
\newcommand{\knote}{\authnote{Kuan}{cyan}}
\newcommand{\yu}{\authnote{Yu}{cyan}}
%Add yourself here
\else % in this case [final=1] we don't want any comments to show
\newcommand{\authnote}{}
\newcommand{\jeremiah}[1]{}
\newcommand{\minshen}[1]{}
\newcommand{\xnote}[1]{}
\newcommand{\enote}[1]{}
\newcommand{\knote}[1]{}
\newcommand{\yu}[1]{}
\fi

\newif\ifnotes\notestrue %set this to true if notes are visible and to false (next line) if they should be hidden
% \newif\ifnotes\notesfalse

\begin{document}

\title{Exponential Lower Bounds for Locally Decodable and Correctable Codes for Insertions and Deletions}

\author[1]{Jeremiah Blocki\thanks{Supported by NSF CAREER Award CNS-2047272 and NSF Awards CCF-1910659 and CNS-1931443,}}
\author[2,3]{Kuan Cheng\thanks{Supported by the National Natural Science Foundation of China under Grant 62472008 and CCF-Huawei Populus Grove Fund CCF-HuaweiLK2025005.}}
\author[4]{Elena Grigorescu\thanks{Supported by NSF CCF-1910659 and NSF CCF-1910411 while at Purdue University.}}
\author[5]{Xin Li \thanks{Supported by NSF CAREER Award CCF-1845349 and NSF Award CCF-2127575.}}
\author[5]{Yu Zheng \thanks{Supported by NSF CAREER Award CCF-1845349.}}
\author[1]{Minshen Zhu}
\affil[1]{Department of Computer Science, Purdue University}
\affil[2]{Center on Frontiers of Computing Studies, Peking University}
\affil[3]{Advanced Institute of Information Technology, Peking University}
\affil[4]{Cheriton School of Computer Science, University of Waterloo}
\affil[5]{Department of Computer Science, Johns Hopkins University}
\affil[ ]{\textit {\{jblocki, zhu628\}@purdue.edu}}
\affil[ ]{\textit {ckkcdh@pku.edu.cn} }
\affil[ ]{\textit {elena-g@uwaterloo.ca}}
\affil[ ]{\textit {\{lixints, yuzheng\}@cs.jhu.edu} }

\date{}

\maketitle
\thispagestyle{empty}
\vspace{-0.5in}
\begin{abstract}
 Locally Decodable Codes (LDCs) are error-correcting codes for which individual message symbols can be quickly recovered despite errors in the codeword. LDCs for Hamming errors have been studied extensively in the past few decades, where a major goal is to understand the amount of redundancy that is necessary and sufficient to decode from large amounts of error, with small query complexity.
 %Despite exciting progress, we still don't have satisfactory answers in several important parameter regimes. For example, in the case of $3$-query LDCs, the gap between existing  constructions and lower bounds is superpolynomial in the message length.
 
Motivated by new progress in DNA-storage technologies (Banal et al., {\em Nature Materials, 2021}), in this work we study LDCs for {\em insertion} and {\em deletion} errors, called {\em Insdel LDCs}. Their study was initiated by Ostrovsky and Paskin-Cherniavsky ({\em Information Theoretic Security, 2015}), who gave a reduction from Hamming LDCs to Insdel LDCs with a small blowup in the code parameters. On the other hand, the only known lower bounds for Insdel LDCs come from those for Hamming LDCs, thus there is no separation between them. Here we prove new, strong lower bounds for the existence of Insdel LDCs. In particular, we show that $2$-query {\em linear} Insdel LDCs do not exist, and give an exponential lower bound for the length of all $q$-query Insdel LDCs with constant $q$. For $q \ge 3$ our bounds are exponential in the existing lower bounds for Hamming LDCs. Furthermore, our exponential lower bounds continue to hold for adaptive decoders, and even in private-key settings where the encoder and decoder share secret randomness. This exhibits a strict separation between Hamming LDCs and Insdel LDCs.
 
 Our strong lower bounds also hold for the related notion of Insdel LCCs (except in the private-key setting), due to an  analogue to the Insdel notions of a reduction from Hamming LCCs to LDCs.
 
 Our techniques are based on a delicate design and analysis of hard distributions of insertion and deletion errors, which depart significantly from typical techniques used in analyzing Hamming LDCs.

\end{abstract}

%\clearpage
\pagenumbering{arabic} 
\section{Introduction}
Error correcting codes are fundamental mathematical objects in both theory and practice, whose study dates back to the pioneering work of Shannon and Hamming in the 1950's. While  the study of classical codes focuses on unique decoding from {\em Hamming} errors, many exciting variants have emerged ever since, such as list-decoding, which can go beyond the half distance barrier, and {\em local decoding},  which can decode any message symbol by querying only a few codeword symbols. These variants have proved to be closely connected to diverse areas in computer science. 

Similarly, another line of work studies {\em synchronization} errors, namely {\em insertions} and {\em deletions} ({\em insdels}, for short), which are strictly more general than Hamming errors and happen frequently in various applications such as text/speech processing, media access, and communication systems. The study of codes for such errors ({\em insdel codes}, for short) also has a long history that goes back to the work of Levenstein \cite{Levenshtein_SPD66} in the 1960's.

This paper focuses on locally decodable codes correcting insertions and deletions, which we call {\em Insdel LDCs}. We prove the first non-trivial lower bounds for such codes, which in turn provide a strong separation between Hamming LDCs and Insdel LDCs. Furthermore, these results imply similar strong bounds for the related notion of locally correctable codes correcting insertions and deletions, which we call {\em Insdel LCCs}.

More formally, Locally Decodable Codes (LDCs) are error-correcting codes $C: \Sigma^n \rightarrow \Sigma^m$ that allow very fast recovery of individual symbols of a {\em message} $x\in \Sigma^n$, even when worst-case errors are introduced in the encoded message, called {\em codeword} $C(x)$. Similarly, Locally Correctable Codes (LCCs) are error-correcting codes $C: \Sigma^n \rightarrow \Sigma^m$ that allow very fast recovery of individual symbols of the codeword $C(x)\in \Sigma^m$, even when worst-case errors are introduced. In what follows, for ease of presentation, we will discuss our results and related work by focusing on the notion of LDCs, and we will return to the implications to Insdel LCCs in Section \ref{sec:intro-LCCs}. We remark that the previous lower bounds for Hamming LCCs are asymptotically the same as for LCCs due to a folklore reduction between the two notions (e.g. formalized in \cite{KaufmanV10,bhattacharyya2016lower}).

The important {\em parameters} of LDCs are their {\em  rate}, defined as the ratio between the message length $n$ and the codeword length $m$, measuring the amount of redundancy in the encoding; their {\em relative minimum distance}, defined as the minimum normalized Hamming distance between any pair of codewords, a parameter relevant to the fraction of correctable errors;  and their {\em locality} or {\em query complexity}, defined as the number of queries a decoder makes to a received word  $y\in \Sigma^m$ in order to decode the symbol at location $i\in [n]$ of the message, namely $x_i$.

Since they were introduced in \cite{KatzT00, SudanTV99}, LDCs have found many applications in private information retrieval, probabilistically checkable proofs, self-correction,  fault-tolerant circuits,  hardness amplification, and data structures (e.g., \cite{BabaiFLS91,LundFKN92,BlumLR93,BlumK95,ChorKGS98,ChenGW13,AndoniLRW17} and surveys \cite{Tre04-survey,Gasarch04}), and  
the tradeoffs between the achievable parameters of Hamming LDCs has been studied extensively \cite{KerenidisW04, WehnerW05, GoldreichKST06, Woodruff07, Yekhanin08, Yekhanin12, DvirGY11, Efremenko12,GalM12, BhattacharyyaDS16, BhattacharyyaG17,DvirSW17,KoppartyMRS17, BhattacharyyaCG20} (see also surveys by Yekhanin \cite{Yekhanin12} and by Kopparty and Saraf \cite{KoppartyS16}). This sequence of results has brought up exciting progress regarding the necessary and sufficient rate for codes with small query complexity that can withstand a constant fraction of errors.  Nevertheless, many important parameter regimes leave wide gaps in our current understanding of LDCs. For example, even for $3$-query Hamming LDCs the gap between constructions and lower bounds is superpolynomial \cite{Yekhanin08,DvirGY11,Efremenko12, KatzT00,Woodruff07,Woodruff12}. (Note:  \cite{GalM12} established an exponential lower bound on the length of $3$-query LDCs for some parameter regimes, but it does not rule out the possibility of a $3$-query LDC with polynomial length in natural parameter ranges.) 

More specifically, for $2$-query Hamming LDCs we have matching upper and lower bounds of $m=2^{\Theta(n)}$, where the upper bound is achieved by the simple Hadamard code while the lower bound is established in \cite{KerenidisW04, Ben-AroyaRW08}. In the constant-query regime where the decoder makes $2^t$ many queries, for some $t>1$, the best known constructions of Hamming LDCs are based on matching-vector codes, and give codes that map $n$ symbols into $m=\exp( \exp((\log n)^{1/t} (\log\log n)^{1-1/t}))$  
symbols~\cite{Yekhanin08,DvirGY11,Efremenko12},
while the best general lower bound for $q$-query LDC is $\Omega(n^{\frac{q+1}{q-1}})/\log n$ when $q \ge 3$ \cite{Woodruff07}. 
In the $\polylog(n)$-query regime, Reed-Muller codes are examples of $\log^c n$-query Hamming LDCs of block length $n^{1+\frac{1}{c-1}+o(1)}$ for some $c>0$ (e.g., see \cite{Yekhanin12}). Finally, there exist sub-polynomial (but super logarithmic)-query Hamming LDCs with constant rate \cite{KoppartyMRS17}. These latter constructions  improve upon the previous constant-rate codes in the  $n^{\epsilon}$-query regime achieved by Reed-Muller codes, and upon the more efficient constructions of  \cite{KoppartySY14}. In a different model, if we assume that the encoder and decoder have shared secret randomness \cite{Lipton94}, then it becomes much easier to construct LDCs. For example, \cite{OPS07} constructs private-key Hamming LDCs with constant rate (i.e., $m= \Theta(n)$) and query complexity $\polylog(n),$ and a simple modification yields a private-key Hamming LDC with rate $m = \tilde{\Theta}(n)$ and query complexity $1$ --- see details in Appendix \ref{sec:note}.

 Regarding insdel codes, following the work of Levenstein \cite{Levenshtein_SPD66}, the progress has historically been slow, due to the fact that synchronization errors often result in the loss of index information. Indeed, constructing codes for insdel errors is strictly more challenging than for Hamming errors. However the interest in these codes has been rekindled lately, leading to a wave of new results  \cite{SchZuc99, Kiwi_expectedlength, guruswami2017deletion,HaeuplerS17, GuruswamiL18,HaeuplerSS18,HaeuplerS18,BrakensiekGZ18, ChengJLW18, ChengHLSW19, ChengJ0W19, GuruswamiL19,HaeuplerRS19,Haeupler19,  LiuTX20,GuruswamiHS20, ChengGHL21, ChengL21} (See also the excellent surveys of \cite{Sloane2002OnSC,Mercier2010ASO,Mitzenmachen-survey, haeupler2021synchronization}) with almost optimal parameters in various settings, and the variant of ``list-decodable'' insdel codes, that can withstand a larger fraction of errors while outputting a small list of potential codewords  \cite{HaeuplerSS18,GuruswamiHS20,LiuTX20}. However, none of these works addresses insdel LDCs, which we believe are natural objects in the study of insdel codes, since such codes are often used in applications involving large data sets.

Insdel LDCs were first introduced in \cite{Ostrovsky-InsdelLDC-Compiler} and further studied in \cite{BlockBGKZ20,block2021private,ChengLZ20}.  In \cite{Ostrovsky-InsdelLDC-Compiler,BlockBGKZ20} the authors give Hamming to Insdel reductions which transform any Hamming LDC into an Insdel LDC. These reductions decrease the rate by a constant multiplicative factor and increase the locality by a  $\log^{c'}(m)$ multiplicative factor for a fixed constant $c'>1$. Applying the compilers to the above-mentioned constructions of Reed-Muller codes gives $(\log n)^{c+c'}$-query Insdel LDCs of length $m=n^{1+\frac{1}{c-1}+o(1)}$, for any $c>1$. Also, applying the compilers to the LDCs in \cite{KoppartyMRS17} yields Insdel LDCs of constant rate and $\exp(\tilde{O}(\sqrt{\log n}))$-query complexity. 
  
Unfortunately, these compilers do not imply constant-query Insdel LDCs, and in fact, even after this work, we do not know if constant-query  Insdel LDCs exist in general. 
In the private-key setting, applying the compilers to~\cite{OPS07} yields a private-key Insdel LDC with constant rate and locality $\polylog(n)$~\cite{ChengLZ20, block2021private}. 
  
 We now formally define the notion of Insdel LDCs. See Appendix~\ref{sec:note} for  further discussion.  
 
\begin{restatable}{definition}{DefInsdelLDC}[Insdel Locally Decodable Codes (Insdel LDCs)] \label{def:InsdelLDC}
Fix an integer $q$ and constants $\delta \in [0,1]$, $\eps \in (0, \frac{1}{2}]$. We say $C \colon \set{0,1}^n \rightarrow \Sigma^m$ is a \emph{$(q,\delta,\eps)$-locally decodable insdel code} if there exists a probabilistic algorithm $\Dec$ such that:
\begin{itemize}
\item For every $x \in \set{0,1}^n$ and $y \in \Sigma^{m'}$ such that $\ED\tp{C(x), y} \le \delta \cdot 2m$, and for every $i \in [n]$, we have
\begin{align*}
\Pr\left[ \Dec(y, m', i) = x_i \right] \ge \frac{1}{2} + \eps,
\end{align*}
where the probability is taken over the randomness of $\Dec$, and $\ED\tp{C(x),y}$ denotes the minimum number of insertions/deletions necessary to transform $C(x)$ into $y$.
\item In every invocation, $\Dec$ reads at most $q$ symbols of $y$. We say that $\Dec$ is \emph{non-adaptive} if the distribution of queries of $\Dec(y, m', i)$ is independent of $y$.
\end{itemize}
\end{restatable}

Note that in this definition we allow the decoder to have as an input $m'$, the length of the string $y$. This only makes our lower bounds stronger. We can also extend the definition to private-key LDC, where the encoder and decoder share secret randomness, and we relax the requirement that $\Pr\left[ \Dec(y, m', i) = x_i \right] \ge \frac{1}{2} + \eps$ for all  $y$ s.t. $\ED\tp{C(x), y} \le \delta \cdot 2m$. Instead, we require that any attacker who does not have the secret randomness (private-key) cannot produce $y$ such that  $\ED\tp{C(x), y} \le \delta \cdot 2m$ and $\Pr\left[ \Dec(y, m', i) = x_i \right] < \frac{1}{2} + \eps$ except with negligible probability --- see Appendix \ref{subsec:privkeyinsdel}.

In this work we focus on {\em binary} Insdel LDCs and  give the first non-trivial lower bounds for such codes. In most cases, such as constant-query Insdel LDCs, our bounds are exponential in the message length. We note that prior to our work, the only known lower bounds for Insdel LDCs come from the lower bounds for Hamming LDCs (since Hamming erros can be implemented by insdel errors), and thus there is no separation of Insdel LDC and Hamming LDC. In particular, these bounds don't even preclude the possibility of a $3$-query Insdel LDC with $m=\Theta(n^2)$. We also note that we mainly prove lower bounds for Insdel LDCs with non-adaptive decoders. However, by using a reduction suggested in~\cite{KatzT00} we obtain almost the same lower bounds for Insdel LDCs with adaptive decoders.
 
Our results provide a strong separation between Insdel LDCs and Hamming LDCs in several contexts. First, many of our exponential lower bounds continue to apply in the setting of private-key LDCs, while in such settings it is easy to construct private-key Hamming LDCs with $m=\tilde{O}(n)$ and locality $1$. Second, our exponential lower bounds hold for any constant $q$, while even for $q = 3$ we have constructions of Hamming LDCs with sub-exponential length. Finally, for $q=2$ we rule out the possibility of linear Insdel LDCs, while the Hadamard code is a simple $2$-query Hamming LDC. This separation is in sharp contrast to the situation of unique decoding with codes for Hamming errors vs. codes for insdel errors, where they have almost the same parameter tradeoffs. 
 
 \paragraph{Motivation of Insdel LDCs in DNA storage} DNA storage \cite{Olgica17} is a storage medium that harnesses the biology of DNA sequences, to  store and transmit not only genetic information, but also any arbitrary digital data, despite the presence of insertion and deletion errors. It has the potential of  becoming the storage medium of the future, due to its superior scaling properties, provided new techniques for  random data access are developed. Recent progress towards achieving effective and reliable DNA  random access technology  is motivated by the fact that a {\em ``crucial aspect of data storage systems is the ability to efficiently retrieve specific files or arbitrary subsets of files.''} \cite{Banaletal-nature2021}. This is also precisely the real-world goal formalized by the notion of Insdel LDCs, which motivates a systematic theoretical study of such codes and of their limitations.

\subsection{Our results}

\subsubsection{Lower bounds for 2-query Insdel LDCs}

We first present our result for {\em linear} codes. Linear codes are defined over a finite field $\Sigma=\F$, and the codewords form a linear subspace in $\F^m$.  Similarly, the codewords of an {\em affine} code form an affine subspaces in $\F^m$. Lower bounds for the length of $2$-query linear Hamming LDCs were first studied in \cite{GoldreichKST06}, where the authors proved an exponential bound. This is matched by the Hadamard code.

In \cite{Woodruff12} Woodruff give a $m=\Omega(n^2)$ bound for $3$-query linear codes, which is still the best known for any linear code with $q \ge 3$. 

Furthermore, the best upper bounds of \cite{Yekhanin08,DvirGY11,Efremenko12, KoppartyMRS17}, and, to the best of our knowledge,  all known constructions of Hamming LDCs are achieved by linear codes.  As further motivation for studying linear LDCs, lower bounds for linear $2$-query (Hamming) LDCs are useful in polynomial identity testing \cite{DvirS07}, and they are known to imply lower bounds on matrix rigidity \cite{Dvir10}. In addition, a recent work \cite{ChengGHL21} has initiated a systematic study on linear insdel codes.

We first show that $2$-query linear insdel LDCs do not exist.

\begin{restatable}{theorem}{thmLinearLDC}
\label{thm:linearLDC}
For any  $(2, \delta, \eps)$ linear or affine insdel LDC $C: \{0, 1\}^n \to \{0, 1\}^m$, we have $n=O_{\delta, \eps}(1)$.
\end{restatable}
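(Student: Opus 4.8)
The plan is to combine the classical structure theory of $2$-query \emph{linear} LDCs over the Hamming metric with a family of ``alignment-shifting'' insertion/deletion attacks that a $2$-query decoder cannot withstand, and then to extract a combinatorial constraint that is satisfiable only when $n$ is bounded.

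First, two easy reductions. For an affine code $C(x)=C_0(x)+b$ with $C_0$ linear and $b$ fixed, the decoder may subtract off the (known) bits of $b$ at its query positions, so it suffices to treat $C$ as linear; write $C(x)=Gx$ with rows $g_1,\dots,g_m\in\F^n$. Since $s$ symbol substitutions are realized by $2s$ insdel operations, a $(2,\delta,\eps)$ insdel LDC is in particular a $(2,\delta,\eps)$ Hamming LDC, so the Goldreich--Karloff--Sudan--Trevisan analysis applies: for each $i$ the decoder's success on error-free codewords forces, for $\Omega_{\delta,\eps}(m)$ disjoint pairs $\{j,k\}$ (used by its query distribution), the identity $g_j\oplus g_k=e_i$ (or $g_j=e_i$ for singleton queries). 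By the Hadamard code this Hamming structure is consistent with $m=2^{\Theta(n)}$, so it cannot by itself bound $n$; the extra insdel power must be used.

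Second, the hard attacks. For a shift parameter $t\in\{0,1,\dots,2\delta m\}$, consider deleting the length-$t$ prefix and the length-$(2\delta m-t)$ suffix of $C(x)$, leaving the window $w_t\defeq C(x)_{t+1}C(x)_{t+2}\cdots C(x)_{t+m_0}$ with $m_0=(1-2\delta)m$; this costs exactly $2\delta m\le\delta\cdot 2m$ insdels, and crucially the decoder is always handed the \emph{same} length $m_0$, so one fixed distribution $\mathcal D_{i,m_0}$ must succeed on $w_t$ for \emph{every} $t$. On $w_t$ the decoder reads $C(x)_{q_1+t}$ and $C(x)_{q_2+t}$ for its random, $t$-independent query ranks $q_1,q_2$; averaging the success requirement over a uniformly random message $x$ shows that, whatever Boolean post-processing it uses, with probability $\ge 2\eps$ over its coins the vector $e_i$ lies in $\{g_{q_1+t},\,g_{q_2+t},\,g_{q_1+t}\oplus g_{q_2+t}\}$. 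Averaging this in turn over $t\in\{0,\dots,2\delta m\}$ yields a single outcome, i.e.\ a gap $d_i$ and a left endpoint $a_i\in[m_0]$, such that for $\Omega_{\delta,\eps}(m)$ values of $t$ one of $g_{a_i+t}=e_i$, $g_{a_i+t+d_i}=e_i$, $g_{a_i+t}\oplus g_{a_i+t+d_i}=e_i$ holds; a further pigeonhole over these three cases gives, for every $i$, a set $J_i$ of $\Omega_{\delta,\eps}(m)$ indices contained in an interval of length $2\delta m$ with either $g_j=e_i$ for all $j\in J_i$ (``singleton type'') or $g_j\oplus g_{j+d_i}=e_i$ for all $j\in J_i$ (``XOR type'').

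Third, the conclusion. For singleton-type directions the sets $J_i$ lie inside the level sets $\{j:g_j=e_i\}$, which are pairwise disjoint for distinct $e_i$; since they live in $[m]$ and each has size $\Omega_{\delta,\eps}(m)$, only $O_{\delta,\eps}(1)$ of them can occur. The XOR-type directions are the crux. Here the ordered pairs $(j,j+d_i)$ with $j\in J_i$ form edge sets that are pairwise disjoint across $i$ (an edge determines both its gap and its $\F^n$-difference, hence $i$) whose left endpoints all lie in a short interval, and a common-point averaging argument produces one index $j^\star$ lying in $\Omega_{\delta,\eps}(n)$ of the $J_i$'s, hence $\Omega_{\delta,\eps}(n)$ indices $j^\star+d_i$ carrying the pairwise distinct values $g_{j^\star}\oplus e_i$. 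The main obstacle is to force these witnessing indices into a region of bounded size, so that ``too many distinct $g$-values packed into too few positions'' yields the bound; this is precisely where one must augment the prefix/suffix attacks with additional carefully designed deletions and insertions that contract or relocate the query gaps $d_i$ while keeping the corrupted length (and hence the decoder's behavior) fixed. Once the witnessing indices are confined, counting distinct $g$-values gives $n=O_{\delta,\eps}(1)$.
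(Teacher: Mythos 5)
There is a genuine gap, and it sits exactly at the step you flag but do not resolve. Your attack family consists of pure prefix/suffix deletions, i.e.\ shifts: the decoder reading positions $q_1,q_2$ of the window sees $C(x)_{q_1+t}$ and $C(x)_{q_2+t}$, so the \emph{gap} between the two queried coordinates is never altered. Such attacks provably cannot yield the theorem: the paper exhibits (Appendix C) a variant of the Hadamard code that remains a $(2,\delta,1/2-\delta-2^{-t+1})$ locally decodable code against precisely this class of errors, with $n$ growing like $\Theta(\log m)$. Concretely, that code satisfies your derived ``XOR-type'' structure for every $i$ — a $1-2^{-t}$ fraction of positions $j$ obey $g_j\oplus g_{j+d_i}=e_i$ with $d_i=2^{ti}$ — so the combinatorial constraint you extract from the shift attacks is consistent with unbounded $n$, and no counting of ``distinct $g$-values in a short interval'' can follow from it. The sentence ``one must augment the prefix/suffix attacks with additional carefully designed deletions and insertions that contract or relocate the query gaps'' is therefore not a remaining technicality; it is the entire content of the proof you have not supplied.

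For comparison, the paper's mechanism for changing gaps is an i.i.d.\ deletion process whose rate $\beta$ is itself drawn uniformly from $[\delta/8,\delta/4]$ (on top of a random prefix deletion); the compound binomial anti-concentration (Lemma 3) flattens the induced distribution so that a query at original distance $d$ lands on any fixed pair with probability $O(1/(md))$ while staying within distance $[d,cd]$ with probability $1-\eps$ (Lemma 4). Combining this with the hitting lemma forces, for each $i$, a constant fraction of \emph{all} pairs in some geometric distance-bucket $P_{j_i}$ to be good, and the linear/affine structure (via the Fourier step you do have) upgrades ``good'' to correlation exactly $1$. Even then, the XOR case is not finished by a direct packing count: because different bits $i$ may live at very different gap scales $d_i$, the paper must order the buckets $j_1\le\cdots\le j_{n'}$, find a single index $i$ and a large set $T$ of left endpoints all tied to one right endpoint $\ell_0$ inside an interval of length $c^{j_i}$, and only then argue that the neighbor sets $V_{i'}$ for $i'\le i$ are pairwise disjoint yet confined to an interval of length $2c^{j_i}$ (using $c^{j_{i'}}\le c^{j_i}$), which gives $n=O_{\delta,\eps}(1)$. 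Your ``common-point averaging'' sketch points in this direction, but without the gap-perturbing error distribution and the bucket-ordering/confinement argument, the proof does not go through.
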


More generally, we show an exponential lower bound for general $2$-query insdel LDCs.

\begin{restatable}{theorem}{thmTwoQueryGeneralLowerBound}\label{thm:2QueryGeneralLowerBound}
    For any $(2,\delta, \eps)$ insdel LDC $C: \{0,1\}^n\rightarrow \{0,1\}^m$, we have $m=\exp(\Omega_{\delta, \eps}(n)).$
    
\end{restatable}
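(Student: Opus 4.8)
The plan is to avoid any genuinely new work for this statement: I would relax insdel robustness to ordinary Hamming robustness and then quote the known exponential lower bound for $2$-query Hamming LDCs. The key observation is that for equal-length strings $u,v\in\set{0,1}^m$ one always has $\ED\tp{u,v}\le 2\,d_H(u,v)$, since a mismatch at a coordinate is repaired by deleting that symbol and inserting the correct one. Consequently, for every $x\in\set{0,1}^n$ and every $y\in\set{0,1}^m$ with $d_H(C(x),y)\le\delta m$ we have $\ED\tp{C(x),y}\le\delta\cdot 2m$, so Definition~\ref{def:InsdelLDC} already guarantees $\Pr[\Dec(y,m,i)=x_i]\ge\frac12+\eps$. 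In other words, $y\mapsto\Dec(y,m,i)$ is a (classical, Hamming) $(2,\delta,\eps)$-LDC decoder; it is non-adaptive whenever $\Dec$ is, and it never uses the length input since $m'=m$ throughout.

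Given this, I would simply invoke the known bound. After reducing to the non-adaptive case via the reduction of Katz and Trevisan~\cite{KatzT00} (which changes $q,\delta,\eps$ only by constants), the quantum argument of Kerenidis and Wehner~\cite{KerenidisW04}, in the sharper form of Ben-Aroya, Regev and de Wolf~\cite{Ben-AroyaRW08}, gives $m\ge 2^{\Omega_{\delta,\eps}(n)}$, which is the claim. If a self-contained treatment is preferred, the same conclusion follows in two steps: first, a Katz--Trevisan-style averaging over the decoder's behaviour on uncorrupted codewords extracts, for each $i\in[n]$, a matching $M_i$ of $\Omega_{\delta,\eps}(m)$ pairwise-disjoint pairs $\{a,b\}\subseteq[m]$ such that $C(x)_a\oplus C(x)_b$ predicts $x_i$ with constant advantage over all $x$; second, these $n$ matchings are assembled into a $(\log m)$-qubit quantum random access code for $x$, and the $\Omega(n)$ lower bound on the length of a quantum random access code forces $\log m=\Omega_{\delta,\eps}(n)$.

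Along this route I do not expect any real obstacle, because once insertions and deletions are weakened to substitutions the statement is purely classical and already in the literature; the only care needed is in the adaptivity reduction and in tracking how the constants depend on $\delta$ and $\eps$. The genuine difficulty in the $q=2$ regime lies elsewhere. It appears in Theorem~\ref{thm:linearLDC}, which rules out linear codes outright --- linear Hamming $2$-query LDCs do exist, for instance the Hadamard code, so that statement cannot route through Hamming at all --- and in the private-key analogue of the present theorem, where near-linear-length private-key Hamming LDCs exist and hence the reduction above is vacuous. In both of those cases one must instead exhibit an explicit distribution of insertions and deletions defeating every $2$-query decoder, and there the main obstacle is the hallmark difficulty of synchronization errors: a handful of deletions shifts the alignment between $y$ and $C(x)$, so the error distribution (a combination of block ``shift'' attacks and local perturbations) has to be engineered so that the two symbols the decoder reads are simultaneously consistent with the codewords of many distinct messages.
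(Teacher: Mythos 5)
Your proposal is correct for the statement as literally written: since every Hamming error of weight at most $\delta m$ can be realized by at most $2\delta m$ insertions/deletions, any $(2,\delta,\eps)$ insdel LDC is in particular a $(2,\delta,\eps)$ Hamming LDC (with $m'=m$ fed to the decoder), and the Kerenidis--Wehner/Ben-Aroya--Regev--de Wolf bound then gives $m=\exp(\Omega_{\delta,\eps}(n))$; the paper itself concedes this reduction in the introduction. However, this is a genuinely different route from the paper's proof. The paper proves Theorem~\ref{thm:2QueryGeneralLowerBound} directly from its insdel machinery: it designs a deletion distribution (i.i.d.\ deletions at a rate drawn uniformly from $[\delta/8,\delta/4]$, composed with a uniformly random prefix deletion), shows via Lemma~\ref{lem:distribution} that the pair a query lands on is spread out with mass $O(1/(md))$ over pairs whose gap lies in $[d,cd]$, deduces in Claim~\ref{clm:fraction} that every message bit has a constant fraction of ``good'' pairs inside one of the $O(\log m)$ distance-scale buckets $P_j$, and then closes with the Katz--Trevisan information bound (Proposition~\ref{prop:info-theory}) that any fixed pair is good for only $O_\eps(1)$ bits, giving $n=O_{\delta,\eps}(\log m)$. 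What the paper's route buys, and what your reduction cannot deliver, is exactly what you flag at the end: the argument is purely classical, it is the warm-up for the $q\ge 3$ bounds of Theorem~\ref{thm:obliviousLBcombined}, and because the error distribution is oblivious to the code it extends verbatim to the private-key setting, where the Hamming-reduction approach is vacuous (private-key Hamming LDCs with $m=\tilde{O}(n)$ and locality $1$ exist). What your route buys is brevity and reuse of known results, at the cost of importing the quantum (or quantum-flavored) machinery and losing these strengthenings; as a proof of the bare statement it is sound, provided you note (as you do) the Katz--Trevisan adaptivity reduction costs only a constant factor in $\eps$ when $q=2$ and $\Sigma=\{0,1\}$.
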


We remark that, as previously mentioned, the lower bound for $2$-query Hamming LDCs from \cite{KerenidisW04} also holds for $2$-query Insdel LDCs. However, that proof uses sophisticated quantum arguments, and an important quest in the area has been providing non-quantum proofs for the same result. Indeed, the proof from \cite{KerenidisW04} was adapted to classical arguments by \cite{Ben-AroyaRW08}, but the arguments still retained a strong quantum-style flavor. Our arguments here do not resemble those proofs and are purely classical.
Furthermore, in contrast to the lower bounds from \cite{KerenidisW04,Ben-AroyaRW08}, our lower bounds in Theorems \ref{thm:linearLDC} and \ref{thm:2QueryGeneralLowerBound} extend to the private-key setting where the encoder and decoder share private randomness. We note that one {\em can} easily obtain private-key Hamming LDCs with $m=\tilde{O}(n)$ and locality $1$ by modifying the construction of \cite{OPS07} --- see details in Appendix \ref{subsec:privkeyinsdel}.

By contrast, for any constants $\epsilon, \delta >0$ our results rule out the possibility of $2$-query Insdel LDCs with $m=\exp(o(n))$ even in the private-key setting.

\subsubsection{Lower bounds for \texorpdfstring{$q\geq 3$}{q>=3} query Insdel LDCs}

We prove the following general bound for $q\geq 3$ queries. 

\begin{restatable}{theorem}{obliviousLB} \label{thm:obliviousLBcombined}
	For any non-adaptive $\tp{q,\delta,\eps}$ insdel LDC $C \colon \set{0,1}^n \rightarrow \set{0,1}^m$ with $q \ge 3$, we have the following bounds.
	\begin{itemize}
	\item For arbitrary adversarial channels,
	\begin{align*}
		m=\begin{cases}
			\exp\tp{\Omega_{\delta, \eps}\tp{\sqrt{n}}} \text{ for } q=3 \text{; and } \\
			\exp\tp{\Omega\tp{\frac{\delta}{\ln^2(q/\eps)} \cdot \tp{\eps^3 n}^{1/(2q-4)}}} \text{ for } q \ge 4.
		\end{cases}
	\end{align*}

	\item For the private-key setting, 
	\begin{align*}
		m = \exp\tp{\Omega\tp{\frac{\delta}{\ln^2(q/\eps)} \cdot \tp{\eps^3 n}^{1/(2q-3)}}}.
	\end{align*}
	\end{itemize}

\end{restatable}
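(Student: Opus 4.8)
## Proof Proposal for Theorem~\ref{thm:obliviousLBcombined}

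\textbf{Overall strategy.} The plan is to design a hard distribution over insdel-corrupted codewords such that a non-adaptive $q$-query decoder that is too ``local'' cannot possibly recover message bits. The key insight exploited throughout is that insertions and deletions destroy \emph{index information}: if the adversary deletes a carefully chosen block of the codeword and pads with a fixed filler string, then a query to position $j$ of the received word $y$ may actually be reading position $j - (\text{shift})$ of the original codeword $C(x)$, and the decoder does not know the shift. Concretely, for each candidate ``shift amount'' $s$ in some range, consider the channel that deletes a length-$s$ prefix-block of $C(x)$ and appends $s$ copies of a fixed symbol (say $0$) at the end; this costs $2s \le \delta \cdot 2m$ insertions/deletions as long as $s \le \delta m$. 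The received words for different shifts are all within the insdel ball, yet a decoder querying only $q$ positions sees, with high probability over a random shift, a set of codeword symbols whose \emph{locations in $C(x)$} are essentially uniformly spread and uncorrelated with $i$.

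\textbf{Main steps in order.}
\begin{enumerate}
\item \emph{Reduction to a smoothness/locality statement.} First I would argue (via an averaging / Yao-type argument over a uniform random message $x$ and a uniform random target index $i \in [n]$) that a good decoder induces, for each $i$, a distribution on $q$-tuples of query positions such that reading those $q$ symbols of $y$ determines $x_i$ with advantage $\eps$.
\item \emph{Combining shifts into a ``confusion'' argument.} For a random shift $s$ drawn from $[0, \delta m]$, the received word $y_s$ is a deterministic function of $C(x)$. A query to location $j$ of $y_s$ reads $C(x)_{j+s}$ (if $j + s \le m$) or a filler symbol otherwise. Thus the decoder's view on input $(y_s, m, i)$, after fixing its internal randomness, is a length-$q$ substring-sample of $C(x)$ at locations $\{j_1 + s, \dots, j_q + s\}$. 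Averaging over $s$, each individual queried location is nearly uniform over a window of size $\Theta(\delta m)$ in $[m]$; I would then need that $q$ such locations jointly reveal almost nothing about $x_i$ unless $m$ is huge.
\item \emph{Bounding information from $q$ codeword symbols.} The heart is a counting/entropy bound: over uniform $x$, a set of $q$ codeword locations can ``carry'' information about at most roughly some small number of message bits, and for a \emph{random} tuple of locations it is vanishingly unlikely to carry information about the specific bit $x_i$ unless the code is extremely redundant. I expect this to be where the exponent $1/(2q-4)$ (resp. $1/(2q-3)$) emerges: an induction on $q$ that, at each step, ``fixes'' one query location via the shift trick and recurses on a $(q-1)$-query decoder for a punctured/shorter code, losing a polynomial factor of $n$ and a $\ln(q/\eps)$-type factor at each level. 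The base case $q=2$ (or the degenerate $q=1$) would be handled by Theorem~\ref{thm:2QueryGeneralLowerBound} (or directly).
\item \emph{The $q=3$ special case.} For $q = 3$ the generic $1/(2q-4) = 1/2$ exponent is attainable directly, giving $m = \exp(\Omega_{\delta,\eps}(\sqrt n))$, which I would state separately since the generic bound's constant is cleaner here.
\item \emph{The private-key setting.} Here the decoder and encoder share secret randomness, so the adversary must choose its corruption \emph{before} seeing the randomness. The same shift-based family of channels works because it is \emph{oblivious} --- the deletion/filler pattern does not depend on $C(x)$ at all, only on $s$ and $m$ --- so it is a valid attack even against a shared-randomness decoder. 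The only change is a slightly weaker union bound over the extra ``unknown randomness'' coordinate, which costs one more factor of $n$ in the exponent, yielding the exponent $1/(2q-3)$ instead of $1/(2q-4)$.
\end{enumerate}

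\textbf{Main obstacle.} The hard part will be Step 3: quantitatively controlling how much a \emph{random} $q$-subset of codeword coordinates can reveal about a target message bit, and doing so in a way that composes cleanly under the shift-based puncturing so that the loss per query level is only polynomial in $n$. Naively, $q$ symbols of a length-$m$ code over a small alphabet could in principle depend on $x_i$; the point is that for this to help \emph{for a random shift} one needs the decoder's query structure to be robust to $\Theta(\delta m)$ possible misalignments, and I would need a careful combinatorial argument (likely isolating one ``pivot'' query, conditioning on its value, and recursing) showing this forces $m^{2q-4} \gtrsim \mathrm{poly}(\eps) \cdot n$, up to the $\ln^2(q/\eps)$ factors. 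Getting the recursion's parameters to line up — in particular tracking how $\delta$, $\eps$, and $q$ degrade at each level without blowing up — is the delicate technical core, and is presumably where the ``delicate design and analysis of hard distributions'' advertised in the abstract is spent.
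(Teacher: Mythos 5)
There is a genuine gap, and it is at the heart of your plan: the hard distribution you propose (delete a length-$s$ prefix block and pad with filler, for a random shift $s$) cannot yield any lower bound, because a pure shift preserves the \emph{relative distances} between the decoder's query positions. A decoder whose queries have fixed gaps is immune to it; indeed the paper shows (Appendix on Hadamard codes and ``type 1'' errors) that a simple variant of the Hadamard code is a $2$-query LDC correcting exactly this class of corruptions. The actual proof never uses shifts alone: its key object is a deletion distribution that \emph{randomizes the gaps} $d_1,\dots,d_{q-1}$ of a query --- i.i.d.\ deletions at a rate chosen uniformly from an interval (whose compound/negative-binomial structure ``flattens'' the induced gap distribution, via an anti-concentration lemma for $B(n,\*U[s,t])$), plus $O(\log m)$ layers of block-wise independent deletions at geometrically increasing block sizes to decorrelate the $q-1$ gaps from one another. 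With that distribution in hand, the argument is not a recursion on $q$ at all: one defines the sets $\Goody_i$ of $q$-tuples from which $x_i$ is predictable with advantage $\eps/4$, proves a hitting lemma (under any admissible deletion pattern the decoder's shifted query must land in $\Goody_i$ with probability $\Omega(\eps)$), and combines the anti-concentration bound with the Katz--Trevisan fact that each tuple is good for only $O(q/\eps^2)$ bits, counting over the $O(\log m)^{q-1}$ dyadic subcubes of gap-vectors. This yields $n \le \polylog(m)^{2q-3}$, i.e.\ the exponential bound $m \ge \exp(\Omega(n^{1/(2q-3)}))$; note your own summary target ``$m^{2q-4}\gtrsim \mathrm{poly}(\eps)\cdot n$'' is only a polynomial bound in $m$ and is exponentially weaker than the theorem, which requires $\log^{2q-4} m \gtrsim \eps^3 n$.

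Your treatment of the two bullet points is also reversed relative to what actually works. In the paper, the \emph{oblivious} distribution gives the weaker exponent $1/(2q-3)$, and this is what transfers to the private-key setting; the improved exponent $1/(2q-4)$ for arbitrary adversarial channels is obtained only by making the error distribution depend on the decoder --- adding extra deletion layers whose rates are proportional to the probability $p_{\tau,i}$ that $\Dec(\cdot,m,i)$ uses a query whose second gap lies in the dyadic range $[2^{\tau-1},2^{\tau})$, which trades a $\log m$ factor for $1/p_{\tau,i}$. Your claim that the same oblivious shift channel gives both bounds, with the private-key case merely losing one factor to ``a union bound over the unknown randomness,'' cannot be right even in outline: if the stronger bound followed from an oblivious channel it would automatically hold in the private-key setting as well, collapsing the distinction the theorem draws. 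To repair the proposal you would need (i) a deletion process that provably flattens and decorrelates the induced joint distribution of the query gaps, and (ii) the decoder-dependent weighting for the first bullet; neither is supplied by the shift-plus-filler construction or the sketched induction.
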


As a comparison, for general Hamming LDCs the best known lower bounds for $q \ge 3$ in~\cite{Woodruff07} give $m=\Omega(n^2/\log n)$ for $q=3$, and $m=\Omega({n}^{1+1/\ceil{(q-1)/2}}/\log n)$ for $q>3$. 
Thus, in the constant-query regime, the bounds from Theorem \ref{thm:obliviousLBcombined} are essentially exponential in the existing bounds for Hamming LDCs. Moreover, these bounds also give a separation between constant-query Hamming LDCs, which can have length $\exp(n^{o(1)})$, and constant-query insdel LDCs. 

\paragraph{Lower bounds for adaptive decoders} It is well-known \cite{KatzT00} that a $(q, \delta, \eps)$ adaptive Hamming LDC can be converted into a non-adaptive $(\frac{|\Sigma|^q -1}{|\Sigma|-1}, \delta,\eps)$ Hamming LDC, and also into a non-adaptive 
$(q, \delta, \eps/|\Sigma|^{q-1})$ Hamming LDC, and hence lower bounds for non-adaptive decoders imply lower bounds for adaptive decoders, with the respective loss in parameters.  
It is easy to verify that the same reduction works for Insdel LDCs.\footnote{For example, our non-adaptive decoder can pick $r_1,\ldots, r_{q-1} \in \Sigma$ randomly and simulate the adaptive $(q,\delta, \epsilon)$-decoder responding to the first $q-1$ queries with $r_1,\ldots, r_{q-1}$. This allows the non-adaptive decoder to extract a set $(j_1,\ldots, j_q)$ of queries representing the set of queries that the adaptive decoder would have asked given the first $q-1$ responses. The queries $(j_1,\ldots, j_q)$ can then  be asked non-adaptively to obtain $y[j_1],\ldots, y[j_q]$. With probability $|\Sigma|^{-q+1}$ we will have $y[j_i] = r_i$ for each $i \leq q-1$ and we can finish simulating the adaptive decoder to obtain a prediction $x_i$ which will be correct with probability at least $\frac{1}{2} + \eps$. Otherwise, our non-adaptive decoder randomly guesses the output bit $x_i$. Thus, the non-adaptive decoder is successful with probability at least $\frac{1}{2} + \epsilon |\Sigma|^{-q+1}$. } In particular our lower bounds imply the respective lower bounds for adaptive decoders.

\begin{corollary} \label{cor:adaptive}
For any (possibly adaptive) $\tp{q,\delta,\eps}$ insdel LDC $C \colon \set{0,1}^n \rightarrow \set{0,1}^m$ with $q \ge 3$, we have the following bounds.
	\begin{itemize}
	\item For arbitrary adversarial channels,
	\[m=
	\begin{cases}
	 \exp(\Omega_{\delta, \eps}(\sqrt{n})) \text{ for } q=3 \text{; and } \\
	\exp \tp{ \Omega\tp{\frac{\delta}{(q+\ln(q/\eps))^2} \cdot \tp{\eps^3 n}^{1/(2q-4)}}} \text{ for } q \ge 4.
		\end{cases}
	\]

	    \item For the private-key setting, 
	    \begin{align*}
	m = \exp\tp{\Omega\tp{\frac{\delta}{(q+\ln(q/\eps))^2} \cdot \tp{\eps^3 n}^{1/(2q-3)}}}.
	\end{align*}
	\end{itemize}
\end{corollary}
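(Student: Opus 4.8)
The statement to prove is Corollary~\ref{cor:adaptive}, which upgrades the non-adaptive lower bounds of Theorem~\ref{thm:obliviousLBcombined} to adaptive decoders. The key observation already sketched in the footnote is the Katz--Trevisan style reduction from adaptive to non-adaptive decoders, which in the binary alphabet case ($|\Sigma|=2$) works as follows.

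\textbf{Plan.} The plan is to invoke the reduction from adaptive to non-adaptive Insdel LDCs and then feed the resulting parameters into Theorem~\ref{thm:obliviousLBcombined}. First I would formalize the claim that any (possibly adaptive) $(q,\delta,\eps)$ Insdel LDC $C\colon\{0,1\}^n\to\{0,1\}^m$ yields a \emph{non-adaptive} $(q,\delta,\eps/2^{q-1})$ Insdel LDC with the \emph{same} $C$ (only the decoder changes). This is exactly the construction in the footnote specialized to $|\Sigma|=2$: the non-adaptive decoder samples $r_1,\dots,r_{q-1}\in\{0,1\}$ uniformly, runs the adaptive decoder to determine the query locations $j_1,\dots,j_q$ it would ask assuming the first $q-1$ answers are $r_1,\dots,r_{q-1}$, then queries all of $y[j_1],\dots,y[j_q]$ non-adaptively. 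If the observed bits agree with the guesses (probability $2^{-(q-1)}$), it completes the adaptive simulation, outputting a bit correct with probability $\ge\tfrac12+\eps$; otherwise it outputs a uniform random bit. Averaging gives success probability $\ge \tfrac12 + \eps\cdot 2^{-(q-1)}$. One subtlety worth checking: the adaptive decoder is also given $m'$, the length of the corrupted word; the non-adaptive decoder simply passes $m'$ through to the simulation, so this causes no issue. Another subtlety: the query locations $j_1,\dots,j_q$ might not be distinct or might repeat earlier queries — this only helps (fewer distinct queries), and querying a location twice is harmless.

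\textbf{Applying the non-adaptive bound.} Having established the reduction, I would apply Theorem~\ref{thm:obliviousLBcombined} to the non-adaptive $(q,\delta,\eps')$ Insdel LDC with $\eps' = \eps/2^{q-1}$. For $q=3$ the adversarial-channel bound is $m = \exp(\Omega_{\delta,\eps'}(\sqrt n))$; since $\eps'$ depends only on $\eps$ (as $q=3$ is fixed here) the constant is still a function of $\delta,\eps$ only, giving $m=\exp(\Omega_{\delta,\eps}(\sqrt n))$. For $q\ge 4$, substituting $\eps'=\eps/2^{q-1}$ into $\exp\!\big(\Omega(\tfrac{\delta}{\ln^2(q/\eps')}\cdot(\eps'^3 n)^{1/(2q-4)})\big)$ I would simplify: $\ln(q/\eps') = \ln(q\,2^{q-1}/\eps) = O(q + \ln(q/\eps))$, so $\ln^2(q/\eps')=O((q+\ln(q/\eps))^2)$; and $\eps'^3 = \eps^3 2^{-3(q-1)}$, so $(\eps'^3 n)^{1/(2q-4)} = 2^{-3(q-1)/(2q-4)}(\eps^3 n)^{1/(2q-4)} = \Omega((\eps^3 n)^{1/(2q-4)})$ since $2^{-3(q-1)/(2q-4)}$ is bounded below by an absolute constant (the exponent $3(q-1)/(2q-4)$ lies in a bounded range for $q\ge4$, tending to $3/2$). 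This yields precisely $m = \exp\!\big(\Omega(\tfrac{\delta}{(q+\ln(q/\eps))^2}\cdot(\eps^3 n)^{1/(2q-4)})\big)$. The private-key case is identical: apply the private-key branch of Theorem~\ref{thm:obliviousLBcombined} with $\eps'$ in place of $\eps$ and perform the same simplification, noting that the reduction preserves the private-key property (the attacker's advantage is merely scaled, and the non-adaptive decoder uses no additional shared randomness beyond its own internal coins).

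\textbf{Main obstacle.} There is essentially no deep obstacle here — the corollary is a routine consequence of a known reduction. The only care needed is in (i) verifying that the Katz--Trevisan reduction genuinely goes through in the Insdel setting (it does, because the decoder still only reads $q$ positions of $y$ and the reduction never manipulates $y$ itself, only the \emph{indices} queried, and the edit-distance promise $\ED(C(x),y)\le\delta\cdot 2m$ is untouched), and (ii) bookkeeping the $\eps\mapsto\eps/2^{q-1}$ loss through the two formulas so that the stated $(q+\ln(q/\eps))^2$ denominator and $(\eps^3 n)^{1/(2q-4)}$ (resp.\ $1/(2q-3)$) exponent come out exactly as claimed. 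I would therefore write the proof as: (1) state and prove the reduction lemma (essentially a clean restatement of the footnote for $|\Sigma|=2$), (2) plug into Theorem~\ref{thm:obliviousLBcombined}, (3) absorb constants, handling the $q=3$, $q\ge4$, and private-key cases in turn.
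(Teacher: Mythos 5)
Your proposal is correct and matches the paper's own argument: the paper proves this corollary exactly by the footnoted Katz--Trevisan simulation (guess the first $q-1$ answers, losing a factor $2^{q-1}$ in the advantage for binary alphabet) and then substitutes $\eps' = \eps/2^{q-1}$ into Theorem~\ref{thm:obliviousLBcombined}, absorbing the resulting $2^{-\Theta(1)}$ and $\ln(q/\eps') = O(q+\ln(q/\eps))$ factors just as you do. Your bookkeeping of the exponents and the private-key remark are consistent with the paper's treatment.
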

Corollary \ref{cor:adaptive} is obtained by plugging $\epsilon' = \epsilon/2^{q-1}$ into Theorem \ref{thm:obliviousLBcombined} and applying the average case reduction from a $(q,\delta,\epsilon)$ (adaptive) Insdel LDC to a $(q,\delta, \epsilon/2^{q-1})$ (non-adaptive) Insdel LDC \cite{KatzT00}. Corollary~\ref{cor:adaptive} also implies lower bounds in regimes where $q$ is slightly super-constant (but $o(\log n)$).

\begin{corollary} \label{cor:super-constant-query-obl}
For any (possibly adaptive) $(q,\delta,\eps)$ insdel LDC $C \colon \set{0,1}^n \rightarrow \set{0,1}^m$, the following bounds hold. 
\begin{itemize}
\item If $q = O\tp{\log\log n}$, then $m = \exp\tp{\exp( \Omega_{\delta,\eps}(\log n/\log\log n))}$.
\item If $q=\log n/(2c\log\log n)$ for some $c > 3$, then
$m = \exp(\Omega(\log^{c-2}n))$. In turn, if $m=\poly(n)$, then $q=\Omega(\log n/\log\log n)$.
\end{itemize}
	Moreover, the lower bounds hold even in private-key settings.
\end{corollary}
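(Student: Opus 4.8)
The plan is to derive both items by substituting the prescribed super-constant values of $q$ into Corollary~\ref{cor:adaptive} and carefully tracking the resulting asymptotics, treating $\delta$ and $\eps$ as fixed constants. Recall that for $q\ge 4$ Corollary~\ref{cor:adaptive} gives a lower bound of the form
\[
m \;=\; \exp\!\left(\Omega\!\left(\frac{\delta}{(q+\ln(q/\eps))^2}\,(\eps^3 n)^{1/(2q-4)}\right)\right)
\]
for arbitrary adversarial channels, and the same bound with $2q-4$ replaced by $2q-3$ in the private-key setting, while for $q\in\{2,3\}$ the stronger bounds $m=\exp(\Omega_{\delta,\eps}(n))$ (Theorem~\ref{thm:2QueryGeneralLowerBound}) and $m=\exp(\Omega_{\delta,\eps}(\sqrt n))$ apply. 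A simple but crucial observation is that the right-hand side above is a \emph{decreasing} function of $q$ on the range $q\ge 3$: as $q$ grows, $(q+\ln(q/\eps))^2$ increases and the exponent $1/(2q-4)$ decreases (and $\eps^3 n>1$ for large $n$), so the bound only weakens.

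For the first item, suppose $q=O(\log\log n)$. If $q\le 3$ we are done by the stronger constant-query bounds, so assume $4\le q\le C\log\log n$ for a constant $C$. Then $2q-4\le 2C\log\log n$, so since $\eps\le 1/2$ we get $(\eps^3 n)^{1/(2q-4)}\ge \eps^{3}\,n^{1/(2C\log\log n)}=\exp(\Omega(\log n/\log\log n))$, while the denominator satisfies $(q+\ln(q/\eps))^2=O((\log\log n)^2)$. Dividing, the exponent inside the outer $\exp$ is at least $\exp(\Omega(\log n/\log\log n))/O((\log\log n)^2)=\exp(\Omega(\log n/\log\log n))$, since a $\mathrm{polyloglog}(n)$ denominator is absorbed into the $\Omega$ in the exponent. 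Hence $m=\exp(\exp(\Omega_{\delta,\eps}(\log n/\log\log n)))$, and the private-key version is identical with $2q-4$ replaced by $2q-3$.

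For the second item, set $q=\log n/(2c\log\log n)$ with $c>3$ fixed (for large $n$ this forces $q\ge 4$). Then $2q-4\le 2q=\log n/(c\log\log n)$, so $(\eps^3 n)^{1/(2q-4)}\ge \eps^3\,n^{1/(2q)}=\eps^3\exp(c\log\log n)=\Omega((\log n)^c)$, while $(q+\ln(q/\eps))^2=O((\log n/\log\log n)^2)$. Therefore the exponent is $\Omega\!\big((\log n)^c(\log\log n)^2/(\log n)^2\big)=\Omega((\log n)^{c-2})$, giving $m=\exp(\Omega(\log^{c-2}n))$, again unchanged in the private-key case. The ``in turn'' claim follows by contraposition using the monotonicity noted above: if $q\le \log n/(2c\log\log n)$ for some fixed $c>3$ and all large $n$, then the bound evaluated at the larger argument $\log n/(2c\log\log n)$ lower bounds $m$, forcing $m\ge \exp(\Omega(\log^{c-2}n))$, which is super-polynomial for $c>3$ and hence incompatible with $m=\mathrm{poly}(n)$ (the degenerate cases $q\le 3$ only give larger, super-polynomial bounds). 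Thus $m=\mathrm{poly}(n)$ implies $q>\log n/(2c\log\log n)$ for all large $n$, and taking $c=4$ yields $q=\Omega(\log n/\log\log n)$.

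The argument is purely asymptotic bookkeeping and presents no substantive obstacle. The only points requiring mild care are: (i) verifying that the right-hand side of Corollary~\ref{cor:adaptive} is monotone decreasing in $q$ over the relevant range, which is what lets us turn a bound stated for one value of $q$ into the ``$m=\mathrm{poly}(n)\Rightarrow q$ large'' conclusion; (ii) checking that the $\mathrm{polylog}$/$\mathrm{polyloglog}$ factors coming from $(q+\ln(q/\eps))^2$ are dominated by the main term and so vanish inside the $\Omega(\cdot)$; and (iii) disposing of the degenerate cases $q\le 3$ by invoking the stronger $\exp(\Omega(\sqrt n))$ and $\exp(\Omega(n))$ bounds. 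Since every cited bound holds verbatim in the private-key setting, so do both items.
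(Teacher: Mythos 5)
Your proposal is correct and matches the paper's intended argument: Corollary~\ref{cor:super-constant-query-obl} is meant to follow by plugging the super-constant values of $q$ into Corollary~\ref{cor:adaptive} (private-key bound with exponent $1/(2q-3)$) and absorbing the $(q+\ln(q/\eps))^2$ factor, exactly as you do. Your added care about monotonicity in $q$, the degenerate cases $q\le 3$, and the contrapositive for the $m=\mathrm{poly}(n)$ claim is sound bookkeeping that the paper leaves implicit.
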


We remark that the lower bound for $q=O(\log\log n)$ queries is even larger than the Hamming LDC upper bound of $\exp(\exp((\log n)^{1/t}  (\log\log n)^{1-1/t}))$ due to~\cite{Yekhanin08,DvirGY11,Efremenko12} for $q=2^t$ being a constant number of queries.

Furthermore, we get a super-polynomial lower bound even if $q = \log n/(8\log\log n)$. Thus to get any polynomial length Insdel LDC one needs $q=\Omega(\log n/\log \log n)$. This can be compared to the Insdel LDC constructions in ~\cite{Ostrovsky-InsdelLDC-Compiler, BlockBGKZ20}, which give $m=o(n^2)$ with $q=(\log n)^C$ for some $C>2$ (or to the private-key Insdel LDC construction in \cite{ChengLZ20, block2021private} which gives constant rate $m=\Theta(n)$ and $q=(\log n)^C$ for some $C>2$). Both the lower bound and the upper bound are for an adaptive Insdel LDC, so our lower bound on the query complexity almost matches the upper bound for polynomial length Insdel LDCs. This also implies that there is a ``phase transition'' phenomenon in the $q=\polylog(n)$ regime, where the length of the Insdel LDC transits from super-polynomial to polynomial.

\subsubsection{Implications to lower bounds for Insdel LCCs}\label{sec:intro-LCCs}

As mentioned above, the lower bounds for Insdel LDCs extend to Insdel LCCs due an analogue to Insdel errors of a reduction \cite{KaufmanV10,bhattacharyya2016lower} between the two notions in the Hamming error model. More specifically, in  \cite{KaufmanV10,bhattacharyya2016lower}, the authors show that in the standard Hamming error case, any $q$-query LCC  can be converted into a $q$-query LDC with only a constant loss in rate, and preserving the other relevant  parameters. 
In \cite{KaufmanV10}, Kaufman and Viderman show that the two notions are not equivalent in some specific sense, as there exist LDCs that are not LCCs \cite{KaufmanV10}.

We start with a formal definition.

\begin{restatable}{definition}{DefInsdelLCC}[Insdel Locally Correctable Codes (Insdel LCCs)] \label{def:InsdelLCC}
Fix an integer $q$ and constants $\delta \in [0,1]$, $\eps \in (0, \frac{1}{2}]$. We say $C \colon \set{0,1}^n \rightarrow \Sigma^m$ is a \emph{$(q,\delta,\eps)$-locally correctable insdel code} if there exists a probabilistic algorithm $\Dec$ such that:
\begin{itemize}
\item For every $x \in \set{0,1}^n$ and $y \in \Sigma^{m'}$ such that $\ED\tp{C(x), y} \le \delta \cdot 2m$, and for every $i \in [m]$, we have
\begin{align*}
\Pr\left[ \Dec(y, m', i) = C(x)_i \right] \ge \frac{1}{2} + \eps,
\end{align*}
where the probability is taken over the randomness of $\Dec$, and $\ED\tp{C(x),y}$ denotes the minimum number of insertions/deletions necessary to transform $C(x)$ into $y$.
\item In every invocation, $\Dec$ reads at most $q$ symbols of $y$. We say that $\Dec$ is \emph{non-adaptive} if the distribution of queries of $\Dec(y, m', i)$ is independent of $y$.
\end{itemize}
\end{restatable}

We note that if $C$ is a linear/affine insdel LCCs then $C$ is also an insdel LDC. Indeed, linear/affine codes are {\em systematic} codes, and hence the message bits appear as codeword bits. This is also the case in the private-key setting. For completeness, we include a proof in Appendix \ref{sec:LCCtoLDC}.
Hence our lower bounds about linear/affine insdel LDCs apply to linear/affine insdel LCCs, even in the private-key setting.

Our results can be extended to non-linear LCCs and LDCs (but not in the private-key setting), using  the following theorem, which we prove in Appendix \ref{sec:LCCtoLDC} via small adaptations to the proof of \cite{bhattacharyya2016lower}.

\begin{theorem}\label{thm:LCCtoLDCreduction}
Let $C: \{0,1\}^k \rightarrow \Sigma^m$ be a $(q,\delta, \eps)$-insdel LCC, then there exists a $(q, \delta, \eps)$-insdel LDC $C':\{0,1\}^{k'} \rightarrow \Sigma^m$ with
\[k' = \Omega\tp{\frac{k}{\log(1/\delta)}}.\]
\end{theorem}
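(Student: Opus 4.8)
The plan is to adapt the standard Hamming-model reduction from LCCs to LDCs (as in \cite{bhattacharyya2016lower}) to the insdel setting, being careful that the only structural tool we have is the edit-distance decoding guarantee of the LCC. The high-level idea is the same: an LCC lets us locally recover \emph{codeword} symbols, so if we can find a set of $k'$ coordinates of $C$ whose values, taken together, determine a $k'$-bit ``message'', and if corrupting the received word with $\le \delta \cdot 2m$ insdel errors only forces the LCC decoder to recover each of these coordinates correctly, then the composition ``encode $x' \in \{0,1\}^{k'}$ by the codeword $C(x)$ for a suitable $x$, decode $x'_i$ by running the LCC local corrector on coordinate $j_i$'' is an insdel LDC on $k'$ bits.

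First I would recall the combinatorial core of the Hamming argument: one uses the fact that a $(q,\delta,\eps)$-LCC has relative distance bounded below (here, in edit distance, $\ED(C(x),C(x')) \ge \Omega(\delta \cdot 2m)$ for $x \ne x'$, which follows since otherwise a single received word would need to decode to two different codeword symbols at some coordinate), hence the code has at most $2^{O(m/\text{(distance)})}$ codewords near any point; more to the point, one shows that there exist $k' = \Omega(k/\log(1/\delta))$ coordinates $j_1,\dots,j_{k'}$ of the codeword such that the projection map $C(x) \mapsto (C(x)_{j_1},\dots,C(x)_{j_{k'}})$ is injective on the $2^k$ codewords. In the Hamming case this is a counting/pigeonhole argument: partition $[m]$ into blocks, observe that if every block had too few ``informative'' coordinates the code would collapse. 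I would reproduce exactly this counting, replacing Hamming distance by edit distance only where the decoding radius is invoked, noting that edit distance is a metric and that $\ED(u,v) \le 2\,d_H(u,v)$ always and $\ED$ is never larger than $|u|+|v|$, so the distance lower bound on $C$ transfers with at most constant-factor loss.

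Then I would define $C' : \{0,1\}^{k'} \to \Sigma^m$ as follows: fix an injection $\phi$ from $\{0,1\}^{k'}$ into the set of codewords of $C$ consistent with the projection onto $\{j_1,\dots,j_{k'}\}$ (this is where the injectivity of the projection is used — given $z \in \{0,1\}^{k'}$ we pick the unique $x$ with $C(x)_{j_t} = z_t$ for all $t$, or an arbitrary default if none exists, and set $C'(z) = C(x)$). The decoder $\Dec'$ for $C'$ on input $(y,m',i)$ simply runs the LCC decoder $\Dec$ on $(y,m',j_i)$ and outputs the result. Correctness is immediate: if $\ED(C'(z),y) \le \delta \cdot 2m$ then $\ED(C(x),y)\le \delta \cdot 2m$ for the corresponding $x$, so $\Pr[\Dec(y,m',j_i) = C(x)_{j_i}] \ge \tfrac12 + \eps$, and $C(x)_{j_i} = z_i$ by construction; query complexity and (non)adaptivity are inherited verbatim. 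The one subtlety — and the step I expect to be the main obstacle — is making the counting argument yield the clean bound $k' = \Omega(k/\log(1/\delta))$ with edit distance rather than Hamming distance in the role of the code's distance: specifically, in edit distance the relation between ``number of codewords in a ball'' and ``radius'' is governed by $2^{O(r\log m)}$-type bounds rather than $2^{O(r)}$, so one must be careful that the logarithmic slack is absorbed correctly and does not turn $\log(1/\delta)$ into $\log(m/\delta)$. I would handle this by running the argument purely at the level of the injective-projection statement, which only needs a \emph{lower} bound on the pairwise edit distance of codewords (to guarantee that two distinct codewords cannot both be within $\delta\cdot 2m$ edit distance of a common received word that decodes consistently), and that lower bound is exactly $\Omega(\delta m)$ and comes for free from the LCC decoding guarantee itself — so no ball-size estimate is actually needed, and the $\log(1/\delta)$ factor arises solely from the block-partition counting exactly as in \cite{bhattacharyya2016lower}. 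Finally I would remark that the reduction manifestly does not preserve shared secret randomness in a useful way (the injection $\phi$ and the coordinate set depend on the whole code, which is fine, but the private-key LDC security definition quantifies over adversaries producing bad $y$, and a bad $y$ for $C'$ need not pull back to a bad $y$ for $C$), which is why the private-key lower bounds are claimed only for LDCs and for linear/affine LCCs, the latter via the systematic-code observation rather than via this theorem.
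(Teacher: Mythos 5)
There is a genuine gap, and it sits exactly at the step you yourself flag as the ``main obstacle.'' Your construction asks only that the projection $C(x) \mapsto (C(x)_{j_1},\dots,C(x)_{j_{k'}})$ be \emph{injective} on the codewords, and you define $C'(z)$ as ``the unique $x$ with $C(x)_{j_t}=z_t$, or an arbitrary default if none exists.'' Injectivity is the wrong property: for $C'$ to be an LDC on all of $\set{0,1}^{k'}$, \emph{every} $z \in \set{0,1}^{k'}$ must have a codeword realizing it on the chosen coordinates, i.e.\ the coordinate set must be \emph{shattered} (the projection must be onto $\set{0,1}^{k'}$). Under mere injectivity the image can be an arbitrary $2^k$-element subset of $\set{0,1}^{|I|}$, and for any $z$ falling back on the ``arbitrary default'' the decoder recovers $C(x)_{j_i}$ for an unrelated $x$, so the decoding guarantee fails on those messages; nor can you fix this by re-indexing the image, since that destroys locality. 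The paper obtains the shattered set by applying Dudley's VC-dimension bound (Theorem 14.12 in \cite{ledoux2013probability}) to the code: pairwise edit distance forces pairwise Hamming distance $> 2\delta m$ (your argument for this part is fine), hence pairwise $\ell_2$ distance $\Omega(\sqrt{\delta}\cdot\sqrt{mt})$ after concatenation, and Dudley gives a shattered set of size $\Omega(k/\log(1/\delta))$. Your proposed substitute --- a ``block-partition counting/pigeonhole'' argument --- is not the argument of \cite{bhattacharyya2016lower}, is never carried out, and as stated targets injectivity rather than shattering, so the quantitative heart of the theorem (where the $\log(1/\delta)$ comes from) is missing.

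A secondary omission: the theorem allows a general alphabet $\Sigma$, while the message of $C'$ is binary. The paper first concatenates $C$ with an inner binary code $C^0$ of constant relative distance; the shattered set then consists of \emph{bit} positions of the concatenated code, each lying inside the block of a single outer symbol, so decoding one message bit requires locally correcting just one $\Sigma$-symbol of $C$ (preserving query complexity $q$), and the constant $\delta_0$ is absorbed into the $\Omega(\cdot)$. Your sketch treats codeword coordinates as bits and skips this step. Your closing remark about why the reduction does not transfer to the private-key setting is consistent with the paper, but the core of the proof as proposed does not go through without replacing injectivity by shattering and supplying the VC-dimension argument.
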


We conclude the following about Insdel LCCs.

\begin{corollary}

The asymptotic lower bounds for Insdel LDCs in Theorems~\ref{thm:linearLDC}, \ref{thm:2QueryGeneralLowerBound}, \ref{thm:obliviousLBcombined} (for arbitrary adversarial channels only), and the respective corollaries, also hold for Insdel LCCs. 
\end{corollary}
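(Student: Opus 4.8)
The plan is to combine two ingredients already established in the excerpt: (i) the lower bounds for Insdel LDCs in Theorems~\ref{thm:linearLDC}, \ref{thm:2QueryGeneralLowerBound}, and \ref{thm:obliviousLBcombined}; and (ii) the reductions from Insdel LCCs to Insdel LDCs, namely the systematic-code observation for the linear/affine case and Theorem~\ref{thm:LCCtoLDCreduction} for the general case. The corollary is therefore not a new technical result but a bookkeeping consequence, and the proof is a short case analysis tracking how parameters transform under the reduction.

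First, I would dispose of the linear/affine case. If $C\colon\{0,1\}^n\to\Sigma^m$ is a $(q,\delta,\eps)$ linear or affine Insdel LCC, then $C$ is systematic, so the $n$ message coordinates appear among the $m$ codeword coordinates; composing the LCC decoder with this coordinate identification yields a $(q,\delta,\eps)$ Insdel LDC with the \emph{same} parameters $n$, $m$. Hence Theorem~\ref{thm:linearLDC} immediately gives $n=O_{\delta,\eps}(1)$ for $2$-query linear/affine Insdel LCCs, and this step also goes through verbatim in the private-key setting (as noted in the excerpt, referencing Appendix~\ref{sec:LCCtoLDC}).

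Next, the general (non-linear) case. Given a $(q,\delta,\eps)$ Insdel LCC $C\colon\{0,1\}^k\to\Sigma^m$, Theorem~\ref{thm:LCCtoLDCreduction} produces a $(q,\delta,\eps)$ Insdel LDC $C'\colon\{0,1\}^{k'}\to\Sigma^m$ with $k'=\Omega(k/\log(1/\delta))$. Now I apply the LDC lower bounds to $C'$ with message length $k'$: from Theorem~\ref{thm:2QueryGeneralLowerBound}, $m=\exp(\Omega_{\delta,\eps}(k'))=\exp(\Omega_{\delta,\eps}(k))$ for $q=2$ (absorbing the $\log(1/\delta)$ factor into the $\Omega_{\delta,\eps}$ notation); and from Theorem~\ref{thm:obliviousLBcombined} (arbitrary-adversarial-channel branch), $m=\exp(\Omega_{\delta,\eps}(\sqrt{k'}))=\exp(\Omega_{\delta,\eps}(\sqrt{k}))$ for $q=3$, and $m=\exp\!\big(\Omega\big(\tfrac{\delta}{\ln^2(q/\eps)}(\eps^3 k')^{1/(2q-4)}\big)\big)$, which is $\exp(\Omega_{\delta,\eps}((\eps^3 k)^{1/(2q-4)}))$ up to the $\log(1/\delta)$ loss, for $q\ge 4$. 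Corollaries~\ref{cor:adaptive} and~\ref{cor:super-constant-query-obl} follow by the same substitution, since the adaptive-to-nonadaptive reduction of~\cite{KatzT00} applies unchanged to the LDC $C'$.

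The one point requiring care — and the only real obstacle — is why the private-key bounds do \emph{not} transfer in the non-linear case: the reduction of Theorem~\ref{thm:LCCtoLDCreduction} (following \cite{bhattacharyya2016lower}) restricts the LCC decoder to a well-chosen subset of codeword coordinates and reinterprets them as messages, and this transformation is not known to preserve the private-key guarantee (the attacker's ability to corrupt the received word interacts badly with the coordinate restriction), which is precisely why the corollary is stated ``for arbitrary adversarial channels only'' for Theorem~\ref{thm:obliviousLBcombined}. For the linear/affine case the private-key bound \emph{does} transfer because there the reduction is just the trivial systematic-coordinate identification, which changes nothing about the adversary's power. I would state this caveat explicitly and point to Appendix~\ref{sec:LCCtoLDC} for the details.
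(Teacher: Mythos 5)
Your proposal is correct and follows essentially the same route as the paper: the corollary is obtained by composing the LDC lower bounds with the two LCC-to-LDC reductions (the systematic/coordinate-identification argument for linear and affine codes, which preserves all parameters and the private-key guarantee, and Theorem~\ref{thm:LCCtoLDCreduction} for general codes, whose $\Omega(k/\log(1/\delta))$ message-length loss is absorbed into the $\Omega_{\delta,\eps}$ notation). Your parameter bookkeeping and the caveat that the non-linear reduction does not preserve the private-key setting match the paper's treatment, so there is nothing to add.
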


\subsubsection{A stronger version of the lower bounds}

Our lower bounds above hold against adversarial channels, where the channel may first inspect the codeword and then introduce worst-case error patterns. Our techniques, however, work for a more innocuous channel, namely one that is oblivious to both the codeword and the decoder. We formalize the stronger version of our results below. We believe in this form they may be more easily applicable to other settings.

%to codes for adversarial channels, which is a well-studied channel in coding theory. We also derive lower bounds which continue to apply in the private-key setting. In fact, these results hold in a more general setting. This is because our argument is based on the construction of an insdel channel $\mathfrak{D}$ which allows us to derive all necessary properties about the decoder, and hence the lower bound. A remarkable feature of $\mathfrak{D}$ is that it is oblivious to both the codeword and the decoder, which makes it much more innocuous than an adversarial channel. We believe formally stating the properties of $\mathfrak{D}$ makes our results more readily applicable to other problems. 

%Before stating the stronger form of our result, we need some definitions. We first formally define a channel.
\begin{definition}[channel]
A channel $\mathfrak{C}$ for $m$-bit strings is a Markov chain on $\Omega=\set{0,1}^m$. Equivalently, it is a collection of distributions $\set{\mathfrak{C}(s) \colon s \in \set{0,1}^m}$ over $\set{0,1}^m$, where the output of $\mathfrak{C}$ on input $s$ is a random string $s'$ distributed according to $\mathfrak{D}(s)$.
\end{definition}

We remark that this definition does allow the output of a channel to depend on its input. However, since the channel is fixed for any decoding algorithm, the following notion of ``decodable on average'' is well-defined. We note that a similar notion would not make sense for an adversarial channel, as the channel can be adaptive to the decoding strategy.

\begin{definition}[decoding on average]
A code $C\colon \set{0,1}^n \rightarrow \set{0,1}^m$ is \textup{$(q,\delta,\eps)$-locally decodable on average for channel $\mathfrak{C}$} if there exists a probabilistic algorithm $\Dec$ such that
\begin{itemize}
	\item For every $i \in [n]$, we have
	\begin{align*}
	\Pr_{\substack{x \in \set{0,1}^n \\ y \sim \mathfrak{C}(C(x))}}\left[ \Dec(y, m, i) = x_i \right] \ge \frac{1}{2} + \eps,
	\end{align*} 
	where the probability is taken over the uniform random choice of $x \in \set{0,1}^n$, the randomness of $\mathfrak{C}$, and the randomness of $\Dec$. 
	\item $\Dec$ makes at most $q$ queries to $y$ in each invocation. 
\end{itemize}
\end{definition}

We now state our lower bound in the strongest form. Its proof can be obtained by inspecting the proof of Theorem~\ref{thm:obliviousLBcombined} (for the private-key setting).
\begin{theorem} \label{thm:lb-channel}
	There exists a channel $\mathfrak{C}$ for $m$-bit strings such that: 
	\begin{itemize}
	\item For every $s \in \set{0,1}^m$, $\Pr_{s'\sim \mathfrak{C}(s)}[\ED\tp{s', s} > \delta\cdot 2m] < \negl(m)$.
	\item Let $C \colon \set{0,1}^n \rightarrow \set{0,1}^m$ be a code that is $(q,\delta,\eps)$-locally decodable on average for $\mathfrak{C}$. Then for $q\ge 3$ we have $m = \exp\tp{\Omega_{q,\delta,\eps}(n^{1/(2q-3)})}$.
	\end{itemize}
\end{theorem}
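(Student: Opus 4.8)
The plan is to extract the channel $\mathfrak{D}$ implicitly constructed in the proof of Theorem~\ref{thm:obliviousLBcombined} for the private-key setting, and verify that it has the two claimed properties. The key observation is that the private-key lower bound proof must already exhibit a single distribution over insdel patterns (not adapted to the decoder or to the codeword) that fools every low-query decoder; otherwise the private-key guarantee — which quantifies over all attackers who do not see the secret randomness — could not be defeated. So I would first isolate, from that argument, the family of corruptions $\{\mathfrak{D}(s)\}_{s\in\set{0,1}^m}$: given a string $s$, $\mathfrak{D}(s)$ applies some randomly chosen pattern of insertions and deletions (e.g., deleting a random length-$t$ block, or inserting random runs at random positions, or a composition of a few such moves, following whatever hard distribution the $q\ge 3$ argument uses), budgeted so that the total number of edits never exceeds $\delta\cdot 2m$ except with negligible probability. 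This gives the first bullet essentially by construction: the edit distance bound is a deterministic-with-overwhelming-probability property of the pattern distribution, independent of $s$.

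For the second bullet I would argue the contrapositive in the averaged form. Suppose $C$ is $(q,\delta,\eps)$-locally decodable on average for $\mathfrak{D}$ with $m = \exp(o_{q,\delta,\eps}(n^{1/(2q-3)}))$. Then by an averaging (Markov/counting) argument over $x\in\set{0,1}^n$ and over the internal randomness of $\mathfrak{D}$, one can fix the pattern randomness and the secret key so as to obtain, for a $\tfrac14\eps$-fraction of messages $x$ and a $\tfrac14\eps$-fraction of indices $i$, a decoder that succeeds with probability $\ge \tfrac12 + \tfrac\eps2$ over its own coins on the corrupted word $y\sim\mathfrak{D}(C(x))$. This is precisely the "good on average" hypothesis that the private-key branch of Theorem~\ref{thm:obliviousLBcombined} refutes: its proof never uses adversarial adaptivity, only that the corruption distribution is fixed in advance and that the decoder does well on a noticeable fraction of $(x,i)$ pairs. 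Plugging the averaged success guarantee into that argument yields $m = \exp(\Omega_{q,\delta,\eps}(n^{1/(2q-3)}))$, a contradiction. Hence the stated bound holds.

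Concretely, the steps in order are: (i) read off $\mathfrak{D}$ and its edit budget from the private-key proof and record the negligible failure bound for the first bullet; (ii) state the averaging reduction turning "$(q,\delta,\eps)$-locally decodable on average for $\mathfrak{D}$" into "for many $(x,i)$, a fixed-key decoder wins with advantage $\eps/2$ against the fixed hard distribution"; (iii) invoke the private-key analysis of Theorem~\ref{thm:obliviousLBcombined} verbatim on this object to get the exponential length bound; (iv) collect constants, noting the dependence on $q,\delta,\eps$ is the same as in the private-key case, giving exponent $1/(2q-3)$.

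The main obstacle I anticipate is step~(ii): making the averaging over $x$ and over $\mathfrak{D}$'s randomness compatible with whatever structure the $q\ge 3$ private-key proof exploits. In particular, if that proof reasons about the decoder's query distribution conditioned on the corruption pattern (e.g., partitioning codeword positions into "blocks" and arguing most queries land in a small set of blocks that the deletion pattern can erase or shift), then I must check that fixing the pattern randomness first — rather than letting it stay random — does not destroy the block-erasure probabilities. The fix is to not fully derandomize: keep $\mathfrak{D}$'s randomness live, fix only the secret key and restrict to the good $(x,i)$ pairs, and feed the resulting randomized-channel, fixed-key, average-case decoder directly into the existing argument, which is already written to handle a randomized corruption. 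Everything else is bookkeeping of the $\eps$-losses through the averaging, which only affects the hidden constant in $\Omega_{q,\delta,\eps}(\cdot)$ and not the exponent $1/(2q-3)$.
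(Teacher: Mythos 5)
Your overall plan is the paper's intended one: take the decoder-independent distribution $\+D_{obl}$ of Section~\ref{sec:error_oblivious} (applied to the augmented codeword and truncated back to length $m$, so it is a bona fide channel on $\set{0,1}^m$) as $\mathfrak{D}$, get the first bullet from Proposition~\ref{prop:total-error-bound}, and get the second bullet by rerunning the private-key analysis of Section~\ref{sec:LB_obl} under the average-case hypothesis; you also correctly identified that it must be the oblivious branch (exponent $1/(2q-3)$) rather than $\+D_{adv,i}$, which depends on $i$ and on $\Dec$ and so is not a single channel. However, step~(ii) of your proposal is where the argument as written breaks. There is no secret key anywhere in this statement ($C$ is a plain code), and the reduction direction you describe cannot work as a black box: being decodable on average for the one fixed channel $\mathfrak{D}$ does not make $C$ a private-key Insdel LDC, so you cannot ``invoke'' Theorem~\ref{thm:obliviousLBcombined}; you must reuse its \emph{proof}. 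More substantively, the averaging you propose is both unnecessary and not mere bookkeeping. The hypothesis of the theorem is already in exactly the form the proof consumes: for \emph{every} $i\in[n]$, advantage $\eps$ averaged over a uniform $x$, the channel randomness, and the decoder's coins --- which is precisely what the proof of Lemma~\ref{lem:hitting} uses (it already takes $x$ uniform and averages over $D$). Fixing the pattern randomness would destroy the whole argument (you notice and retract this), and restricting to an $\eps/4$-fraction of messages $x$ undermines the steps you claim to apply ``verbatim'': Definition~\ref{def:good-set} and Proposition~\ref{prop:info-theory} are stated for the uniform message distribution, and the mutual-information bound $I(\*x_{H_Q};C(\*x)_Q) \ge (1-\+H(1/2+\eps/4))\abs{H_Q}$ uses uniformity of the message bits, so under a conditioned message distribution these would have to be reproved (with an additive entropy loss), whereas restricting to a fraction of the indices $i$ would only cost constants.

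The correct and simpler route is to delete step~(ii) entirely: feed the average-case guarantee directly into Lemma~\ref{lem:hitting} (conditioning on the event of Proposition~\ref{prop:total-error-bound} exactly as in Claim~\ref{clm:fraction-generalized}), after which Claim~\ref{clm:fraction-generalized}, Claim~\ref{clm:GJ-ub}, and the final count in Section~\ref{sec:LB_obl} go through unchanged and give $n = O_{q,\delta,\eps}\tp{(\log m)^{2q-3}}$, i.e.\ $m = \exp\tp{\Omega_{q,\delta,\eps}(n^{1/(2q-3)})}$.
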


\subsection{Overview of techniques}
Here we give an informal overview of the key ideas and techniques used in our proofs. We always assume a non-adaptive decoder in the following discussion.

\paragraph{Prior strategies for Hamming LDC lower bounds} 
We start by discussing the proof strategies in lower bounds for Hamming LDCs. Essentially all such proofs\footnote{Except the proof in \cite{GalM12} which gives a lower bound for $3$-query Hamming LDC in a special range of parameters.} begin by observing that the code needs to be \emph{smooth} in the sense that for any target message bit, the decoder cannot query a specific index with very high probability. Using this property, one can show that if we represent the queries used by the decoder as edges in a hypergraph with $m$ vertices, then for any target message bit the hypergraph contains a matching of size $\Omega(m/q)$. The key idea in the proof is now to analyze this matching, where one uses various tools such as (quantum) information theory \cite{KatzT00, KerenidisW04, Woodruff07}, matrix hypercontractivity \cite{Ben-AroyaRW08}, combinatorial arguments \cite{KatzT00, BhattacharyyaCG20}, and reductions from $q$-query to $2$-query \cite{Woodruff07, Woodruff12}.

For our proofs, however, the matching turns out to be not the right object to look at. Indeed, by simply analyzing the matching it is hard to prove any strong lower bounds for $q \geq 3$, as evidenced by the lack of progress for Hamming LDCs. Intuitively, a matching does not capture the essence of insdel errors (e.g., position shifts), which are strictly more general and powerful than Hamming errors. Therefore, we instead need to look at a different object.

\paragraph{The Good queries} For a $q$-query insdel LDC, the correct object turns out to be the set of all \emph{good} $q$-tuples in the codeword that are potentially useful for decoding a target message bit. When we view the bits in the codeword as functions of the message, we define a $q$-tuple to be good for the $i$'th message bit if there exists a Boolean function $f: \{0, 1\}^q \to \{0, 1\}$ which can predict the $i$'th message bit with a non-trivial advantage (e.g., with probability at least $1/2+\eps/4$, see Definition~\ref{def:good-set}), using these $q$ bits. It is a straightforward application of information theory (e.g., Theorem 2 in~\cite{KatzT00}) that any $q$-tuple cannot be good for too many message bits. Therefore, intuitively, if we can show that any message bit requires a lot of good tuples to decode, then we can conclude that there must be many tuples and thus the codeword must be long. In the extreme case, if we can show that any message bit requires a \emph{constant fraction} of all tuples to decode, then we can conclude that there can be at most a constant number of message bits, regardless of the length of the codeword.  

Towards this end, we consider the effect of insdels on the tuples. Suppose the decoder originally queries some $q$-tuple $A$. After some insdels (e.g., deletions) the positions of the tuples will change, and the actual tuple the decoder queries using $A$ now may correspond to some other tuple $B$ in the original codeword. $B$ may not be a good tuple, in which case it's not useful for decoding the message bit. However, since the decoder always succeeds with probability $1/2+\eps$ when the number of errors is bounded, the decoder should still hit good tuples with a decent probability (e.g., $3\eps/2$). Intuitively, this already implies in some sense that there should be many good tuples, except that this depends on the decoder's probability distribution. For example, if the decoder queries one tuple with probability $1$, then for any fixed error pattern one just needs to make sure that one specific tuple is good. 

To leverage the above point, we turn to a probabilistic analysis and use random errors. Specifically, we carefully design a probability distribution on the insdel errors. For any $q$-tuple $A$, this distribution also induces another probability distribution for the $q$-tuple $B$ which $A$ corresponds to in the original codeword. The key ingredient in all our proofs is to design the error distribution such that it ensures certain nice properties of the induced distribution of any $q$-tuple, which will allow us to establish our bounds. This can be viewed as a conceptual contribution of our work, as we have reduced the problem of proving lower bounds of insdel LDCs to the problem of designing appropriate error distributions.

\paragraph{Designing the insdel error distribution} What is the best insdel error distribution for our proof? It turns out the ideal case for the induced distribution of a $q$-tuple is the uniform distribution. Indeed, the hitting property discussed above implies that for any message bit, there is at least one $q$-tuple in the support of the decoder's queries which would still be good with constant probability under the induced distribution. If we can design an error distribution such that for \emph{any} $q$-tuple, the induced distribution is the uniform distribution on all $q$-tuples, this means that for any message bit, there are at least a constant fraction of all $q$-tuples that are good for this bit, which would in turn imply that there can be at most a constant number of message bits.

However, it appears hard to design an error distribution with the above property even for $q=2$, since we have a bound on the total number of errors allowed, and errors allocated to one tuple will affect the number of errors available for other tuples. Instead, our goal is to design the error distribution such that the induced  distribution of any $q$-tuple is as ``close'' to the uniform distribution as possible. We first illustrate our ideas for the case of $q=2$.

\paragraph{The case of $q=2$} A simple idea is to start with a random number (up to $\Omega(m)$) of deletions at the beginning of the codeword, we call this deletion {\em type $\mathbf{1}$}. This results in a random shift of any pair of indices. However, a crucial observation is that the \emph{distance} between any pair of indices stays the same (for a pair of indices $i, j \in [m]$, their distance is $|i-j|$), which makes the induced distribution far from being uniform. Indeed, under such error patterns the Hadamard code seems to be a good candidate for insdel LDC. This is because any codeword bit of the Hadamard code is the inner product of a vector $v \in \{0, 1\}^n$ with the message, and to decode the $i$'th message bit the decoder queries a pair of inner products for $v$ and $v+e_i$ ($e_i$ is the $i$'th standard basis vector) where $v$ is a uniform vector. If we arrange the codeword bits in the natural lexicographical order according to $v$, then all pairs used in queries for the $i$'th message bit have a fixed distance of $2^{i-1}$. In fact we show in the appendix that a simple variant of the Hadamard code does give a LDC under deletion type $\mathbf{1}$. However, our Theorem~\ref{thm:linearLDC} implies that it is not an insdel LDC in general. The point here is that we need a different operation to change the distance of any pair, which is a phenomenon unique to insdel LDC and never happens in Hamming LDC. 

To achieve this, we introduce \emph{random deletions} of each message bit on top of the previous operation. Specifically, imagine that we fix a constant $p < \delta$ and delete each bit of the codeword independently with probability $p$. Under this error distribution, any pair of queries with distance $d$ will correspond to a pair with distance $\frac{d}{1-p}$ in expectation (since we expect to delete $p$ fraction of bits in any interval). However, the independent deletions lead to a concentration around the mean. Thus the probability of any distance around $\frac{d}{1-p}$ is $\Theta(\frac{1}{\sqrt{d}})$ and the distribution resembles that of a binomial distribution (it is called a negative binomial distribution), which is not flat enough compared to the uniform distribution. Therefore, we add another twist by first picking the parameter $p$ uniformly from an interval (e.g., $[\frac{\delta}{8}, \frac{\delta}{4}]$) and then delete each bit of the codeword independently with probability $p$. We call this deletion {\em type $\mathbf{2}$}. Somewhat magically, the compound distribution now effectively ``flattens'' the original distribution, and we can show that the probability mass of any distance is now $O(\frac{1}{d})$. Intuitively, this is because the distance in the induced distribution is now roughly equally likely to appear in the interval $[\frac{d}{1-\delta/8}, \frac{d}{1-\delta/4}]$. Combined with the deletions at the beginning, we can conclude the following two properties for any pair with original distance $d$ in the induced distribution: (1) The probability mass of any element in the support is  $O(\frac{1}{md})$, and (2) With high probability, the corresponding pair will have distance in $[d, cd]$ for some constant $c=c(\delta, \eps)$ (See Lemma~\ref{lem:distribution} for the formal statement).

While this is not exactly the uniform distribution, it is already enough to establish non-trivial bounds. To do this, we divide all pairs of queries into $O(\log m)$ intervals based on their distances, where the $j$'th interval $P_j$ consists of all pairs with distance in $[c^{j-1}, c^j)$. By the hitting property discussed before, for any message bit, there is at least one $q$-tuple in the support of the decoder's queries which is still good with constant probability under the induced distribution. By (1) and (2) above, there must be at least $\Omega(md)$ good pairs with distance in $[d, cd]$, and this further implies that there exists a $j$ such that $P_j$ contains a constant fraction of good pairs. Now a packing argument implies that $n=O(\log m)$.

We remark that the random deletion channel (described above) that we use to establish the lower bound does not depend on anything about the codeword or the entire coding and decoding scheme. Thus, in contrast to the same bound for Hamming LDC, our lower bound continues to apply in private-key settings where the encoder and decoder share secret randomness. 

\paragraph{Linear $2$-query LDC} The case of linear/affine codes is more involved. Here, we first use Fourier analysis to argue that if a pair of codeword bits is good for decoding a message bit, then the message bit must have non-trivial correlation with some parity of the codeword bits. However, since the code itself is linear or affine, this non-trivial correlation must be $1$. By the hitting property, for any $i$'th message bit there exists a $j_i$ such that a constant fraction of the pairs in $P_{j_i}$ are good for $i$. By rearranging the message bits, without loss of generality we can assume that $j_1 \leq j_2 \leq \cdots \leq j_n$.

Now, for any $i$ and $P_{j_i}$ we have two cases: the message bit can have correlation $1$ either with a single codeword bit, or with the parity of the two codeword bits. By averaging, at least one case consists of a constant fraction of the pairs in $P_{j_i}$. By another averaging, at least a constant fraction of the message bits fall into one of the above cases, so eventually we have two cases: (a) a constant fraction of the message bits each has correlation 1 with a constant fraction of all codeword bits, or (b) a constant fraction of the message bits each has correlation 1 with the parity of a constant fraction of the pairs in $P_{j_i}$. 

The first case is easy since any codeword bit cannot simultaneously have correlation $1$ with two different message bits, hence this implies we can only have a constant number of message bits. The second case is harder, where we use a delicate combinatorial argument to reduce to the first case. Specifically, for any such message bit $i$ we can consider the bipartite graph $G_i$ on $2m$ vertices induced by the good pairs in $P_{j_i}$, thus any such graph has bounded degree (since the distance of the pairs is bounded) and is dense in the sense that the edges take up a constant fraction of all possible edges. For simplicity let us assume that having correlation $1$ means that the two bits are the same as functions. Roughly, we use the dense property of these graphs to show the following: (c) there is an index $i=\Omega(n)$ and a right vertex $W \in G_i$ which is connected to a set $T$ of $\Omega(c^{j_i})$ left vertices in $G_i$, and (d) there are $\Omega(n)$ indices $i' \leq i$ such that in each $G_{i'}$, the same set $T$ is connected to a set $U_{i'}$ of $\Omega(c^{j_i})$ neighbors. By (c), all the codeword bits in $T$ must be the same, and they are all contained in an interval of length $c^{j_i}$. Then by (d), all the codeword bits in $U_{i'}$ for different $i'$ must be disjoint, since the parity of them with some bits in $T$ equals a different message bit. Now notice that for any $i' \leq i$, all pairs in $P_{j_i'}$ have distance at most $c^{j_i'} \leq c^{j_i}$. This implies all the bits of all $U_{i'}$ are contained in an interval of length $2c^{j_i}$, which readily gives that $n=O(1)$.

\paragraph{The case of $q \geq 3$} We now generalize the above strategy to the case of $q \geq 3$. Consider the case of $q=3$ for example. Now any query is a triple and we use $(d_1, d_2)$ to stand for the distances of the two adjacent intervals in the query. If we can show similar properties as before, i.e., for any triple with distance $(d_1, d_2)$ in the induced distribution: (1) The probability mass of any element is  $O(\frac{1}{md_1d_2})$, and (2) With high probability, the corresponding triple will have distance $(d_1', d_2')$ such that $d_1' \in [d_1, cd_1], d_2' \in [d_2, cd_2]$ for some constant $c=c(\delta, \eps)$, then a similar argument would yield the bound of $n=O(\log^2 m)$, and for general $q$ (at least constant $q$) the bound of $n=O(\log^{q-1} m)$.

However, unlike the case of $q=2$, another tricky issue arises here. The issue is that with the error distribution discussed above, while we can ensure that for any \emph{pair} of indices in the $q$-tuple, its marginal distribution behaves as before, the joint distribution of the $q$-tuple in the induced distribution behaves differently than what we expect. The reason is that (e.g., for $q=3$) the two intervals with distance $d_1$ and $d_2$ are correlated under the error distribution. Specifically, the random deletion of each codeword bit again leads to a concentration phenomenon, thus conditioned on the number of deletions in the first interval, the parameter $p$ is no longer uniformly distributed in the interval $[\frac{\delta}{8}, \frac{\delta}{4}]$, but rather pretty concentrated in a much smaller interval. This in turn affects the induced distribution of the second interval. Specifically, under this error distribution the bound on the probability in (1) becomes $O(\frac{\sqrt{d}}{md_1d_2})$, where $d=d_1+d_2$. If we simply apply this bound, it will lead to (coincidentally or uncoincidentally) almost exactly the same bound as for Hamming LDC, thus we don't get any significant improvement.

To get around this and prove strong lower bounds for insdel LDCs, we introduce additional random deletion processes to ``break'' the correlations discussed above. Towards this, we add another layer of deletions on top of the previous two operations: we first divide the codeword evenly into blocks of size $s$, and then for each block, we independently pick a parameter $p$ uniformly from $[\frac{\delta}{8}, \frac{\delta}{4}]$ and delete each bit of this block independently with probability $p$. The idea is that, if for a $3$-query it happens that one block is completely contained in one interval, then since the deletion process in that block is independent of the other blocks, the induced distribution of that interval is also more or less independent of the other interval. 

However, this comes with another tricky issue: how to pick the size $s$. If $s$ is too large, then for queries with small intervals, both intervals can be contained in the same block, and the deletion process would be exactly the same as before, which defeats the purpose of using blocks. On the other hand, if $s$ is too small, then for queries with large intervals, the concentration and correlation phenomenon will happen again, which also defeats the purpose of using blocks. Since the intervals of the queries can have arbitrary distance, our solution is to actually use $O(\log m)$ layers of deletions, where for the $j$'th layer we use a block size of say $2^j$. This ensures that for any query there is an appropriate block size, and in the analysis we can first condition on the fixing of all other layers, and argue about this layer.

Yet there is another price to pay here: since we are only allowed at most $\delta m$ deletions, in each layer we cannot delete each bit with constant probability. Therefore for these layers we need to pick a parameter $p$ uniformly from $[\frac{\delta}{8 \log m}, \frac{\delta}{4 \log m}]$. We call this deletion {\em type $\mathbf{3}$.} This blows up our bound of the probability in (1) by a $\polylog$ factor (see Corollary~\ref{cor:distribution-obl} for a formal statement), and we get a bound of $n=O(\log^{2q-3} m)$.

We note that in all the discussions so far, our error distributions do not depend on anything about the codeword or the entire coding and decoding scheme, thus all these results apply in settings where the encoder and decoder share secret randomness (private-key), which makes our lower bounds stronger. On the other hand, by exploiting the decoder's strategy, we can actually improve our bounds for the case of $q \geq 3$ (but the improved lower bounds no longer hold in the private-key setting). This time, we add another $O(\log m)$ layers of deletions on top of the previous three operations, where for the $j$'th layer we again use a block size of say $2^j$. However, for these $O(\log m)$ layers the deletion parameter $p$ is not picked from $[\frac{\delta}{8\log m}, \frac{\delta}{4\log m}]$, but rather uniformly from $[\frac{\delta p_j}{8}, \frac{\delta p_j}{4}]$, where $p_j$ is the probability that the decoder uses a query whose first interval has distance in $[2^{j-1}, 2^j)$. We call this deletion {\em type $\mathbf{4}$}. Notice that since $\sum_j p_j=1$ the expected number of total deletions for this operation is still at most $\frac{\delta m}{4}$.    

To get some intuition of why this helps us, consider the extreme case where all the queries used by the decoder have exactly the same distance for the first interval. Since there is no other distance for the first interval, we should not assign any probability mass of deletions to blocks of a different size, but should instead use the same block size, and delete each bit with probability $p$ chosen uniformly from say $[\frac{\delta}{8}, \frac{\delta}{4}]$. This corresponds to the case where some $p_j=1$, and the above strategy is a natural generalization. In the meantime, we still need all previous deletion types to take care of the other intervals. We show that under this deletion process we can replace one $\log m$ factor in the probability of (1) by $1/p_j$ (see Corollary~\ref{cor:distribution-adv} for a formal statement), and overall this leads to a bound of $n=O(\log^{2q-4} m)$ for $q \geq 3$.

\subsection{Open questions and subsequent work}
\paragraph{Better lower bound} 

In \cite{BlockBGKZ20}, a subset of the authors raised the conjecture that  in fact constant-query insdel LDCs do not even exist, in stark contrast to the Hamming case, where the classical Hadamard code is a basic example of a $2$-query LDC.  In subsequent work, Gupta \cite{gupta2024constant} confirmed the  conjecture, by first re-interpreting our techniques for the $2$-query case, and then generalizing to any $O(1)$-queries. For a survey on the current insdel LDCs landscape, we also refer the reader to \cite{survey-grigorescu25}.

%An immediate open problem concerns the existence of constant-query Insdel LDCs. While our lower bounds here are exponential in the message length, it is still conceivable that such Insdel LDCs do not actually exist, a conjecture raised in \cite{BlockBGKZ20}. If on the other hand such codes do exist, it would be very intriguing to have better lower bounds and/or have explicit constructions, even for $q=O(\log n/ \log \log n)$. 
%Surprisingly, it appears to be highly non-trivial to construct such codes, which is in contrast to Hamming LDCs, where the simple Hadamard code is a classic example of a constant-query (in fact $2$-query) LDC, and by now we have several constructions of explicit LDCs. 

\paragraph{Relaxed Insdel LDCs/LCCs} Relaxed (Hamming) LDCs/LCCs are variants in which the decoder is allowed some small probability of outputing a ``don't know'' answer, while it should answer with the correctly decoded bit most of the time. \cite{Ben-SassonGHSV06} proposed these variants and gave constructions with constant query complexity and codeword length $m=n^{1+\eps}.$ More recently \cite{GurRR18} extended the notion to LCCs, and proved similar bounds, which are tight \cite{GurL21}. An open problem here is to understand tight bounds for the relaxed insdel variants of LDCs/LCCs.
A follow-up work by Block, Blocki, Cheng, Grigorescu, Li, Zheng and Zhu \cite{block2023relaxed} studied this topic and gave an exponential lower bound for one kind of relaxed insdel LDCs.

\paragraph{Larger alphabet size} We believe our proofs generalize to larger alphabet sizes, and leave the precise bounds in terms of the alphabet size as an open problem. All the above directions may also be asked for larger alphabet sizes.

\subsection{Further discussion about related work}

\cite{OPS07} gave private key constructions of LDCs with constant rate $m=\Theta(n)$ and locality $\polylog(n)$. \cite{BlockiKZ19} extended the construction from \cite{OPS07} to settings where the sender/decoder do not share randomness, but the adversarial channel is resource bounded i.e., there is a safe-function that can be evaluated by the encoder/decoder but not by the channel due to resource constraints (space, computation depth, etc.). By contrast, in the classical setting with no shared randomness and a computationally unbounded channel there are no known constructions with constant rate $m=\Theta(n)$ and locality $\polylog(n)$. \cite{block2021private} applied the \cite{BlockBGKZ20} compiler to the private key Hamming LDC of \cite{OPS07} (resp. resource bounded LDCs of \cite{BlockiKZ19}) to obtain private key Insdel LDCs (resp. resource bounded Insdel LDCs) with constant rate and $\polylog(n)$ locality.

Insdel LDCs have also been recently studied in {\em computationally bounded channels}, introduced in \cite{Lipton94}. Such channels can perform a bounded number of adversarial errors, but do not have unlimited computational power as the general Hamming channels. Instead, such channels operate with bounded resources: for example, they might only behave like probabilistic polynomial time machines, or log space machines, or they may only corrupt codewords while being oblivious to the encoder's random coins, or they might have to deal with settings in which the sender and receiver exchange cryptographic keys. As expected, in many such  limited-resource settings one can construct codes with strictly better parameters than what can be done generally \cite{GopalanLD04,MicaliPSW05, Guruswami_Smith:2016, ShaltielS16}. LDCs in these channels under Hamming error were studied in \cite{OPS07, HemenwayO08, HemenwayOSW11, HemenwayOW15, BlockiGGZ19,BlockiKZ19}. 

\cite{block2021private} applied the \cite{BlockBGKZ20} compiler to the Hamming LDC of \cite{BlockiKZ19} to obtain a constant rate Insdel LDCs with $\polylog(n)$ locality for resource bounded channels.  The work of \cite{ChengLZ20}  proposes the notion of locally decodable codes with randomized encoding, in both the Hamming and edit distance regimes, and in the setting where the channel is oblivious to the encoded message, or the encoder and decoder share randomness. For edit error they obtain codes with $m=O(n)$ or $m= n \log n$ and $\polylog(n)$ query complexity. However, even in settings with shared randomness or where the channel is oblivious or resource bounded, there are no known constructions of Insdel LDCs with constant locality.
 
Locality in the study of insdel codes was also considered in   \cite{HaeuplerS18}, which constructs explicit synchronization strings that can be locally decoded.

Synchronization strings are powerful ingredients that have been used extensively in constructions of insdel codes. In fact, by combining locally decodable synchronization strings with Hamming LDCs, it seems possible to get similar reductions to Insdel LDCs as those in \cite{OPS07, BlockBGKZ20}.

\subsection{Organization}
In Section \ref{sec:prelim} we give some basic notations and lemmas. In section \ref{sec:LB-2query}, we show our lower bound for 2 query insdel LDCs. In Section \ref{sec:error}, we describe  more general error distributions and their induced properties. In Section \ref{sec:LB_obl} we show our lower bound for $q$-query insdel LDCs for the private key setting. In Section \ref{sec:LB_adv}, we show our stronger lower bound  for   $q$-query insdel LDCs. 

\section{Notation and Preliminary Lemmas}
\label{sec:prelim}
Here we present some common notation and lemmas which we use throughout our proofs.

The indices $i, j, k, \ell$ are reserved for iterators;  
$c, \alpha, \beta, \gamma, \eta$ are reserved for constants; 
$a, b, x, y, z$ are reserved for vectors or strings. For a string $y \in \set{0,1}^m$ and a subset $J \subseteq [m]$ of indices, we write $y_J \coloneqq \set{y_j \colon j \in J}$ for the restriction of $y$ to $J$.

We may assume that decoder always queries exactly $q$ indices. If some query uses a set of indices $Q' \subset [m]$ such that $|Q'| = q' < q$, we can replace $Q'$ by $Q = Q' \cup \set{j_1, \dots, j_{q'-q}}$ where choices of $j_1, \dots, j_{q'-q} \in [m] \setminus Q'$ are arbitrary. In the actual decoding, the decoder will just ignore the extra symbols. 
Given a tuple $\set{k_0, \dots, k_{q-1}}$ with $k_0 < \dots < k_{q-1}$,  we also denote it by $\tp{k_0, d_1, d_2, \dots, d_{q-1}}$ where 
$d_i = k_i - k_{i-1}$ for $i=1,2,\dots,q-1$. Note that this induces a bijection $\psi_{m,q}$ between $\+S_{m,q} = \set{(k, d_1, \dots, d_{q-1}) \colon k, d_1, \dots, d_{q-1} \ge 1, k+d_1+\dots+d_{q-1}\le m}$ and $\binom{[m]}{q}$. Sometimes we will abuse the notation and write $Q \subseteq [m]^q$ while we actually mean the image of $Q$ under $\psi_{m,q}$ (e.g. when we write $A \cap B$ where $A \subseteq \binom{[m]}{q}$ and $B \subseteq \+S_{m,q}$), and vice versa.

Given a distribution $\+D$ over some space $\Omega$, denote by $\supp (\+D) = \set{\omega \in \Omega \colon \+D(\omega) > 0}$ the \emph{support} of $\+D$.

All logs are in base 2 unless otherwise specified. We write $\+H(x) = -x\log x - (1-x)\log(1-x)$ for the binary entropy function, and we use the following upper bound (Proposition \ref{prop:entropy-estimate}). The proof can be obtained via expanding $\+H(x)$ into Taylor series around $x = 1/2$.
\begin{proposition} \label{prop:entropy-estimate}
For $x \in (0, 1/2)$, we have $\+H(1/2+x) \le 1 - (2(\ln 2)^2/3)x^2$. 
\end{proposition}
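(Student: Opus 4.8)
The plan is to follow the hint and compare $\+H(1/2+x)$ with its degree-two Taylor polynomial at the point $1/2$, controlling the remainder through an explicit bound on the second derivative. Writing $\phi(x) := \+H(1/2+x)$ (an even function, so the restriction $x \in (0,1/2)$ in the statement is harmless), I would first convert to natural logarithms, $\+H(t) = -(\ln 2)^{-1}\bigl(t\ln t + (1-t)\ln(1-t)\bigr)$, and record the elementary identities $\+H'(t) = (\ln 2)^{-1}\ln\frac{1-t}{t}$ and $\+H''(t) = -\bigl((\ln 2)\,t(1-t)\bigr)^{-1}$. These give $\phi(0) = \+H(1/2) = 1$ and $\phi'(0) = \+H'(1/2) = 0$, and, for any $|\xi| < 1/2$,
\[
\phi''(\xi) = \+H''(1/2+\xi) = \frac{-1}{(\ln 2)\bigl(\tfrac{1}{4} - \xi^2\bigr)} = \frac{-4}{(\ln 2)(1 - 4\xi^2)} \;\le\; \frac{-4}{\ln 2},
\]
the last inequality because $1 - 4\xi^2 \in (0,1]$ whenever $|\xi| < 1/2$.

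The second step is to invoke Taylor's theorem with the Lagrange remainder: for each $x \in (0,1/2)$ there is $\xi$ between $0$ and $x$ with $\phi(x) = \phi(0) + \phi'(0)\,x + \tfrac{1}{2}\phi''(\xi)\,x^2 = 1 + \tfrac{1}{2}\phi''(\xi)\,x^2$, and substituting the bound on $\phi''$ yields $\+H(1/2+x) \le 1 - \tfrac{2}{\ln 2}\,x^2$. Since $(\ln 2)^3 < 1 < 3$ we have $\tfrac{2}{\ln 2} \ge \tfrac{2(\ln 2)^2}{3}$, so this already implies the (weaker) stated inequality. A calculus-free alternative, if preferred, is to factor out $\ln\tfrac{1}{2}$ inside $\+H$ to obtain $\+H(1/2+x) = 1 - (\ln 2)^{-1}\bigl[(\tfrac{1}{2}+x)\ln(1+2x) + (\tfrac{1}{2}-x)\ln(1-2x)\bigr]$ and to observe that the bracket equals $\tfrac{1}{2}\psi(2x)$ with $\psi(u) := (1+u)\ln(1+u) + (1-u)\ln(1-u)$; since $\psi(0) = \psi'(0) = 0$ and $\psi''(u) = 2/(1-u^2) \ge 2$ on $(-1,1)$, integrating twice gives $\psi(u) \ge u^2$, hence the bracket is $\ge 2x^2$, recovering the same bound.

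I do not expect any real obstacle: both routes are short and entirely elementary. The only things that need a moment's attention are that the linear Taylor term vanishes (because $\log_2 1 = 0$), that $|\phi''|$ is monotone in $|\xi|$ so the remainder is controlled uniformly by its value at $\xi = 0$ (equivalently, $1-4\xi^2$ never leaves $(0,1]$ on the relevant interval), and---if one wants to present the stated constant as-is rather than the sharper $2/\ln 2$---the trivial numeric check $(\ln 2)^3 < 3$.
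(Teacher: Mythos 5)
Your proof is correct and takes essentially the same route the paper intends — Taylor expansion of $\mathcal{H}$ around $1/2$ with the vanishing linear term and a uniform bound on the second derivative — simply filling in the details the paper leaves as a remark. In fact you obtain the sharper constant $2/\ln 2$, which implies the stated bound since $(\ln 2)^3 < 3$.
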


\paragraph{Basic facts of Fourier analysis.}
We start with a Boolean function from $\{0, 1\}^n \rightarrow \{0, 1\}$ and transform it to the $\{1, -1\}^n \rightarrow \{1, -1\}$ space by the transformation $u\mapsto (-1)^u$ for any bit $u$ in the input or output.
 
Let $f, g$ be two Boolean functions  from Fourier space.
We define their correlation to be $\mathsf{Corr}( f, g) = |\E_x f(x)g(x)| = |\Pr_x[f(x) = g(x)] - \Pr_x[f(x) \neq g(x)]| $.  
For a function $f$, its Fourier expansion is $\sum_{S\subseteq [n]} \hat{f}_S \chi_S(x)$, where $\chi_S(x) = \prod_{i\in S} x_i$ and $\hat{f}_S= \langle f, \chi_S \rangle =  \E_x f(x)\chi_S(x)$.
By this definition, for Boolean functions $f$, we always have $|\hat{f}_S| \leq 1$, since $f(u), \chi_S(u) \in \{-1, 1\}$.

\begin{proposition} \label{prop:Fourier}

Let $f: \{-1, 1\}^n \rightarrow \{-1, 1\}$ and $C:\{-1, 1\}^n \rightarrow \{-1, 1\}^m$ be   arbitrary functions. For every $Q \subseteq \binom{[m]}{q}$, if 
\begin{align*}
\sup_{S\subseteq Q} \left|  \E_x f(x) \prod_{j\in S} y_j  \right| < \frac{\varepsilon}{2^q},
\end{align*}
where $y=C(x)$,
then for any function $g: \{-1 , 1\}^q \rightarrow \{ -1, 1\}$, $ \Pr_x\left[ g(y_Q) = f(x) \right]  < \tp{1+\eps}/2$.

\end{proposition}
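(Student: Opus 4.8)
The plan is to prove the contrapositive: assuming that some function $g\colon\{-1,1\}^q\to\{-1,1\}$ predicts $f$ with advantage at least $\eps$, i.e. $\Pr_x[g(y_Q)=f(x)]\ge(1+\eps)/2$, I will exhibit a subset $S\subseteq Q$ with $\bigl|\E_x f(x)\prod_{j\in S}y_j\bigr|\ge\eps/2^q$. The key observation is that $\Pr_x[g(y_Q)=f(x)]\ge(1+\eps)/2$ is equivalent to $\E_x\bigl[f(x)g(y_Q)\bigr]\ge\eps$, since $f$ and $g(y_Q)$ are $\{-1,1\}$-valued and $\E[fg(y_Q)]=\Pr[f=g(y_Q)]-\Pr[f\neq g(y_Q)]=2\Pr[f=g(y_Q)]-1$.

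The main step is to expand $g$ in its Fourier series over the $q$ coordinates it reads. Writing $g(z)=\sum_{T\subseteq[q]}\hat g_T\,\chi_T(z)$ for $z\in\{-1,1\}^q$, and identifying the $q$ coordinates with the elements of $Q$ (via the ordering of $Q$), we get $g(y_Q)=\sum_{S\subseteq Q}\hat g_S\prod_{j\in S}y_j$, where the sum ranges over the $2^q$ subsets $S$ of $Q$ and $\hat g_S$ is the corresponding Fourier coefficient. Therefore
\[
\eps \le \E_x\bigl[f(x)g(y_Q)\bigr] = \sum_{S\subseteq Q}\hat g_S\,\E_x\Bigl[f(x)\prod_{j\in S}y_j\Bigr].
\]
Now I bound the right-hand side: since $g$ is Boolean, $|\hat g_S|\le1$ for every $S$ (this is exactly the fact noted just before the proposition, applied to $g$), and there are exactly $2^q$ terms in the sum. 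Hence
\[
\eps \le \sum_{S\subseteq Q}\bigl|\E_x f(x)\textstyle\prod_{j\in S}y_j\bigr| \le 2^q\cdot\max_{S\subseteq Q}\bigl|\E_x f(x)\textstyle\prod_{j\in S}y_j\bigr|,
\]
which gives $\max_{S\subseteq Q}\bigl|\E_x f(x)\prod_{j\in S}y_j\bigr|\ge\eps/2^q$, contradicting the hypothesis $\sup_{S\subseteq Q}|\cdots|<\eps/2^q$. This is the contrapositive of the claimed statement, so we are done.

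I do not anticipate a genuine obstacle here; the only points requiring a little care are bookkeeping ones: (i) making sure the identification between the $q$ input coordinates of $g$ and the (ordered) set $Q\subseteq\binom{[m]}{q}$ is consistent, so that $\chi_T$ of the Fourier expansion of $g$ really does match $\prod_{j\in S}y_j$ for the corresponding $S\subseteq Q$; and (ii) noting that the empty set $S=\emptyset$ is included among the $2^q$ subsets (the constant term $\hat g_\emptyset\E_x[f(x)]$), which is harmless since $|\hat g_\emptyset|\le1$ and $|\E_x f(x)|\le1$ as well. Everything else is a one-line application of the triangle inequality and the trivial bound $|\hat g_S|\le1$ for Boolean $g$. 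If one prefers not to argue by contraposition, the same computation directly shows that the hypothesis $\sup_{S\subseteq Q}|\E_x f(x)\prod_{j\in S}y_j|<\eps/2^q$ forces $\E_x[f(x)g(y_Q)]<\eps$ for every Boolean $g$, i.e. $\Pr_x[g(y_Q)=f(x)]<(1+\eps)/2$.
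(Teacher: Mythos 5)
Your proof is correct and follows essentially the same route as the paper's: expand $g$ in its Fourier series over the $q$ queried coordinates, apply the triangle inequality and the bound $\abs{\hat{g}_S}\le 1$ to conclude $\abs{\E_x f(x)g(y_Q)} < 2^q\cdot\eps/2^q = \eps$, and translate this correlation bound into the stated probability bound. Writing it as a contrapositive (and the bookkeeping remarks) changes nothing of substance.
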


\begin{proof}
We know that $g(y_Q) = \sum_{S\subseteq [q]}\hat{g}_S \chi_S (y_Q)$. 
So
\begin{align*}
 \left| \E_x g(y_Q) f(x)\right| &=  \left| \E_x \sum_{S\subseteq [q]} \hat{g}_S \chi_S(y_Q) f(x)\right| \\
 &=  \left| \sum_{S\subseteq [q]} \E_x \hat{g}_S \chi_S(y_Q) f(x)\right| \\
 &\leq \sum_{S\subseteq [q]} \left| \E_x  \hat{g}_S \chi_S(y_Q) f(x) \right|\\
 &\leq 2^q \sup_{S\subseteq [q]} \left|\E_x  \hat{g}_S \chi_S(y_Q) f(x) \right| \\
 &< 2^q \eps/2^q = \eps.
\end{align*}

So
 $ \Pr_x\left[ g(y_Q) = f(x) \right]  < \tp{1+\eps}/2$.
\end{proof}

Our analysis is based on designing a specific error pattern and deriving the necessary properties the decoder needs to have in order to perform well against such errors. In a high level, the error pattern is going to be in the following form. Given a codeword $y \in \set{0,1}^m$, we first obtain the \emph{augmented codeword} $y' \in \set{0,1}^{2m}$ by appending $m$ bits to the end of $y$. These bits may be random, and most often they will be independent and uniformly random bits. Then the augmented codeword undergoes a random deletion process, which we describe in details later in Section~\ref{sec:LB-2query} and Section~\ref{sec:error}. For now, think of it as generating a subset $D \subseteq [2m]$ according to some distribution $\+D$, and then deleting all bits from $y'$ with indices in $D$. Finally, the string output by the deletion process is truncated at length $m$ to obtain the final output $\widetilde{y}$. We will argue that with high probability, $\widetilde{y}$ has length exactly $m$ (i.e. there are at most $m$ deletions in total) and is close to the original codeword $y$ (i.e. only a small number of deletions are introduced to the first half of $y'$). 

One would observe that we could equivalently augment the codeword to length $m$ \emph{after} the deletion process, and indeed this gives the same distribution (if the padded bits are i.i.d.). However, it turns out that our argument becomes cleaner if we view the deletions as if they also occur in the augmented part. Specifically, in the following definition we view the augmented bits as part of the codeword, as it is possible that in some situation they actually help the decoder to decode some message bits.

\begin{definition} \label{def:good-set}
For $i \in [n]$, define the set $\Goody_i$ as
%\begin{align*}
%\Goody_i \coloneqq \set{Q \in \binom{[2m]}{q} \colon \exists \text{a Boolean function } f \colon \{0, 1\}^q \to \{0, 1\} \right. \\
%	\left. \text{ such that } \Pr[f(C'(x)_Q) = x_i] \geq \frac{1}{2}+\frac{\eps}{4} },
%\end{align*}
\begin{align*}
\Goody_i \coloneqq \set{Q \in \binom{[2m]}{q} \colon \exists \text{a Boolean function } f \colon \{0, 1\}^q \to \{0, 1\} \text{ such that } \Pr[f(C'(x)_Q) = x_i] \geq \frac{1}{2}+\frac{\eps}{4} },
\end{align*}
where the probability is over the uniform distribution of all messages and any possible randomness in the padded bits.
\end{definition}

For $Q \in \binom{[2m]}{q}$, let $H_Q \subseteq [n]$ be a subset collecting all indices $i$ for which $\Goody_i$ contains $Q$. The following is a corollary to Theorem 2 in~\cite{KatzT00}.

\begin{proposition}  \label{prop:info-theory}

$\forall Q \in \binom{[2m]}{q}, \abs{H_Q} \le q/\tp{1 - \mathcal{H}(1/2+\eps/4)}$.
\end{proposition}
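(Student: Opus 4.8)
The plan is to derive Proposition~\ref{prop:info-theory} directly from the information-theoretic bound of \cite{KatzT00} (their Theorem 2), which states that if, given a fixed $q$-tuple of codeword symbols $y_Q$, one can predict each of the bits $\set{x_i : i \in H_Q}$ with advantage at least $\eps/4$ over random guessing, then $\abs{H_Q} \cdot \tp{1 - \mathcal{H}(1/2 + \eps/4)} \le H(x_{H_Q} \mid y_Q) \le H(y_Q) \le q$, where the last inequality uses that $y_Q$ consists of $q$ bits (over the binary alphabet). Rearranging gives the claimed bound.

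Concretely, first I would recall Fano's inequality in the following form: if a random variable $Z$ over $\set{0,1}$ can be predicted from $Y$ with success probability at least $1/2 + \eps/4$, then $H(Z \mid Y) \le \mathcal{H}(1/2 + \eps/4)$. For each $i \in H_Q$, the definition of $\Goody_i$ (Definition~\ref{def:good-set}) guarantees a Boolean function $f_i$ with $\Pr[f_i(C'(x)_Q) = x_i] \ge 1/2 + \eps/4$, where the probability is over the uniform choice of $x$ (and any randomness in the padding bits); hence $H(x_i \mid C'(x)_Q) \le \mathcal{H}(1/2 + \eps/4)$. Next I would use subadditivity of conditional entropy, $H(x_{H_Q} \mid C'(x)_Q) \le \sum_{i \in H_Q} H(x_i \mid C'(x)_Q) \le \abs{H_Q}\cdot \mathcal{H}(1/2+\eps/4)$. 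On the other hand, since $x$ is uniform over $\set{0,1}^n$, the bits $\set{x_i : i \in H_Q}$ are independent and uniform, so $H(x_{H_Q}) = \abs{H_Q}$; and $H(x_{H_Q}) = I(x_{H_Q}; C'(x)_Q) + H(x_{H_Q}\mid C'(x)_Q) \le H(C'(x)_Q) + H(x_{H_Q}\mid C'(x)_Q) \le q + H(x_{H_Q}\mid C'(x)_Q)$, where $H(C'(x)_Q)\le q$ because $C'(x)_Q$ is a string of $q$ bits. Combining, $\abs{H_Q} \le q + \abs{H_Q}\cdot\mathcal{H}(1/2+\eps/4)$, i.e. $\abs{H_Q}\tp{1 - \mathcal{H}(1/2+\eps/4)} \le q$, which is exactly the statement.

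The only mild subtlety — and the step I would be most careful about — is the presence of the padding randomness in the definition of $\Goody_i$: the predictor $f_i$ succeeds with advantage $\eps/4$ over the joint distribution of $x$ and the padded bits, so the Fano-type bound should be applied with $Y = C'(x)_Q$ viewed as a random variable depending on both $x$ and the padding. This does not cause any real trouble, since all of the inequalities above (Fano, subadditivity, $H(Y)\le q$) hold verbatim with respect to this joint distribution; one just needs to note that $H(x_{H_Q})=\abs{H_Q}$ still holds because $x$ is uniform and independent of the padding. Since the whole argument is a routine application of standard entropy inequalities, I would keep the write-up to a few lines, citing Fano's inequality and \cite[Theorem~2]{KatzT00} for the underlying principle rather than reproving it.
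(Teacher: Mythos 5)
Your proposal is correct and follows essentially the same argument as the paper: both bound the mutual information $I(x_{H_Q}; C'(x)_Q)$ above by $q$ and below by $(1-\mathcal{H}(1/2+\eps/4))\abs{H_Q}$ via Fano's inequality and subadditivity, then rearrange. Your extra remark about the padding randomness is a fair point of care (the paper writes $C(\*x)_Q$ where the augmented codeword is meant), but it changes nothing substantive.
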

\begin{proof}
Let $I\tp{\*x_{H_Q}; C(\*x)_{Q}}$ denote the mutual information between $\*x_{H_Q}$ and $C(\*x)_{Q}$. We have that
\begin{align*}
	I\tp{\*x_{H_Q}; C(\*x)_{Q}} \le \+H\tp{C(\*x)_{Q}} \le q.
\end{align*}
On the other hand,
\begin{align*}
	I\tp{\*x_{H_Q}; C(\*x)_{Q}} &= \+H\tp{\*x_{H_Q}} - \+H\tp{\*x_{H_Q} \mid C(\*x)_Q} \\
	&\ge \+H\tp{\*x_{H_Q}} - \sum_{i \in H_Q}\+H\tp{x_i \mid C(\*x)_Q} \\
	&\ge \tp{1 - \+H(1/2+\eps/4)} \cdot \abs{H_Q}.
\end{align*}
Rearranging gives the result.
\end{proof}

A deletion pattern is a distribution $\+D$ over subsets of $[2m]$. Let $D \subseteq [2m]$ be a set of deletions. We note that $D$ induces a strictly increasing mapping $\phi_D \colon [2m-|D|] \rightarrow [2m]$, where $\phi_D(i) = \min\set{i' \in [2m] \colon \abs{\overline{D} \cap [i']} \ge i}$, or intuitively the index of $i$ before the deletions are introduced. 

Given $Q=\set{k_0,\dots,k_{q-1}} \in \binom{[m]}{q}$, we denote $Q^D = \set{\phi_D(k_0), \dots, \phi_D(k_{q-1})}$. Note that this is always well-defined when $|D|\le m$. Most often we will work with a random $D \sim \+D$ for some deletion pattern $\+D$. In that case $Q^D$ is a random variable, and sometimes we say that $Q^D$ \emph{corresponds to $Q$} under $\+D$. If the event $Q^D \in \Goody_i$ occurs, where $Q$ is a random query of $\Dec(\cdot, m, i)$, we say that ``$\Dec(\cdot,m,i)$ {\em hits} $\Goody_i$''. In this paper this event will be independent of the string given to $\Dec$ since $\Dec$ is non-adaptive, and $\+D$ will be oblivious to the codeword.

\begin{lemma}\label{lem:hitting}
Given a $(q, \delta, \eps)$ insdel LDC, for any deletion pattern $\+D$ such that $|D \cap [m]| \le \delta m$ and $|D| \le m$ for any $D \in \supp(\+D)$, and any $i \in [n]$, the probability that $\Dec(\cdot,m,i)$ hits $\Goody_i$ is at least $3\eps/2$.
\end{lemma}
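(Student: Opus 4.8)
The plan is to show that if the decoder rarely hit $\Goody_i$, we could build a predictor for $x_i$ that beats probability $\frac12+\eps/4$ without making any queries to the "corrupted" codeword at all—contradicting the very definition of $\Goody_i$ being the only useful queries. More precisely, let me set up the two error sources the channel combines: the random deletion pattern $\+D$ (which by hypothesis keeps $|D\cap[m]|\le\delta m$ and $|D|\le m$, so the truncated string $\widetilde y$ is within edit distance $\delta\cdot 2m$ of $C(x)$ and the LDC guarantee applies), and the random i.i.d. padding bits in positions $m+1,\dots,2m$ of the augmented codeword $C'(x)$. Fix $i\in[n]$. Running $\Dec(\widetilde y,m,i)$ and picking up a random query $Q\in\binom{[m]}{q}$, the bits actually read are $\widetilde y_Q = (C'(x))_{Q^D}$ by definition of $\phi_D$ and $Q^D$ (using that $|D|\le m$, so $Q^D$ is well-defined and the first $m$ positions of $\widetilde y$ literally are $C'(x)$ restricted to the surviving indices). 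So the decoder's view is exactly $(C'(x))_{Q^D}$ together with some internal randomness, and it succeeds with probability $\ge\frac12+\eps$ over $x$, $\+D$, and its coins.

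First I would condition on the event $E$ that $\Dec(\cdot,m,i)$ hits $\Goody_i$, i.e. $Q^D\in\Goody_i$. On the complement $\bar E$, I claim the decoder cannot predict $x_i$ with advantage more than $\eps/4$. Indeed, conditioned on the realized value $Q^D = R\notin\Goody_i$ (and on the decoder's coins, which are independent of everything), the decoder's output is some fixed Boolean function of $(C'(x))_R$; by the definition of $\Goody_i$, no Boolean function $f\colon\{0,1\}^q\to\{0,1\}$ satisfies $\Pr[f((C'(x))_R)=x_i]\ge\frac12+\eps/4$, where the probability is over uniform $x$ and the padding randomness. The one subtlety: inside the conditional world $\bar E$ the distribution of $x$ (and padding) may no longer be uniform, because $E$ itself could depend on $x$ through... actually no—$Q^D$ depends only on the decoder's query distribution and on $D\sim\+D$, both of which are independent of $x$ and of the padding, since $\Dec$ is non-adaptive and $\+D$ is oblivious to the codeword. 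So conditioning on the value of $Q^D$ does not change the distribution of $(x,\text{padding})$: it remains uniform. Hence on $\bar E$ the success probability is at most $\frac12+\eps/4$.

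Now I would just combine: writing $h=\Pr[E]$ for the hitting probability, the overall success probability is at most $h\cdot 1 + (1-h)(\tfrac12+\tfrac\eps4)\le \tfrac12+\tfrac\eps4 + h$. Since this must be $\ge\frac12+\eps$, we get $h\ge \eps - \eps/4 = 3\eps/4$. Hmm—the lemma claims $3\eps/2$, so I should be more careful and not throw away so much: bound the success probability on $E$ by $1$ is wasteful, but actually on $E$ we also can't do better than... no, on $E$ it genuinely could be $1$. Let me instead split the query distribution more carefully, or track that when $Q^D\notin\Goody_i$ the decoder still only gets $\frac12+\eps/4$; the cleanest bound giving $3\eps/2$ is to observe $\frac12+\eps \le \Pr[\text{succ}\wedge E] + \Pr[\text{succ}\wedge\bar E] \le \Pr[E] + (\tfrac12+\tfrac\eps4)$, which only yields $\Pr[E]\ge \tfrac{3\eps}{4}$; to reach $\frac{3\eps}{2}$ one presumably uses that on $\bar E$ one can also run the trivial predictor "output a fixed bit" achieving $\frac12$, so the decoder's conditional success on $\bar E$ being $\le\frac12+\eps/4$ means the "gain over $\frac12$" is $\le\eps/4$ there; then $\eps \le \Pr[E]\cdot\tfrac12 + (1-\Pr[E])\cdot\tfrac\eps4$, i.e. $\eps-\tfrac\eps4 \le \Pr[E](\tfrac12-\tfrac\eps4)\le \tfrac12\Pr[E]$, giving $\Pr[E]\ge \tfrac{3\eps}{2}$. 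That is the computation I would actually carry out. The main obstacle, and the step I'd spend the most care on, is the independence argument in the previous paragraph: verifying rigorously that conditioning on the event "$\Dec$ hits $\Goody_i$" (equivalently on the value of $Q^D$) leaves $(x,\text{padding})$ uniformly distributed, which relies essentially on non-adaptivity of $\Dec$ and obliviousness of $\+D$, and on the fact that $\widetilde y_Q$ really equals $(C'(x))_{Q^D}$ on the surviving first-half indices.
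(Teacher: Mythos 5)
Your proposal is correct and follows essentially the same route as the paper: condition on the event of hitting $\Goody_i$, use the definition of $\Goody_i$ (together with non-adaptivity of $\Dec$ and obliviousness of $\+D$, so conditioning does not disturb the uniform distribution of the message and padding) to cap the conditional success probability at $\frac12+\frac\eps4$ off that event, and solve the resulting inequality. Your self-corrected final computation, $\frac{3\eps}{4}\le \Pr[E]\bigl(\frac12-\frac\eps4\bigr)$, is exactly the paper's, which yields $\Pr[E]\ge \frac{3\eps}{2-\eps}\ge\frac{3\eps}{2}$.
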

\begin{proof}
Consider a uniformly random message $x \in \set{0,1}^n$ and $y = C(x) \in \set{0,1}^m$. Let $y' \in \set{0,1}^{2m}$ be an augment of $y$, and denote by $y^D$ the string obtained by deleting from $y'$ all bits with indices in $D$ and truncating at length $m$. Formally, $y^D_j = y'_{\phi_D(j)}$ for $j=1,\dots,m$. Note that this is well defined if $|D| \le m$. 

Denote by $\+E$ the event ``$\Dec\tp{\cdot,m,i}$ hits $\Goody_i$''. Conditioned on $\overline{\+E}$, the decoder successfully outputs $x_i$ with probability at most $1/2+\eps/4$, by definition of $\Goody_i$ (even in the case where the decoder may output a random function). 

When $\abs{D \cap [m]} \le \delta m$, we have that $\ED(y, y^D) \le \delta \cdot 2m$. By definition of a $(q,\delta,\eps)$ insdel LDC, we have that 
\begin{align*}
    \frac{1}{2} + \eps &\le \Pr\left[ \Dec(y^D, m, i) = x_i \right] \\
    &\le \Pr\left[ \Dec(y^D, m, i) = x_i \mid \+E \right] \cdot \Pr\left[ \+E \right] + \Pr\left[ \Dec(y^D, m, i) = x_i \mid \overline{\+E} \right] \cdot \Pr\left[ \overline{\+E} \right] \\
    &\le \Pr\left[ \+E \right] + \tp{\frac{1}{2} + \frac{\eps}{4} } \cdot \tp{1 - \Pr\left[ \+E \right]}.
\end{align*}
All probabilities above are over $x$, $D$, the randomness of the decoder and any possible randomness in the padded bits. Rearranging gives $\Pr\left[ \+E \right] \geq 3\eps/(2-\eps) \geq 3\eps/2$. 
\end{proof}

We will write $\*U[a,b]$ for the uniform distribution over the interval $[a,b]$. For $n \in \mathbb{N}$ and $p \in [0,1]$, we will write $B(n,p)$ for the binomial distribution with $n$ trials and success probability $p$. When $p$ is a random variable with distribution $\+D$, we will denote the resulting compound distribution by $B(n, \+D)$.

We use the following anti-concentration bound for the compound distribution $B(n, \*U[a,b])$.

\begin{lemma} \label{lem:compound-anticoncentration}
Let $n \in \mathbb{N}$, and $0\le s < t \le 1$. Let $X$ be a random variable following a compound distribution $B(n, \*U[s,t])$. Then for any $0\le k \le n$, we have
\begin{align*}
	\Pr\left[ X = k \right] \le \frac{1}{\tp{t-s}\tp{n+1}}.
\end{align*}
\end{lemma}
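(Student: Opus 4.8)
$\Pr[X=k] \le \frac{1}{(t-s)(n+1)}$ for $X \sim B(n,\mathbf{U}[s,t])$.

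The plan is to write the probability mass of the compound distribution as an integral and bound it directly. First I would write
\[
\Pr[X=k] = \frac{1}{t-s}\int_s^t \binom{n}{k} p^k (1-p)^{n-k}\, dp.
\]
The goal is then to show $\int_s^t \binom{n}{k} p^k(1-p)^{n-k}\,dp \le \frac{1}{n+1}$, and since the interval $[s,t]\subseteq[0,1]$ the cleanest route is to prove the stronger statement $\int_0^1 \binom{n}{k} p^k(1-p)^{n-k}\,dp = \frac{1}{n+1}$ and then note the integrand is nonnegative so restricting to $[s,t]$ only decreases the integral.

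The key computation is the classical Beta-integral identity: $\int_0^1 p^k (1-p)^{n-k}\, dp = B(k+1, n-k+1) = \frac{k!\,(n-k)!}{(n+1)!}$. Multiplying by $\binom{n}{k} = \frac{n!}{k!(n-k)!}$ gives exactly $\frac{n!}{k!(n-k)!}\cdot\frac{k!(n-k)!}{(n+1)!} = \frac{1}{n+1}$. I would prove this Beta identity either by citing it as standard, or self-containedly by induction on $k$ via integration by parts (the boundary terms vanish since $0\le k\le n$ guarantees positive exponents at both endpoints, or the single surviving term telescopes). Alternatively — and perhaps most elegant for a CS audience — one can give the combinatorial proof: $\int_0^1 \binom{n}{k} p^k(1-p)^{n-k}\, dp$ is the probability that, among $n+1$ i.i.d.\ uniform points in $[0,1]$, exactly $k$ of the last $n$ fall below the first one; by exchangeability each of the $n+1$ points is equally likely to be the $(k+1)$-th smallest, giving $\frac{1}{n+1}$.

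Putting it together: $\Pr[X=k] = \frac{1}{t-s}\int_s^t \binom{n}{k}p^k(1-p)^{n-k}\,dp \le \frac{1}{t-s}\int_0^1 \binom{n}{k}p^k(1-p)^{n-k}\,dp = \frac{1}{(t-s)(n+1)}$, as claimed. There is no real obstacle here — the only thing to be careful about is that the bound is stated with a general sub-interval $[s,t]$ of $[0,1]$ (not all of $[0,1]$), so one must explicitly invoke nonnegativity of the integrand to pass from $[s,t]$ to $[0,1]$; and one should double-check the edge cases $k=0$ and $k=n$, where the Beta identity still holds since the surviving exponent ($n$) is positive (and the degenerate case $n=0$, $k=0$ gives $\Pr[X=0]=1 \le \frac{1}{t-s}$, consistent since $t-s\le 1$).
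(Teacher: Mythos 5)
Your proof is correct. It differs from the paper's in one small but real way: the paper only enlarges the integration range from $[s,t]$ to $[0,t]$, keeping the upper limit $t$, and then bounds $I_k = \binom{n}{k}\int_0^t x^k(1-x)^{n-k}\,dx$ by an integration-by-parts recurrence $I_k = \frac{1}{n+1}\binom{n+1}{k+1}t^{k+1}(1-t)^{n-k} + I_{k+1}$, telescoping and invoking the binomial theorem to get $I_k \le \frac{1}{n+1}$ (in effect, a bound on the incomplete Beta integral). You instead enlarge all the way to $[0,1]$, where the complete Beta identity gives the exact value $\binom{n}{k}\int_0^1 p^k(1-p)^{n-k}\,dp = \frac{1}{n+1}$. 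Since only the upper bound $\frac{1}{(t-s)(n+1)}$ is needed, your route is shorter and avoids the telescoping computation entirely; the paper's version yields slightly more information (an explicit bound depending on $t$), but none of that extra precision is used anywhere. Your attention to the nonnegativity of the integrand when enlarging the domain, and to the edge cases $k=0$, $k=n$, is exactly the right level of care; either of your two proofs of the Beta identity (induction by parts, or the order-statistics argument with $n+1$ i.i.d.\ uniforms) would be acceptable to include.
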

\begin{proof}
We can explicitly write the probability as
\begin{align*}
	\Pr\left[ X = k \right] = \frac{1}{t-s}\int_{s}^{t}\binom{n}{k} x^k\tp{1-x}^{n-k} \dif x \le \frac{1}{t-s}\int_{0}^{t}\binom{n}{k} x^k\tp{1-x}^{n-k} \dif x.
\end{align*}
Denoting
\begin{align*}
I_k = \binom{n}{k}\int_{0}^{t}x^k(1-x)^{n-k}\dif x,
\end{align*}
we are going to show that $I_k \le 1/(n+1)$. Integration by parts gives
\begin{align*}
I_k &= \frac{1}{k+1}\binom{n}{k}\tp{x^{k+1}(1-x)^{n-k} \bigg\vert_{0}^{t} + (n-k)\int_{0}^{t}x^{k+1}(1-x)^{n-k-1}\dif x} \\
&= \frac{1}{k+1}\binom{n}{k}t^{k+1}(1-t)^{n-k} + \frac{n-k}{k+1}\binom{n}{k}\int_{0}^{t}x^{k+1}(1-x)^{n-k-1}\dif x \\
&= \frac{1}{n+1}\binom{n+1}{k+1}t^{k+1}(1-t)^{n-k} + \binom{n}{k+1}\int_{0}^{t}x^{k+1}(1-x)^{n-k-1}\dif x \\
&= \frac{1}{n+1}\binom{n+1}{k+1}t^{k+1}(1-t)^{n-k} + I_{k+1}.
\end{align*}
Therefore by telescoping and the Binomial Theorem, we have
\begin{align*}
I_k = \sum_{j=k}^{n}\tp{I_k - I_{k+1}} = \frac{1}{n+1}\sum_{j=k+1}^{n+1}\binom{n+1}{j}t^j(1-t)^{n+1-j} \le \frac{1}{n+1}.
\end{align*}
\end{proof}

\section{Bounds for 2-query Insdel LDCs} \label{sec:LB-2query}
In this section, we prove lower bounds for $2$-query insdel LDCs (Theorem~\ref{thm:linearLDC} and Theorem~\ref{thm:2QueryGeneralLowerBound}). 

We start by describing the error pattern. It is defined via the following random deletion process which is applied to the augmented codeword described in the last section i.e., we obtain the augmented codeword by appending $m$ bits to the end of the codeword. Recall that  after applying the random deletions below we can always truncate the final string back down to $m$ bits.

\paragraph{Description of the error distribution}  
\begin{description}
    \item[Step 1] Pick a real number $\beta\in [\frac{\delta}{8}, \frac{\delta}{4}]$ uniformly at random and then delete each bit $j \in [2m]$ independently with probability $\beta$.
    
    \item[Step 2] Pick an integer $e_2 \in \set{0,1,\dots,\floor{\frac{\delta m }{4}}}$ uniformly at random and delete the first $e_2$ bits.
\end{description}

We remark that equivalently, the process can be thought of as maintaining a subset $D \subseteq [2m]$ of deletions and updating $D$ in each step, and nothing is really deleted until the end of the process. We will sometimes take this view in later discussions. However, for readability we omitted the details as to how this set is updated.

The following proposition bounds the number of deletions introduced by the process.
\begin{proposition} \label{prop:total-error-bound-2query}
Let $D \subseteq [2m]$ be a set of deletions generated by the process. Then we have 
\begin{itemize}
\item $\Pr\left[ |D \cap [m]| > \delta m \right] \le 2^{-\Omega(m)}$,
\item $\Pr\left[ |D| > m \right] \le 2^{-\Omega(m)}$.
\end{itemize}
\end{proposition}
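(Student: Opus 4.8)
The plan is a routine Chernoff / stochastic-domination argument; the only care needed is in conditioning on the random parameter $\beta$ and in tracking which deletions land in the first half $[m]$.

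First I would decompose the generated set as $D = D_1 \cup D_2$, where $D_1$ is the random set of indices deleted in Step 1 and $D_2$ the set deleted in Step 2. By construction $|D_2| \le \floor{\delta m/4} \le \delta m/4$ deterministically, so $|D\cap[m]| \le |D_1\cap[m]| + \delta m/4$ and $|D| \le |D_1| + \delta m/4$. Conditioned on any fixed choice of $\beta \in [\frac{\delta}{8},\frac{\delta}{4}]$, the events $\set{j \in D_1}$ for $j \in [2m]$ are independent and each has probability $\beta$, so $|D_1\cap[m]|$ is distributed as $B(m,\beta)$ and $|D_1|$ as $B(2m,\beta)$; since $\beta \le \delta/4$, the first is stochastically dominated by $B(m,\delta/4)$ and the second by $B(2m,\delta/4)$. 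These domination statements hold uniformly over $\beta$, so after integrating out $\beta$ the same tail bounds apply unconditionally.

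For the first bullet it then suffices to bound $\Pr[|D_1\cap[m]| > \tfrac{3\delta m}{4}]$. The dominating variable $B(m,\delta/4)$ has mean $\delta m/4$, and $\tfrac{3\delta m}{4}$ is three times this mean, so the multiplicative Chernoff bound gives $\exp(-\Omega(\delta m)) = 2^{-\Omega(m)}$, the constant depending on $\delta$. For the second bullet, using $\delta \le 1$ we have $m - \delta m/4 \ge \tfrac{3m}{4}$, so it suffices to bound $\Pr[|D_1| \ge \tfrac{3m}{4}]$. Here the dominating variable $B(2m,\delta/4)$ has mean $\delta m/2 \le m/2$, and $\tfrac{3m}{4} \ge \tfrac{3}{2}\cdot\tfrac{m}{2}$ exceeds $\tfrac32$ times the mean, so again the multiplicative Chernoff bound yields $\exp(-\Omega(m)) = 2^{-\Omega(m)}$.

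There is essentially no genuine obstacle; the two points that warrant attention are (i) that the Chernoff estimate must be made conditionally on $\beta$ and then combined over $\beta$ (which is immediate from the monotonicity $\beta \le \delta/4$), and (ii) that the Step-2 deletions contribute only the deterministic additive term $\delta m/4$ to $|D\cap[m]|$ and to $|D|$, regardless of exactly which positions they occupy. The implicit constants in the $\Omega(\cdot)$ depend on $\delta$ but not on $\eps$ or $m$.
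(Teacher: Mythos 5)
Your proposal is correct and follows essentially the same route as the paper: split off the at most $\delta m/4$ deterministic Step-2 deletions, condition on the value of $\beta \le \delta/4$, and apply a standard concentration bound to the independent Step-1 deletions restricted to $[m]$ and to $[2m]$. The only cosmetic difference is that you invoke the multiplicative Chernoff bound (via stochastic domination by $B(m,\delta/4)$ and $B(2m,\delta/4)$) where the paper uses Hoeffding's additive inequality; both give the claimed $2^{-\Omega(m)}$ bounds.
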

\begin{proof}
Let $D_1 \subseteq D$ be the subset of deletions introduced during Step 1. Since Step 2 introduces at most $\delta m/4$ deletions, it suffices to upper bound the probabilities of $\abs{D_1 \cap [m]} > 3\delta m/4$ and $\abs{D_1} > 3m/4$. Moreover, it suffices to prove the upper bounds for any fixed $\beta \in [\delta/8, \delta/4]$ picked in Step 1. 

For the first item, notice that each bit $j \in [m]$ is deleted independently with probability $\beta \le \delta/4$. Thus by Hoeffding's inequality
\begin{align*}
    \Pr\left[ \abs{D_1 \cap [m]} \ge \tp{\frac{\delta}{4} + \frac{\delta}{2}}m \right] \le \exp\tp{-\frac{\delta^2 m}{2}} = 2^{-\Omega(m)}.
\end{align*}
The proof of the second item follows similarly from Hoeffding's inequality
\begin{align*}
    \Pr\left[ \abs{D_1} \ge \frac{3m}{4} \right] \le \Pr\left[ \abs{D_1} \ge \tp{\frac{\delta}{4} + \frac{1}{8}} \cdot 2m \right] \le \exp\tp{-2\tp{\frac{1}{8}}^2 \cdot 2m} = 2^{-\Omega(m)}.
\end{align*}
\end{proof}

In the following lemma, we fix an arbitrary query $\set{k, \ell} \in \binom{[m]}{2}$ of the decoder, with $ k< \ell$, and represent it as $(k, d)$ where $d=\ell-k$. 

Let $(k', d') \in [2m] \times [2m]$ be the random pair that corresponds to $(k,d)$ under the error distribution (see the discussion before Lemma~\ref{lem:hitting}). It should be clear that we always have $k' \geq k$ and $d' \geq d$. We prove some properties of the distribution of $(k', d')$.

\begin{lemma}\label{lem:distribution}
There exist two constants $c=c(\eps)> 1$ and $c'=c'(\delta)>0$ such that the following holds.
\begin{itemize}
\item The distribution of $(k', d')$ is concentrated in the set $[2m] \times [d, c d]$ with probability $1-\eps$. 
\item Any support of $(k', d')$ has probability at most $\frac{c'}{md}$. 
\end{itemize}
\end{lemma}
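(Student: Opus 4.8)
\textbf{Proof plan for Lemma~\ref{lem:distribution}.}
The plan is to analyze the two coordinates $k'$ and $d'$ separately, exploiting the fact that the deletion process from Step~1 treats disjoint intervals independently. Write $d' = d + X$ where $X$ is the number of bits deleted in Step~1 among the $d$ codeword positions strictly between $k$ and $\ell$ (Step~2 only affects the prefix, hence only $k'$, not $d'$). Conditioned on the value $\beta$ chosen in Step~1, $X$ follows $B(d,\beta)$; unconditionally, $X \sim B(d, \*U[\delta/8,\delta/4])$. First I would handle the anti-concentration bound: by Lemma~\ref{lem:compound-anticoncentration}, for every integer $t$ we have $\Pr[X = t] \le \frac{1}{(\delta/8)(d+1)} = O_\delta(1/d)$. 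The pair $(k',d')$ is then determined by $d'$ together with $k'$; for a fixed outcome $d' = d+t$, the value of $k'$ is controlled by the number of Step~1 deletions in $[k-1]$ plus the Step~2 shift $e_2$, and since $e_2$ is uniform over an interval of length $\floor{\delta m/4} + 1 = \Omega(\delta m)$ and independent of everything relevant, conditioning on $d'$ (and on the Step~1 deletions in $[k-1]$) the value $k'$ is attained with probability $O(1/(\delta m))$. Multiplying the two gives $\Pr[(k',d') = (k^\star, d^\star)] = O\!\left(\frac{1}{\delta m} \cdot \frac{1}{\delta d}\right)$, which is the desired $\frac{c'}{md}$ after absorbing the $\delta$-dependence into $c' = c'(\delta)$. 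One must be slightly careful that the Step~1 deletions in the three regions $[k-1]$, $(k,\ell)$, and the rest are mutually independent given $\beta$ but not unconditionally; the clean way is to first condition on $\beta$, prove the per-$\beta$ bound $\Pr[k' = k^\star, d' = d^\star \mid \beta] \le O(1/(\delta m)) \cdot \binom{d}{t}\beta^t(1-\beta)^{d-t}$, and then integrate over $\beta \in \*U[\delta/8,\delta/4]$, at which point the integral $\frac{8}{\delta}\int_{\delta/8}^{\delta/4}\binom{d}{t}\beta^t(1-\beta)^{d-t}\,d\beta$ is exactly the quantity bounded by $O_\delta(1/d)$ in Lemma~\ref{lem:compound-anticoncentration}.

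For the concentration statement, I would show $\Pr[d' > cd] \le \eps$ for a suitable constant $c = c(\eps) > 1$. Since $d' = d + X$ with $X \sim B(d, \*U[\delta/8,\delta/4])$, we have $X \le B(d, \delta/4)$ stochastically, so $\Ex{X} \le \delta d/4$ and, e.g. by Hoeffding or simply by $\Pr[X > \lambda d] \le \Ex{X}/(\lambda d) \le \delta/(4\lambda)$ (Markov), choosing $\lambda = \delta/(4\eps)$ gives $\Pr[X > \lambda d] \le \eps$. Then $c = 1 + \delta/(4\eps)$ works; note $c$ depends on both $\delta$ and $\eps$, which is consistent with the statement's $c = c(\eps)$ once $\delta$ is regarded as fixed, or one can simply take $c = c(\delta,\eps)$ throughout. (Using Hoeffding instead of Markov would even give $\Pr[d' > 2d] \le \exp(-\Omega(d))$ once $d$ is large, with a separate trivial argument for small $d$, but Markov suffices for a constant-probability bound.) The first coordinate $k'$ ranges freely over $[2m]$ and needs no concentration control, so the event $\{(k',d') \in [2m]\times[d,cd]\}$ has probability at least $1 - \eps$ as claimed; note $d' \ge d$ always holds since Step~1 can only delete bits, never insert.

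The main obstacle is the dependence structure in the first bullet: one needs the bound on $\Pr[(k',d')=(k^\star,d^\star)]$ to factor as (mass on $d'$) $\times$ (mass on $k'$ given $d'$), and this requires carefully isolating which randomness affects which coordinate. The cleanest route, as sketched, is to condition on $\beta$ first (making Step~1 deletions in disjoint intervals genuinely independent), bound the conditional joint mass as a product, and only then integrate over $\beta$ — pushing the $\beta$-integration through exactly the one factor (the one over the interval $(k,\ell)$) to which Lemma~\ref{lem:compound-anticoncentration} applies, while the $k'$-factor's $O(1/(\delta m))$ bound from the uniform $e_2$ is already uniform in $\beta$ and survives the integration untouched. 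Everything else is routine: Step~2 contributes only the clean uniform shift that flattens $k'$, and Step~1's compound structure is exactly what Lemma~\ref{lem:compound-anticoncentration} was set up to exploit.
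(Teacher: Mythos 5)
Your proposal follows essentially the same route as the paper's proof: the uniform shift $e_2$ flattens $k'$ to mass $O(1/(\delta m))$, the compound anti-concentration bound (Lemma~\ref{lem:compound-anticoncentration}) flattens $d'$ to mass $O(1/(\delta d))$, and the upper tail of $d'$ is handled by a standard tail bound (the paper uses Hoeffding with $c=4\ln(1/\eps)$; your Markov argument also suffices for this lemma, with $c$ depending on $\delta$ as well, which is harmless downstream). Your instinct to condition on $\beta$ first, factor the conditional probability, and only then integrate $\beta$ through the middle-interval factor is exactly the right way to handle the dependence induced by the shared $\beta$; the paper instead sums over the intermediate shifted positions and then averages over $e_2$, and your ordering is, if anything, a slightly cleaner writeup of the same computation.

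One correction is needed, though it does not change the bounds: the deletions that determine $d'$ are those occurring among the (pre-deletion) positions strictly between $k'$ and $k'+d'$ in the \emph{original} augmented codeword, an interval of length $d'-1$, not ``the $d$ positions between $k$ and $\ell$.'' So conditioned on $\beta$, the law of $d'-d$ is negative binomial (failures before the $d$-th survivor), and for a \emph{fixed} target value $d^\star = d+t$ the relevant count is $B(d^\star-1,\beta)$ evaluated at $t$ (times a retention factor), not $B(d,\beta)$ evaluated at $t$. This is only a bookkeeping fix: applying Lemma~\ref{lem:compound-anticoncentration} with $n=d^\star-1 \ge d-1$ gives $\le (8/\delta)\cdot \frac{1}{d^\star} \le (8/\delta)\cdot\frac{1}{d}$, so the joint bound $O\bigl(\frac{1}{\delta^2 md}\bigr)$ survives, and for the upper tail Markov applied to the correct mean $d\beta/(1-\beta) \le \delta d/3$ still yields a constant $c$. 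With that identification repaired, your argument matches the paper's Lemma~\ref{lem:distribution}.
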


\begin{proof}
We prove the concentration result first. We will fix an arbitrary $\beta \in [\delta/8, \delta/4]$. By Hoeffding's inequality, we can take $c_0 = \sqrt{\ln(1/\eps)/2}$ such that for any $n \in \mathbb{N}$ and $p \in [0,1]$,
\begin{align*}
	\Pr_{Y \sim B(n, p)}\left[ Y \ge pn + c_0\sqrt{n}\right] \le \eps.
\end{align*}
Take $c = 8c_0^2 = 4\ln(1/\eps)$. Then $c > 1+\tp{c_0/(1-\beta)}^2$ for any $\beta \le \delta/4 < 1/2$. Let $X$ denote the number of deletions occurred in $[d+1, cd]$, which follows a binomial distribution $B((c-1)d, \beta)$. Then by the choice of $c_0$ we have
\begin{align*}
\Pr\left[ d' > cd \right] \le \Pr\left[ X \ge (c-1)d \right] \le \Pr\left[ X \ge \beta(c-1)d + c_0\sqrt{(c-1)d} \right] \le \eps.
\end{align*}
Note that this holds for any choice of $\beta \le \delta/4$, and thus the concentration result follows.

Now we turn to the anti-concentration result. Denote by $k' \mapsto k$ the event that the $k'$-th bit is retained and has index $k$ after the deletion, and denote by $\tp{k', d'} \mapsto \tp{k, d}$ the event $\tp{k' \mapsto k} \land \tp{k'+d' \mapsto k+d}$. 

Write $\Pr_{S_1}[\cdot]$ for the error distribution after Step 1. Let $X$ be the number deletions occurred in $\set{k'+1, \dots, k'+d'-1}$, which follows a compound distribution $B(d'-1, \*U[\delta/8, \delta/4])$. We have
\begin{align*}
\sum_{k''=0}^{k'}\Pr_{S_1}\left[ \tp{k', d'} \mapsto \tp{k'', d} \right] &= \sum_{k''=0}^{k'} \Pr_{S_1}\left[ k' \mapsto k'' \right] \cdot \Pr_{S_1}\left[ k'+d' \mapsto k''+d \ \middle| \ k' \mapsto k'' \right] \\
&= \sum_{k''=0}^{k'} \Pr_{S_1}\left[ k' \mapsto k'' \right] \cdot \Pr_{S_1}\left[ X = d'-d \right] \cdot \Pr_{S_1}\left[ k'+d'\textup{ is retained} \right] \\
&\le \Pr_{S_1}\left[k'\textup{ is retained} \right] \cdot \frac{8}{\delta} \cdot \frac{1}{d'} \\
&\le \frac{8}{\delta} \cdot \frac{1}{d'}.
\end{align*}

Here the first inequality is due to Lemma~\ref{lem:compound-anticoncentration}. Finally, averaging over $e_2$ gives
\begin{align*}
	\Pr\left[ \tp{k',d'} \mapsto \tp{k,d} \right] &= \frac{8}{\delta m}\sum_{e_2 = 0}^{\delta m/8}\Pr_{S_1}\left[ \tp{k', d'} \mapsto \tp{k+e_2, d} \right] \\
	&\le \frac{8}{\delta m} \sum_{k''=0}^{k'} \Pr_{S_1}\left[ \tp{k',d'}\mapsto \tp{k'', d}\right] \\
	&\le \frac{8}{\delta m} \cdot \frac{8}{\delta} \cdot \frac{1}{d} \\
	&= \frac{64}{\delta^2} \cdot \frac{1}{md}.
\end{align*}
Therefore we can take $c' = 64/\delta^2$.
\end{proof}

\iffalse
\begin{proof}[Sketch]
Let $e_1=\beta m$ for some fixed constant $\beta>0$ (note $\beta \in [\delta/8, \delta/4]$), then after randomly deleting $e_1$ bits the expected value of $d'$ is $\frac{d}{1-\beta}$, with a probability of $\Theta(\frac{1}{\sqrt{d}})$ for any $d'$ around the $\frac{d}{1-\beta} \pm \Theta(\sqrt{d})$ range, and then decays exponentially. By choosing $e_1 \in [\delta m/8, \delta m/4]$, we flatten the probability distribution, so that the maximum probability of any support is $\Theta(\frac{1}{\sqrt{d}}) \Theta(\frac{1}{m}) \Theta(\frac{m}{\sqrt{d}})=\Theta(\frac{1}{d})$. Furthermore when $d' > c d$ for some constant $c >1$ the probability decays exponentially and so the probability mass larger than $d'$ is at most $\eps$. Now we apply $e_2 \in [0, \delta m/4]$ deletions at the beginning, which effectively adds $e_2$ to $k'$. Thus any support of $(k', d')$ in $[m] \times [d, c d]$ has probability at most $\Theta(\frac{1}{md})$.
\end{proof}
\fi

Before proceeding to prove the main theorems, we provide a dictionary of notations to facilitate the readers.

\paragraph{Notations.}

The sets $S_i, S, T, U_i, V_i$ are subsets of $[m]$, where $S_i, S, T, U_i$ are used to denote some set of the first indices (namely $k$ for a pair $\set{k, \ell}$ with $k<\ell$), and $V_i$ is used to denote some set of the second indices (namely $\ell$ for a pair $\set{k, \ell}$ with $k<\ell$). We have the following relation: $\forall i, U_i \subseteq T \subseteq S$.  

The set $\Goody_i$ is defined in Definition~\ref{def:good-set}. The sets $P_j, Q_i$ are subsets of $[2m] \times [2m]$, i.e., subsets of the pairs of indices that may or may not be used in the query. $j$ is reserved for the index of some $P_j$.

The set $G_{k,i}$ is a subset of $[n]$, i.e., a subset of some indices of the message bits.

We recall the statement of our main theorem for $2$-query linear insdel LDC.

\thmLinearLDC*

To prove this theorem we first establish the following claim, which works for any $(2, \delta, \eps)$ insdel (even non-linear/affine) LDC. Let $c$ be the constant from Lemma~\ref{lem:distribution}. Consider all pairs of the form $(k, d)$ in $[2m] \times [2m]$, and partition them into $t=\ceil{\log_c (2m)} = O(\log m)$ subsets $\{P_j\}$, where for any $j \in [t]$, $P_j=[2m] \times [c^{j-1}, c^j)$. 

\begin{claim}\label{clm:fraction}
There exists a constant $\gamma=\gamma(\delta, \eps) \leq 1$ such that the following holds for any $(2, \delta, \eps)$ insdel LDC. For any $i \in [n]$, there exists a $j \in [t]$ such that $|P_j \cap \Goody_i| \geq \gamma m c^j$.
\end{claim}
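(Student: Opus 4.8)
The plan is to combine the hitting lemma (Lemma~\ref{lem:hitting}) with the distributional properties of the induced pair (Lemma~\ref{lem:distribution}). First I would fix $i \in [n]$ and run the $2$-query decoder $\Dec(\cdot, m, i)$ against the error distribution described in this section. Since the distribution deletes at most $\delta m$ bits in the first half and at most $m$ bits total (except with probability $2^{-\Omega(m)}$, by Proposition~\ref{prop:total-error-bound-2query}), Lemma~\ref{lem:hitting} applies and tells us that $\Dec(\cdot,m,i)$ hits $\Goody_i$ with probability at least $3\eps/2$. Here "hitting" means: if $\{k,\ell\}$ is the random query of $\Dec(\cdot,m,i)$ and $(k,d)$ is its representation with $d = \ell - k$, then the random pair $(k',d')$ corresponding to $(k,d)$ under the error distribution lands in $\Goody_i$ with probability $\ge 3\eps/2$, where the randomness is over both the decoder's choice of query and the deletion process.

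Next I would decompose this hitting probability over the decoder's queries. Write $\mu$ for the distribution of the query $(k,d)$ chosen by $\Dec(\cdot,m,i)$, so that
\begin{align*}
\sum_{(k,d)} \mu(k,d) \cdot \Pr\left[(k',d') \in \Goody_i \mid (k,d)\right] \ge \frac{3\eps}{2}.
\end{align*}
By Lemma~\ref{lem:distribution}, for a fixed query $(k,d)$, the corresponding pair $(k',d')$ lies in $[2m]\times[d,cd]$ with probability $\ge 1-\eps$, and each atom of its distribution has mass at most $c'/(md)$. Therefore
\begin{align*}
\Pr\left[(k',d') \in \Goody_i \mid (k,d)\right] \le \eps + \Pr\left[(k',d') \in \Goody_i \cap \left([2m]\times[d,cd]\right) \mid (k,d)\right] \le \eps + \frac{c'}{md}\cdot \abs{\Goody_i \cap \left([2m]\times[d,cd]\right)}.
\end{align*}
Since $d \in [c^{j-1},c^j)$ for exactly one $j$, the interval $[d,cd]$ is contained in $[c^{j-1},c^{j+1})$, which is covered by $P_j \cup P_{j+1}$; hence $\abs{\Goody_i \cap ([2m]\times[d,cd])} \le \abs{\Goody_i \cap P_j} + \abs{\Goody_i \cap P_{j+1}}$, and also $1/d \le c^{2-j} \cdot (1/c^{j})\cdot c^j \le c^{2}/c^{j}$ so $c'/(md) \le (c' c^2)/(m c^j)$ (up to the constant shift between $j$ and $j+1$, absorbed into constants).

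Plugging this back in and using that $\sum_{(k,d)}\mu(k,d)=1$ gives
\begin{align*}
\frac{3\eps}{2} \le \eps + \max_{(k,d) \in \supp(\mu)}\left( \frac{c' c^2}{m c^{j(k,d)}}\left(\abs{\Goody_i\cap P_{j(k,d)}} + \abs{\Goody_i \cap P_{j(k,d)+1}}\right)\right),
\end{align*}
where $j(k,d)$ is the index with $d \in [c^{j-1},c^j)$. Rearranging, there is a query $(k,d)$ in the support of $\mu$ such that $\abs{\Goody_i\cap P_{j(k,d)}} + \abs{\Goody_i\cap P_{j(k,d)+1}} \ge \frac{\eps}{2}\cdot \frac{m c^{j(k,d)}}{c'c^2}$, so one of $P_{j(k,d)}, P_{j(k,d)+1}$ contains at least $\frac{\eps}{4c'c^2}\cdot m c^{j(k,d)}$ good pairs; since $c^{j(k,d)} \ge c^{j-1}$ for the larger index too, this gives a $j \in [t]$ with $\abs{P_j \cap \Goody_i} \ge \gamma m c^j$ for $\gamma = \frac{\eps}{4c'c^3}$ (adjusting constants for the $j\to j+1$ shift), proving the claim.

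The main obstacle is bookkeeping the index shift between the window $[d,cd]$ and the dyadic-like partition $\{P_j\}$: the interval $[d,cd]$ straddles at most two consecutive blocks $P_j, P_{j+1}$, so one must be slightly careful to charge the good pairs to a single block and to track how the factor $1/d$ versus $1/c^j$ and $c^j$ versus $c^{j+1}$ propagate into the final constant $\gamma$. None of this is deep — it only requires choosing $c$ from Lemma~\ref{lem:distribution} and then setting $\gamma$ as an explicit function of $\eps$ and $c' = c'(\delta)$ — but it is the step where an off-by-a-factor-of-$c$ error is easiest to make, so I would be deliberate about which of $P_j$ or $P_{j+1}$ receives the charge and absorb the resulting constant losses into $\gamma$ at the very end.
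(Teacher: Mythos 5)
Your proposal is correct and follows essentially the same route as the paper: use Proposition~\ref{prop:total-error-bound-2query} together with Lemma~\ref{lem:hitting} to get a constant hitting probability, average over the decoder's query distribution to isolate a single query $(k,d)$, apply both items of Lemma~\ref{lem:distribution} to lower-bound $\abs{\Goody_i \cap ([2m]\times[d,cd])}$ by $\Omega(\eps m d / c')$, and then split $[d,cd]$ across $P_j \cup P_{j+1}$, absorbing the $c$-factor shift into $\gamma$. The only (harmless) difference is that the paper conditions explicitly on the event of Proposition~\ref{prop:total-error-bound-2query} and works with hitting probability $5\eps/4$ instead of your $3\eps/2$, which merely changes the final constant.
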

\begin{proof}
Fix any $i \in [n]$. Let $D \subseteq [2m]$ be a random set of deletions generated by the random process. By Proposition~\ref{prop:total-error-bound-2query}, with probability $1-2^{-\Omega(m)} \geq 1-\eps/4$ for any large enough $n$ (and thus also $m$), we have that $\abs{D \cap [m]} \le \delta m$ and $\abs{D} \le m$. Conditioned on this event, $\Dec(\cdot,m,i)$ hits $\Goody_i$ with probability at least $3\eps/2$ by Lemma~\ref{lem:hitting}. Therefore, unconditionally the probability that $\Dec(\cdot,m,i)$ hits $\Goody_i$ is at least $3\eps/2-\eps/4 = 5\eps/4$

(if $|D| > m$ we simply assume that $\Dec(\cdot,m,i)$ never hits $\Goody_i$). This implies that for at least one $(k, d)$ in the support of the queries of $\Dec(\cdot,m,i)$, the corresponding pair $(k', d')$ hits $\Goody_i$ with probability at least $5\eps/4$.

Now by the first item of Lemma~\ref{lem:distribution} and a union bound, $\Dec(\cdot, m, i)$ queries a pair in $\Goody_i \cap \tp{[2m] \times [d, cd]}$ with probability at least $5\eps/4-\eps=\eps/4$. By the second item of Lemma~\ref{lem:distribution}, we must have 
\begin{align*}
\abs{\Goody_i \cap \tp{[2m] \times [d, cd]}} \ge \frac{\eps md}{4 c'}.
\end{align*} 
Choose $j'$ such that $c^{j'-1} \leq d < c^{j'}$. Noticing that $[2m] \times [d, c d] \subseteq P_{j'} \cup P_{j'+1}$, for some $j \in \set{j',j'+1}$ we must have $\abs{\Goody_i \cap P_j} \ge \eps md/(8c')$. Since $d \ge c^{j'-1} \ge c^{j-2}$, we can choose $\gamma=\eps/(8 c' c^2)$ and the claim follows.
\end{proof}

By the definition of $\Goody_i$ and Proposition~\ref{prop:Fourier}, if a pair $\set{k, \ell} \in \Goody_i$ ($k < \ell$), then one of the following cases must happen: (1) $x_i$ has correlation at least $\eps/8$ with $y_k$; (2) $x_i$ has correlation at least $\eps/8$ with $y_\ell$; and (3) $x_i$ has correlation at least $\eps/8$ with $y_k \oplus y_\ell$. However, notice that the code is a linear or affine code, thus every bit in $C(x)$ is a linear or affine function of $x$, which has correlation either $1$ or $0$ with any $x_i$. Furthermore the inserted bits are independent, uniform random bits. Therefore in any of these cases, the correlation must be $1$ and the bits involved must not contain any inserted bit.

Thus, for any $i \in [n]$ and the corresponding $j \in [t]$ guaranteed by Claim~\ref{clm:fraction}, by averaging we also have three cases: (1) $P_j$ has at least $\gamma m c^j/4$ pairs such that the first bit has correlation $1$ with $x_i$; (2) $P_j$ has at least $\gamma m c^j/4$ pairs such that the second bit has correlation $1$ with $x_i$; and (3) $P_j$ has at least $\gamma m c^j/2$ pairs such that the parity of the pair of bits has correlation $1$ with $x_i$.

By another averaging, we now have two cases: either (a) at least $n/4$ of the message bits fall into case (1) or (2) above, or (b) at least $n/2$ of the message bits fall into case (3) above. We prove Theorem~\ref{thm:linearLDC} in each case.

\begin{proof}[Proof of Theorem~\ref{thm:linearLDC} in case (a)]
In this case, without loss of generality assume that there is a subset $I \subseteq [n]$ with $|I| \geq n/4$ such that for any $i \in I$, the corresponding $P_j$ has at least $\gamma m c^j/4$ pairs such that the first bit has correlation $1$ with $x_i$. Notice that any bit in $C(x)$ can be the first bit for at most $c^j$ pairs in $P_j$, this means that there must be at least $\gamma m /4$ different bits in $C(x)$ that has correlation $1$ with $x_i$. Let this set be $V_i$ and we have $|V_i| \geq \gamma m /4$.

Since for each $i \in I$ we have such a set $V_i$, and these sets must be disjoint (a bit cannot simultaneously have correlation $1$ with $x_i$ and $x_{i'}$ if $i \neq i'$), we have

\[\frac{n}{4} \cdot \frac{\gamma m}{4} \leq \sum_{i \in I} |V_i| =\left | \bigcup_{i \in I} V_i \right | \leq m.\]

This gives $n \leq 16/\gamma=O_{\delta, \eps}(1)$.
\end{proof}

\begin{proof}[Proof of Theorem~\ref{thm:linearLDC} in case (b)]
This is the harder part of the proof. Here, there is a subset $I \subseteq [n]$ with $|I| \geq n/2$ such that for any $i \in I$, the corresponding $P_j$ has at least $\gamma m c^j/2$ pairs such that the parity of the pair of bits has correlation $1$ with $x_i$. For each $i \in I$, let the set of these pairs be $Q_i$. Thus $|Q_i| \geq \gamma m c^{j_i}/2$, where for any $i \in I$, $j_i$ is the corresponding index of $P_j$ guaranteed by Claim~\ref{clm:fraction}. Let $|I|=n' \geq n/2$. By rearranging the message bits if necessary, without loss of generality we can assume that $I=[n']$ and $j_1 \leq j_2 \leq \cdots \leq j_{n'}$. Let $S_i$ be the set of all first indices of $Q_i$ which are connected to at least $\gamma c^{j_i}/4$ second indices. Formally, $S_i=\{k: \left| \{d : (k,d) \in Q_i \} \right| \geq \gamma c^{j_i}/4\}$. 

Another way to view this is to consider the bipartite graph $G_i = \tp{[m], [m], Q_i}$ (since the pairs in $Q_i$ can only involve bits in $C(x)$). Then $G_i$ has at least $\gamma mc^{j_i}/2$ edges, and the left and right degrees of $G_i$ are both at most $c^{j_i}$. Now $S_i$ is the subset of left vertices with degree at least $\gamma c^{j_i}/4$.

We have the following claim.

\begin{claim}
For any $i \in [n']$, $|S_i| \geq \gamma m/4$.
\end{claim}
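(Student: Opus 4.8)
The plan is a simple double-counting argument on the bipartite graph $G_i = ([m],[m],Q_i)$. We know $|Q_i| \geq \gamma m c^{j_i}/2$, every left vertex has degree at most $c^{j_i}$ (since a first index $k$ can only be paired with second indices $d$ so that $(k,d) \in P_{j_i}$, of which there are fewer than $c^{j_i}$), and $S_i$ collects exactly the left vertices of degree at least $\gamma c^{j_i}/4$. So the plan is to split the edge count of $G_i$ according to whether the left endpoint lies in $S_i$ or not, bound each part, and solve for $|S_i|$.

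First I would write
\[
\frac{\gamma m c^{j_i}}{2} \;\le\; |Q_i| \;=\; \sum_{k \in S_i}\deg(k) \;+\; \sum_{k \notin S_i}\deg(k).
\]
For the first sum, each term is at most $c^{j_i}$, so $\sum_{k\in S_i}\deg(k) \le |S_i|\cdot c^{j_i}$. For the second sum, each term is strictly less than $\gamma c^{j_i}/4$ by definition of $S_i$, and there are at most $m$ such vertices, so $\sum_{k\notin S_i}\deg(k) < m\cdot \gamma c^{j_i}/4$. Combining,
\[
\frac{\gamma m c^{j_i}}{2} \;<\; |S_i|\, c^{j_i} \;+\; \frac{\gamma m c^{j_i}}{4},
\]
and dividing through by $c^{j_i}$ and rearranging gives $|S_i| > \gamma m/4$, which proves the claim (the non-strict inequality $|S_i| \ge \gamma m/4$ follows immediately).

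There is essentially no obstacle here; the only things to double-check are the two degree bounds. The upper bound $\deg(k)\le c^{j_i}$ on every left degree needs the observation that $P_{j_i} = [2m]\times[c^{j_i-1},c^{j_i})$, so for a fixed first coordinate there are fewer than $c^{j_i} - c^{j_i-1} < c^{j_i}$ admissible values of $d$, hence at most $c^{j_i}$ edges out of any left vertex in $G_i$ (in fact the bound is slightly better, but $c^{j_i}$ suffices). The lower bound $|Q_i| \ge \gamma m c^{j_i}/2$ is exactly the hypothesis defining case (b) together with Claim~\ref{clm:fraction}. Everything else is the elementary averaging displayed above, so the claim follows.
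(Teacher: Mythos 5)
Your proposal is correct and is exactly the paper's argument: the paper dismisses the claim with the one-line remark that it "follows by a Markov type argument," and your explicit double counting (edges from $S_i$ bounded by $|S_i|\,c^{j_i}$, edges from the rest bounded by $m\cdot\gamma c^{j_i}/4$, compared against $|Q_i|\ge \gamma m c^{j_i}/2$) is precisely that argument spelled out. The degree bound $\deg(k)\le c^{j_i}$ you verify is the same one the paper records when introducing $G_i$, so there is nothing further to add.
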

\begin{proof}
Since $|Q_i| \geq \gamma m c^{j_i}/2$, and each index in $S_i$ is connected to at most $c^{j_i}$ other indices, the claim follows by a Markov type argument.
\end{proof}

Now, for any index $k \in [m]$ and any index $i \in [n']$, we define the set $G_{k, i}$ to be the set of all indices $i' \leq i$ such that $k \in S_{i'}$. Formally, $G_{k,i}=\{ i' \leq i: k \in S_{i'} \}$. We have the following claim:

\begin{claim}
There exists a constant $\eta=\eta(\delta, \eps)=\gamma/8$, an index $i \in [n']$ and a set $S \subseteq S_i$, such that
\begin{itemize}
    \item $|S| \geq \eta m$.
    \item For any $k \in S$, we have $|G_{k, i}| \geq \eta n'$.
\end{itemize}
\end{claim}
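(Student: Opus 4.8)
The plan is to prove the claim by a double-counting argument over the pairs $(i,k)\in[n']\times[m]$, using the monotonicity of $|G_{k,i}|$ in $i$ for fixed $k$. Throughout, set $\eta=\gamma/8$, and for each $k\in[m]$ write $A_k=\{i\in[n']:k\in S_i\}$, so that $|G_{k,i}|=|A_k\cap[i]|$ and in particular $|G_{k,n'}|=|A_k|$. Summing the identity $|G_{k,n'}|=|A_k|$ over $k$ gives $\sum_{k\in[m]}|G_{k,n'}|=\sum_{i\in[n']}|S_i|$, and since $|S_i|\ge\gamma m/4$ for every $i$ by the previous claim, we obtain the global count $\sum_{k\in[m]}|G_{k,n'}|\ge\gamma mn'/4$.

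The key step is a structural observation about a single column $k$: as $i$ runs from $0$ to $n'$, the value $|G_{k,i}|$ starts at $0$ and increases by exactly one at each $i\in A_k$ (and is unchanged otherwise). Hence, if we let $W_k=\{i\in A_k:|G_{k,i}|\ge\eta n'\}$, then $W_k$ is precisely the set of indices of $A_k$ at which the running count reaches $\lceil\eta n'\rceil$ or more, so $|W_k|=\max(0,\ |A_k|-\lceil\eta n'\rceil+1)\ge|G_{k,n'}|-\eta n'$, using $\lceil\eta n'\rceil\le\eta n'+1$. Summing over $k$ and substituting the global count gives $\sum_{k\in[m]}|W_k|\ge\gamma mn'/4-\eta mn'=\eta mn'$, by the choice $\eta=\gamma/8$.

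Finally I would rewrite $\sum_{k\in[m]}|W_k|=\sum_{i\in[n']}|\{k:k\in S_i,\ |G_{k,i}|\ge\eta n'\}|$ and average over the $n'$ values of $i$: some $i\in[n']$ must satisfy $|\{k:k\in S_i,\ |G_{k,i}|\ge\eta n'\}|\ge\eta m$. Taking $S$ to be this set completes the argument, since then $S\subseteq S_i$, $|S|\ge\eta m$, and every $k\in S$ has $|G_{k,i}|\ge\eta n'$.

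I do not expect a genuine obstacle here, as the argument is elementary counting; the one point requiring care is the identity $|W_k|=\max(0,\ |A_k|-\lceil\eta n'\rceil+1)$, which should be justified by spelling out that $A_k$ has exactly $|G_{k,n'}|$ elements, that $|G_{k,\cdot}|$ is incremented by one at each element of $A_k$, and that the elements $i\in A_k$ with $|G_{k,i}|<\eta n'$ form an initial segment of $A_k$ of size $\min(|A_k|,\ \lceil\eta n'\rceil-1)$. All three are immediate from the definition $G_{k,i}=\{i'\le i:k\in S_{i'}\}$, but making them explicit keeps the bound airtight.
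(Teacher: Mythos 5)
Your proof is correct and follows essentially the same route as the paper: both arguments double-count the pairs $(i,k)$ with $k \in S_i$ and $\abs{G_{k,i}} \ge \gamma n'/8$, lower-bound their number column-by-column via $\sum_k \abs{G_{k,n'}} = \sum_i \abs{S_i} \ge \gamma m n'/4$ (your $W_k$ is exactly the paper's set of ``good'' pairs in column $k$), and then average over $i \in [n']$ to extract the desired index and set $S$. Your handling of the ceiling in $\abs{W_k} = \max\tp{0,\ \abs{A_k}-\ceil{\eta n'}+1}$ is a slightly more careful version of the paper's choice of the threshold index $i^*$, but the substance is identical.
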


\begin{proof}
First notice that $\sum_{k \in [m]} |G_{k, n'}| = \sum_{i \in [n']} |S_i|$. For a pair $(i, k)$ with $i \in [n']$ and $k \in [m]$, we say it is good if $k \in S_i$ and $|G_{k, i}| \geq \gamma n'/8$. For any fixed $k \in [m]$, there are at least $|G_{k, n'}|-\gamma n'/8$ indices $i \in [n']$ such that $(i, k)$ is good (this number may be negative, but that's still fine for us). To see this, let $i^*$ be the smallest index such that $\abs{G_{k, i^*}} = \gamma n'/8$ and notice that $|G_{k, n'}|-\gamma n'/8=|G_{k,n'}|-|G_{k,i^*}|$ counts the number of $i$ such that $i^* < i \le n'$ and $k \in S_i$, i.e. the number of good pairs.

Therefore the total number of good pairs is at least

\[\sum_{k \in [m]}\tp{ |G_{k, n'}|- \frac{\gamma n'}{8} } = \sum_{i \in [n']} |S_i|- \frac{\gamma m n'}{8} \geq \frac{\gamma m n'}{8},\]
since for any $i \in [n']$, we have $|S_i| \geq \gamma m/4$.

By averaging, this implies that $\exists i \in [n']$, such that there are at least $\gamma m/8$ good pairs for this fixed $i$. Let $S$ be the set of all good indices of $k$ for this $i$, then we must have $|S| \geq \gamma m/8$ and for any $k \in S$, we have $k \in S_i$ and $|G_{k, i}| \geq \gamma n'/8$. Thus the claim holds.
\end{proof}

Now consider the index $i$ and the set $S$ guaranteed by the above claim. Recall $j_i$ is the index $j$ of $P_j$ corresponding to $i$. We have the following claim.

\begin{claim}\label{clm:correlation}
There exists a set $T \subseteq S$ and two indices $k_0, \ell_0 \in [m]$ such that the following holds.
\begin{itemize}
    \item $|T| \geq \frac{\eta \gamma c^{j_i}}{4}$.
    \item $T \subseteq [k_0, k_0+c^{j_i}]$.
    \item $\forall k \in T$, $C(x)_k \oplus C(x)_{\ell_0}$ has correlation $1$ with $x_i$.
\end{itemize}

\begin{proof}
Consider all pairs of indices $\set{k, \ell} \in Q_i$ ($k < \ell$) with $k \in S$, and view it as a bipartite graph $G=(A,B,E)$ with indices $k$ on the left, and indices $\ell$ on the right. Formally, $G=(A,B,E)$ with $A=\{a_1,\ldots, a_m\}$, $B=\{b_1,\ldots, b_m\}$ and edge $E=\{(a_k, b_\ell): \set{k,\ell} \in Q_i, k<\ell\}$.  Since for any $k \in S$, we have $k \in S_i$, we know that any $a_k$ has degree at least $\gamma c^{j_i}/4$. Notice that there are $m$ right vertices in $B$.  Therefore there must exist an $\ell_0 \in [m]$ such that the node $b_{\ell_0}$ is connected to at least $\eta \gamma c^{j_i}/4$ vertices on the left, and we can let the set of all these vertices be $T=\{ k: (a_{k}, b_{\ell_0}) \in E\}$. Since for any pair in $Q_i$, the parity of this pair of bits has correlation $1$ with $x_i$, we have that  $C(x)_k \oplus C(x)_{\ell_0}$ has correlation $1$ with $x_i$ for all $k \in T$.

Since the vertices in $T$ are all connected to $\ell_0$, and the distance $d=\ell-k$ for all pairs in $Q_i$ is in $[c^{j_i-1}, c^{j_i}]$, we must have that all indices $k \in T$ are in the range $[\ell_0-c^{j_i}, \ell_0]$. Taking $k_0=\ell_0-c^{j_i}$ and the claim follows.
\end{proof}
\end{claim}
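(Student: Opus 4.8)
The plan is to derive Claim~\ref{clm:correlation} from a single averaging (pigeonhole) step on a bipartite co-occurrence graph built from $Q_i$. First I would record the two facts handed down by the preceding claim: for the chosen index $i \in [n']$ and set $S \subseteq S_i$ we have $|S| \ge \eta m$, and every $k \in S$ lies in $S_i$, so by the definition of $S_i$ there are at least $\gamma c^{j_i}/4$ values $d$ with $(k,d) \in Q_i$. I would also recall, as established immediately before the claim, that every pair of $Q_i$ consists of genuine codeword positions rather than padded bits, since a parity having correlation $1$ with $x_i$ rules out the independent uniform padding; in particular $Q_i \subseteq [m]\times[c^{j_i-1},c^{j_i})$.

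Next I would form the bipartite graph $G = (A, B, E)$ with $A = B = [m]$ and an edge between $k$ and $\ell$ exactly when $k \in S$, $k < \ell$, and $\{k,\ell\} \in Q_i$. By the degree fact above, each left vertex in $S$ has degree at least $\gamma c^{j_i}/4$, so $|E| \ge |S|\cdot \gamma c^{j_i}/4 \ge \eta\gamma m c^{j_i}/4$. Averaging this over the $m$ right vertices produces some $\ell_0 \in [m]$ incident to at least $\eta\gamma c^{j_i}/4$ edges. I would then let $T$ be the set of left neighbors of $\ell_0$; then $T \subseteq S$ and $|T| \ge \eta\gamma c^{j_i}/4$, which is the first bullet. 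For the third bullet, each $k \in T$ satisfies $\{k,\ell_0\} \in Q_i$, so by the definition of $Q_i$ the parity $C(x)_k \oplus C(x)_{\ell_0}$ has correlation $1$ with $x_i$.

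The remaining (second) bullet is the locality observation: since $Q_i \subseteq P_{j_i} = [2m]\times[c^{j_i-1},c^{j_i})$, every $k \in T$ has $0 < \ell_0 - k < c^{j_i}$, hence $k \in (\ell_0 - c^{j_i},\,\ell_0]$. Taking $k_0 = \ell_0 - c^{j_i}$ (rounded up to the nearest integer if one insists on $k_0 \in [m]$) yields $T \subseteq [k_0, k_0 + c^{j_i}]$, completing the claim.

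I do not expect a genuine obstacle here; the argument is essentially one averaging line once $G$ is written down. The only points requiring care are keeping the per-vertex degree bound exactly at $\gamma c^{j_i}/4$ so the edge count is $\eta\gamma m c^{j_i}/4$ and the pigeonhole over $m$ right vertices gives the advertised $|T| \ge \eta\gamma c^{j_i}/4$, and tracking the interval endpoints so that the diameter of $T$ comes out as $c^{j_i}$ rather than $2c^{j_i}$, which uses only that distances appearing in $P_{j_i}$ are strictly below $c^{j_i}$.
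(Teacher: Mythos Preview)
Your proposal is correct and follows essentially the same approach as the paper: build the bipartite graph on $[m]\times[m]$ from the pairs in $Q_i$ with first coordinate in $S$, use the degree lower bound $\gamma c^{j_i}/4$ on left vertices in $S$ together with $|S|\ge \eta m$ to get $|E|\ge \eta\gamma m c^{j_i}/4$, pigeonhole over the $m$ right vertices to find $\ell_0$, and read off the interval containment from the distance constraint of $P_{j_i}$. The paper's proof is the same argument, with the edge-count step left implicit.
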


Now for any $i' \leq i$, let $U_{i'}=S_{i'} \cap T$, and consider the set $V_{i'}$ of all indices $\ell \in [m]$ such that $\exists k \in U_{i'}$ with $\set{k, \ell} \in Q_{i'}$. In other words, $V_{i'}$ is set of neighbours of $U_{i'}$ in the bipartite graph $\tp{[m], [m], Q_{i'}}$. We have the following claim.

\begin{claim}
For any $i' \leq i$, we have
\begin{itemize}
    \item $V_{i'} \subseteq [k_0, k_0+2c^{j_i}]$.
    \item $|V_{i'}| \geq \gamma |U_{i'}|/4$.
\end{itemize}
\end{claim}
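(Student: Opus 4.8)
The plan is to check the two items in turn; neither requires anything beyond unwinding the definitions of $U_{i'}$, $V_{i'}$, $S_{i'}$ and $Q_{i'}$, together with the sortedness $j_1 \le \cdots \le j_{n'}$ fixed at the start of case (b). First I would record two elementary facts about the bipartite graph $G_{i'} = ([m],[m],Q_{i'})$. Since $Q_{i'} \subseteq P_{j_{i'}}$, every pair $(k,d) \in Q_{i'}$ has $d \in [c^{j_{i'}-1}, c^{j_{i'}})$, so (i) every vertex of $G_{i'}$ has degree at most $c^{j_{i'}}$: the right-neighbours of $b_\ell$ are the indices $k$ with $\ell - k \in [c^{j_{i'}-1}, c^{j_{i'}})$, i.e.\ $k$ in an interval of length $c^{j_{i'}} - c^{j_{i'}-1}$, which (using $c^{j_{i'}-1}\ge 1$ since $j_{i'}\ge 1$) contains at most $c^{j_{i'}}$ integers, and symmetrically for left vertices; and (ii) every $k \in S_{i'}$ — hence every $k \in U_{i'} = S_{i'} \cap T$ — has $G_{i'}$-degree at least $\gamma c^{j_{i'}}/4$, by definition of $S_{i'}$. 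Finally, since $i' \le i$ and the $j$'s are sorted, $c^{j_{i'}} \le c^{j_i}$; this is the only place the sorting is used.

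For the first item I would take an arbitrary $\ell \in V_{i'}$, so $\ell = k + d$ for some $k \in U_{i'}$ and some $d$ with $(k,d) \in Q_{i'}$, whence $d < c^{j_{i'}} \le c^{j_i}$. By Claim~\ref{clm:correlation}, $U_{i'} \subseteq T \subseteq [k_0, k_0 + c^{j_i}]$, so $k_0 \le k \le k_0 + c^{j_i}$. Then $\ell \ge k \ge k_0$ and $\ell = k + d < (k_0 + c^{j_i}) + c^{j_i} = k_0 + 2c^{j_i}$, giving $V_{i'} \subseteq [k_0, k_0 + 2c^{j_i}]$.

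For the second item I would double-count the edges of $G_{i'}$ having left endpoint in $U_{i'}$. By fact (ii), there are at least $|U_{i'}| \cdot \gamma c^{j_{i'}}/4$ such edges. Every one of them has its right endpoint in $V_{i'}$ — that is precisely how $V_{i'}$ is defined — and by fact (i) each vertex of $V_{i'}$ is incident to at most $c^{j_{i'}}$ edges of $G_{i'}$, so the number of these edges is also at most $|V_{i'}| \cdot c^{j_{i'}}$. Comparing the two bounds and cancelling the common factor $c^{j_{i'}}$ yields $|V_{i'}| \ge \gamma |U_{i'}|/4$.

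I do not anticipate a real obstacle here: this claim is essentially bookkeeping that sets up the final contradiction, namely that all the sets $V_{i'}$ for $i' \le i$ are forced to live inside one interval of length $2c^{j_i}$ while being pairwise ``large and essentially disjoint'' (each $V_{i'}$ is tied to a distinct message bit $x_{i'}$ through a fixed parity involving a bit of $T$). The only point that needs attention is to keep the degree bounds indexed by $j_{i'}$ rather than $j_i$, invoking $j_{i'} \le j_i$ exactly once to obtain the \emph{uniform} interval $[k_0, k_0 + 2c^{j_i}]$; and, if one wishes to be fully scrupulous about the non-integrality of $c = 4\ln(1/\eps)$, to note that the interval $(\ell - c^{j_{i'}},\, \ell - c^{j_{i'}-1}]$ contains at most $c^{j_{i'}} - c^{j_{i'}-1} + 1 \le c^{j_{i'}}$ integers, so the ``$\le c^{j_{i'}}$'' degree bound above holds verbatim.
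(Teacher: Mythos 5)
Your proposal is correct and follows the paper's argument essentially verbatim: the first item comes from $U_{i'} \subseteq T \subseteq [k_0, k_0+c^{j_i}]$ together with the distance bound $c^{j_{i'}} \le c^{j_i}$, and the second is exactly the paper's degree/double-counting comparison (left degrees at least $\gamma c^{j_{i'}}/4$, right degrees at most $c^{j_{i'}}$). The extra care you take about integrality of the distance interval is a harmless refinement of the same proof.
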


\begin{proof}
Since $U_{i'} \subseteq T$, and every pair of query in $Q_{i'}$ has distance at most $c^{j_{i'}} \leq c^{j_i}$, we have $V_{i'} \subseteq [k_0, k_0+2c^{j_i}]$. Furthermore, since every index in $U_{i'}$ is connected to at least $\gamma c^{j_{i'}}/4$ indices in $V_{i'}$, while every index in $V_{i'}$ is connected to at most $c^{j_{i'}}$ indices in $U_{i'}$, we must have $|V_{i'}| \geq \gamma |U_{i'}|/4$. 
\end{proof}

Now, notice that for any $i' \leq i$, and any $\ell \in V_{i'}$, there exists some $k \in U_{i'} \subseteq T$ such that $C(x)_k \oplus C(x)_{\ell}$ has correlation $1$ with $x_{i'}$. By Claim~\ref{clm:correlation}, $C(x)_k \oplus C(x)_{\ell_0}$ has correlation $1$ with $x_i$. Thus $C(x)_{\ell} \oplus C(x)_{\ell_0}$ has correlation $1$ with $x_i \oplus x_{i'}$, and $C(x)_{\ell}$ has correlation $1$ with $x_i \oplus x_{i'} \oplus C(x)_{\ell_0}$. This means that for any two $i_1, i_2 \leq i$ with $i_1 \neq i_2$, we must have $V_{i_1} \cap V_{i_2} = \emptyset$. Therefore, all the $V_{i'}$'s for different $i'$ must be disjoint. Thus we have the following inequality:

\[\frac{\gamma}{4} \tp{ \sum_{i' \leq i}|U_{i'}| } \leq \sum_{i' \leq i} |V_{i'}| \leq 2c^{j_i}. \]

Notice that $\sum_{k \in T} |G_{k, i}| = \sum _{i' \leq i} |S_{i'} \cap T|= \sum_{i' \leq i}|U_{i'}|$ and $\forall k \in T \subseteq S$, we have $|G_{k, i}| \geq \eta n'$. Thus 

\[\sum_{i' \leq i}|U_{i'}| \geq  \eta n' |T| \geq \frac{\eta^2 \gamma c^{j_i}}{4} n'. \]

Combining the two inequalities, we get $n' \leq 32/(\eta^2 \gamma^2)=2048/\gamma^4$.
Since $n' \geq n/2$. This also implies that $n \leq 2n' = 4096/\gamma^4=O_{\delta, \eps}(1)$.
\end{proof}

Next we prove a simple exponential lower bound for general $2$-query insdel LDCs, i.e. Theorem~\ref{thm:2QueryGeneralLowerBound}. This should serve as a warm-up for the general $q\geq 3$ case.

\thmTwoQueryGeneralLowerBound*

\begin{proof}

    Recall that $t = \ceil{\log_c(2m)}$ and $P_j = [2m] \times [c^{j-1}, c^j)$ for $j \in [t]$. For $j \in [t]$ and $i \in [n]$, we define $\beta_{j,i} = \frac{\abs{P_j\cap \Goody_i}}{\abs{P_j}}$. Since $\abs{P_j} = 2m(c^j-c^{j-1})\leq 2mc^j$, by Claim~\ref{clm:fraction} there is a constant $\gamma=\gamma(\delta, \eps)<1$ such that for any $i\in [n]$, there exists a $j \in [t]$ satisfying $\beta_{j,i}\geq \gamma$. By the Pigeonhole Principle, there exists a $j \in [t]$ such that $\beta_{j,i}\geq \gamma$ for at least $n/t$ different $i$'s. Fix this $j$ to be $j_0$. We have 
    \[\sum_{i=1}^{n} \beta_{j_0,i} \geq \frac{\gamma n }{t}.\]
    
    On the other hand, by Proposition~\ref{prop:info-theory} every pair $(k,d)$ can belong to $\Goody_i$ for at most $2/(1-\+H(1/2+\eps/4))$ different $i$'s. Thus we have
    \[\sum_{i=1}^{n}\abs{P_{j_0}\cap \Goody_i} \leq \frac{2}{1-\+H(1/2+\eps/4)} \cdot \abs{P_{j_0}}.\]
    Altogether this yields
    \begin{align*}
		\frac{\gamma n}{t} \le \sum_{i=1}^{n} \beta_{j_0,i} = \sum_{i=1}^{n}\frac{\abs{P_{j_0}\cap \Goody_i}}{\abs{P_{j_0}}} \leq \frac{2}{1-\+H(1/2+\eps/4)}.
    \end{align*}
    We have $n \le O_{\delta,\eps}(t) = O_{\delta,\eps}(\log m)$ and $m = \exp\tp{\Omega_{\delta,\eps}(n)}$.
\end{proof}

\section{A More General Error Distribution} \label{sec:error}
In this section we describe a general framework for designing error distributions, and instantiate it with two sets of parameters. The error distribution defined in this section will be used in the proof of Theorem~\ref{thm:obliviousLBcombined}. As before the error distribution is applied to the augmented codeword which is obtained by concatenating $m$ bits to the end of the original codeword --- the final codeword can be truncated back down to $m$ bits after applying the random deletions below. 

Given parameters $L \in \mathbb{N}$, $\*s = \tp{s_1, \dots, s_L} \in [2m]^L$ and $\*h = \tp{h_1, \dots, h_L} \in [0,1]^L$ such that 
\begin{align*}
	h \coloneqq \sum_{\ell=1}^{L}h_{\ell} \le \frac{1}{4},
\end{align*}
we consider an error distribution $\+D\tp{L, \*s, \*h}$ defined by the following process.

\paragraph{Description of the error distribution $\+D\tp{L, \*s, \*h}$}  
\begin{description}
	\item[Step 1] The first step introduces deletions through $L$ layers. For the $\ell$-th layer, we first divide $[2m]$ into $\lceil 2m/s_{\ell}\rceil$ consecutive blocks each of size $s_{\ell}$, except for the last block which may have smaller size. For the $b$-th block in layer $\ell$, we pick $q_{\ell, b} \in [0, h_{\ell}\delta]$ uniformly at random (independent of other blocks), and mark each bit in the block independently with probability $q_{\ell, b}$. Finally, we delete all bits which are marked at least once. 
	
	\item[Step 2] Pick $\beta\in [0, \frac{1}{4}]$ uniformly at random and delete each bit independently with probability $\beta\delta$. 
	
	\item[Step 3] Pick an integer $e_2 \in \set{0, 1, \dots, \floor{\frac{\delta m }{4}}}$ uniformly at random and delete the first $e_2$ bits.
\end{description}

By a union bound, after Step 1, each symbol is deleted with probability at most $h\delta$. We thus have the following proposition as an easy consequence of Hoeffding's inequality.

\begin{proposition} \label{prop:total-error-bound}
	Let $D \subseteq [2m]$ be a set of deletions generated by $\+D(L, \*s, \*h)$. Then we have
	\begin{align*}
	\Pr\left[ \abs{D \cap [m]} > \delta m \right] \le \exp\tp{-\frac{\delta^2m}{8}}, \textup{ and } \Pr\left[ |D| > m \right] \le \exp\tp{-\tp{1-\delta}^2m}.
	\end{align*}
\end{proposition}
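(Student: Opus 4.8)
The plan is to bound the number of deletions contributed by each of the three steps of $\+D(L,\*s,\*h)$ separately; the key point is that once we condition on the ``continuous'' parameters of the process (the per-block probabilities of Step~1 and the global probability of Step~2), the remaining randomness is a product of independent per-bit coins, to which Hoeffding's inequality applies directly. Throughout, write $D = D_1 \cup D_2 \cup D_3$, where $D_\ell$ is the set of indices deleted during Step~$\ell$.

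First observe that Step~3 is deterministic given $e_2$ and always contributes $\abs{D_3} = e_2 \le \floor{\delta m/4}$. Now condition on the vector $\set{q_{\ell,b}}$ of block probabilities picked in Step~1 and on the value $\beta$ picked in Step~2. Under this conditioning, whether a fixed bit $j \in [2m]$ lies in $D_1 \cup D_2$ is determined only by the per-bit marking coins of the $L$ layers --- which within one block of one layer are independent given $q_{\ell,b}$, and are independent across blocks and across layers --- together with the independent Step-2 coin of bit $j$; hence the indicators $\mathbbm{1}[j \in D_1 \cup D_2]$, $j\in[2m]$, are mutually independent. Moreover, by the union bound already noted (over the $L$ layers and over Step~2), each such indicator has mean at most $\sum_{\ell=1}^{L} q_{\ell,b(\ell,j)} + \beta\delta \le h\delta + \beta\delta \le \tfrac{\delta}{4}+\tfrac{\delta}{4} = \tfrac{\delta}{2}$, using $h\le\tfrac14$ and $\beta\le\tfrac14$.

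For the first bound, restrict to $j\in[m]$: then $\abs{(D_1\cup D_2)\cap[m]}$ is a sum of $m$ independent indicators of mean $\le\delta/2$, hence has expectation $\le\delta m/2$. Since $\abs{D\cap[m]} \le \abs{(D_1\cup D_2)\cap[m]} + e_2 \le \abs{(D_1\cup D_2)\cap[m]} + \delta m/4$, the event $\abs{D\cap[m]}>\delta m$ forces $\abs{(D_1\cup D_2)\cap[m]}$ to exceed its mean by $\ge\delta m/4$, which Hoeffding's inequality bounds by $\exp(-2(\delta m/4)^2/m) = \exp(-\delta^2 m/8)$; since this estimate is uniform over $\set{q_{\ell,b}}$ and $\beta$, it holds unconditionally. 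The second bound follows the same way, summing all $2m$ independent indicators (expectation $\le\delta m$) and using $\abs{D}\le\abs{D_1\cup D_2}+e_2\le\abs{D_1\cup D_2}+\delta m/4$, so that $\abs{D}>m$ forces a deviation above the mean proportional to $m$; the resulting Hoeffding computation gives the stated $\exp(-(1-\delta)^2 m)$ bound.

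The one place that genuinely needs care --- and the only nontrivial step --- is the conditional-independence claim in Steps~1--2: bits lying in a common block of some layer have \emph{positively correlated} deletion indicators through the shared parameter $q_{\ell,b}$, so it is essential to condition on the whole vector $\set{q_{\ell,b}}$ (and on $\beta$) before invoking Hoeffding, rather than attempting to argue independence marginally. Everything else is routine budget-splitting among the three steps, and I do not anticipate further obstacles.
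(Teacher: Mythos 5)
Your proof follows essentially the same route as the paper's: condition on the block probabilities $q_{\ell,b}$ and on $\beta$ so that the per-bit deletion indicators from Steps 1--2 become mutually independent with mean at most $(h+\beta)\delta \le \delta/2$, reserve the $\delta m/4$ budget for Step 3, and apply Hoeffding's inequality. The only wrinkle is in the second bound: if, as you propose, you charge the Step-3 budget against the event $|D| > m$, the deviation above the mean is only $(1-\tfrac{5\delta}{4})m$, so Hoeffding gives $\exp(-(1-\tfrac{5\delta}{4})^2 m)$ rather than the stated $\exp(-(1-\delta)^2 m)$ (the paper reaches the stated constant by bounding $\Pr[|D_1\cup D_2| > m]$ without subtracting the Step-3 term); this constant-level discrepancy is harmless, since everything downstream only uses a $2^{-\Omega(m)}$ bound.
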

\begin{proof}
Let $D_2 \subseteq D$ be the subset of deletions introduced during Step 1 and Step 2. Since Step 3 introduces at most $\delta m/4$ deletions, it suffices to upper bound the probabilities of $\abs{D_2 \cap [m]} > 3\delta m/4$ and $\abs{D_2} > m$. Moreover, it suffices to prove the upper bounds after conditioned on an arbitrary set of deletion probabilities $q_{\ell, b} \in [0, h_{\ell}\delta]$ for each $\ell \in [L]$ and $b \le \ceil{2m/s_{\ell}}$, and $\beta \in [0, 1/4]$.

Under the conditional distribution, each bit $j \in [2m]$ is deleted with probability at most $\tp{h+\beta}\delta \le \delta/2$, and these deletions are independent of each other. The Hoeffding's inequality shows that
\begin{align*}
	&\Pr\left[ \abs{D_2 \cap [m]} > \tp{\frac{\delta}{2} + \frac{\delta}{4}}m \right] \le \exp\tp{-2\tp{\frac{\delta}{4}}^2 m} = \exp\tp{-\frac{\delta^2 m}{8}}, \\
	&\Pr\left[ \abs{D_2} > m \right] = \Pr\left[ \abs{D_2} > \tp{\frac{\delta}{2} + \frac{1-\delta}{2}} \cdot 2m \right] \le \exp\tp{-\tp{1-\delta}^2m}.
\end{align*}
\end{proof}

We fix an arbitrary query $Q=\tp{k,d_1, \dots, d_{q-1}}$ of the decoder, and let $\tp{k',d_1',\dots,d_{q-1}'} \in [2m]^q$ be the random tuple that corresponds to $Q$ under the error distribution $\+D(L, \*s, \*h)$ (see the discussion before Lemma~\ref{lem:hitting}). It should be clear that we always have $k' \ge k, d_1' \ge d_1, \dots, d_{q-1}' \ge d_{q-1}$.

Given the query $Q$, we can define for each $i \in [q-1]$ a subset $F_i \subseteq [L]$ of layers as
\begin{align*}
F_i = \set{\ell \in [L] \colon h_{\ell} \neq 0\textup{ and }\frac{d_i}{4} \le s_{\ell} \le \frac{d_i}{2}}.
\end{align*}

The following lemma is a generalization of Lemma~\ref{lem:distribution}.
\begin{lemma} \label{lem:distribution-generalized}
	Suppose that $F_i \neq \varnothing$ for each $i=2,3,\dots,q-1$. The following propositions hold.
	\begin{itemize}
		\item Let $c = 4\ln\tp{q/\eps}$. The distribution of $(k', d_1', \dots, d_{q-1}')$ is concentrated in the set $[2m] \times [d_1, cd_1] \times \dots \times [d_{q-1}, cd_{q-1}]$ with probability $1 - \eps$.
		\item For any $\tp{\ell_2, \dots, \ell_{q-1}} \in F_2 \times \dots \times F_{q-1}$, any support of $\tp{k', d_1', \dots, d_{q-1}'}$ has probability at most 
		\begin{align*}
		\frac{\tp{32/\delta}^q}{md_1}\prod_{i=2}^{q-1}\frac{1}{h_{\ell_i}d_i}.
		\end{align*}
	\end{itemize}
\end{lemma}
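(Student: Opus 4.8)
The plan is to generalize the proof of Lemma~\ref{lem:distribution}. The concentration statement (first item) is the easy direction and does not even use the hypothesis $F_i \neq \varnothing$: after Steps~1 and~2 every symbol is deleted independently with probability at most $\delta/2 < 1/2$ (Step~3 only removes a prefix), so for each fixed $i \in [q-1]$ the event $d_i' > c d_i$ forces at least $(c-1)d_i$ deletions among the $cd_i$ original positions immediately following the (random) preimage of $k_{i-1}$; conditioning on that preimage and on all deletion parameters, this count is stochastically dominated by $B(cd_i,\delta/2)$, and Hoeffding's inequality bounds $\Pr[d_i' > cd_i]$ by $\eps/q$ once $c = 4\ln(q/\eps)$. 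A union bound over $i \in [q-1]$ gives the first item.

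For the anti-concentration bound, fix a target tuple $(a,e_1,\dots,e_{q-1})$ with $e_i \ge d_i$ for all $i$, write $a_j = a + e_1 + \cdots + e_j$, $k_j = k + d_1 + \cdots + d_j$, and $v_i = e_i - d_i$; we must bound the probability that the process maps original position $a_j$ to received position $k_j$ for every $j$. First peel off Step~3: averaging over the uniform prefix length $e_2 \in \{0,\dots,\lfloor \delta m/4\rfloor\}$ contributes a factor $\tfrac{8}{\delta m}$ and, after summing over the possible images of $a_0$ (a sum of at most $1$), reduces the task to bounding $\Pr_{S_{12}}[\mathcal E]$, where $S_{12}$ denotes the distribution after Steps~1--2 and $\mathcal E$ is the event that, for every $i \in [q-1]$, position $a_i$ survives Steps~1--2 and exactly $v_i$ of the $e_i-1$ positions strictly between $a_{i-1}$ and $a_i$ are deleted by Steps~1--2; write $N_i$ for this number of deletions.

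The heart of the proof is to analyze $N_1,\dots,N_{q-1}$ jointly. For each $i \ge 2$ pick $\ell_i \in F_i$ and a full block $B_i$ of layer $\ell_i$ contained in the interval $(a_{i-1},a_i)$ (here $s_{\ell_i} \le d_i/2$ guarantees such a block exists, up to an innocuous boundary adjustment, and $s_{\ell_i} \ge d_i/4$ is used later to turn $1/s_{\ell_i}$ into $4/d_i$). Let $\mathcal F$ be generated by all of the Steps~1--2 randomness \emph{except} $\beta$, all per-bit coins of Step~2, and --- for each $i\ge2$ --- the block probability $q_{\ell_i,B_i}$ together with the layer-$\ell_i$ per-bit coins of positions in $B_i$. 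Conditioned on $(\mathcal F,\beta)$, each event $\{N_i = v_i\}$ depends only on its own, pairwise-disjoint, block of the remaining randomness, so the events are conditionally independent; thus
\[
\Pr_{S_{12}}[\mathcal E] \;\le\; \mathbb E_{\mathcal F}\,\mathbb E_{\beta}\!\left[\prod_{i=1}^{q-1}\Pr[N_i = v_i \mid \mathcal F,\beta]\right],
\]
with $\beta \mid \mathcal F$ uniform on $[0,1/4]$. For $i \ge 2$, write the number of deletions inside $B_i$ as $W_i + B(s_{\ell_i}-W_i,\theta)$, where $W_i$ counts positions of $B_i$ already marked by a layer other than $\ell_i$ (a sum of independent Bernoullis of mean $\le \delta/4$, independent of the reserved randomness) and $\theta = 1-(1-q_{\ell_i,B_i})(1-\beta\delta)$; as $q_{\ell_i,B_i}\sim\mathbf U[0,h_{\ell_i}\delta]$, $\theta$ is uniform on an interval of width at least $\tfrac34 h_{\ell_i}\delta$, so Lemma~\ref{lem:compound-anticoncentration} bounds $\Pr[B(s_{\ell_i}-W_i,\theta)=t\mid W_i,\mathcal F,\beta]$ by $\tfrac{4/3}{h_{\ell_i}\delta(s_{\ell_i}-W_i+1)}$ for every target $t$. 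Hence $\Pr[N_i=v_i\mid\mathcal F,\beta]\le\tfrac{4/3}{h_{\ell_i}\delta}\,\mathbb E_{W_i}\!\big[\tfrac1{s_{\ell_i}-W_i+1}\big]$, and the elementary inequality ``if $W=\sum_{j=1}^{s}\mathrm{Bern}(\rho_j)$ with all $\rho_j\le 1/2$ then $\mathbb E[\tfrac1{s-W+1}]\le\tfrac2{s+1}$'' (the left side is maximized at $\rho_j\equiv 1/2$, where $s-W\sim B(s,1/2)$ and $\mathbb E[\tfrac1{B(s,1/2)+1}]=\tfrac{2^{s+1}-1}{(s+1)2^{s}}\le\tfrac2{s+1}$), together with $s_{\ell_i}\ge d_i/4$, yields the \emph{$(\mathcal F,\beta)$-uniform} bound $\Pr[N_i=v_i\mid\mathcal F,\beta]\le\tfrac{32/3}{h_{\ell_i}\delta d_i}$.

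Pulling these uniform bounds out of the expectation leaves $\Pr_{S_{12}}[\mathcal E]\le\big(\prod_{i=2}^{q-1}\tfrac{32/3}{h_{\ell_i}\delta d_i}\big)\,\mathbb E_{\mathcal F}\mathbb E_{\beta}[\Pr[N_1=v_1\mid\mathcal F,\beta]]=\big(\prod_{i=2}^{q-1}\tfrac{32/3}{h_{\ell_i}\delta d_i}\big)\Pr_{S_{12}}[N_1=v_1]$; here the roles reverse and $\beta$ is kept random. Since $N_1=W_1+B(e_1-1-W_1,\beta\delta)$ with $W_1\le e_1-1$ a sum of Bernoullis of mean $\le\delta/4$ independent of $\beta$, averaging over $\beta\delta\sim\mathbf U[0,\delta/4]$ and using Lemma~\ref{lem:compound-anticoncentration} and the same elementary inequality gives $\Pr_{S_{12}}[N_1=v_1]\le\tfrac4\delta\,\mathbb E_{W_1}[\tfrac1{e_1-W_1}]\le\tfrac8{\delta e_1}\le\tfrac8{\delta d_1}$ (using $e_1\ge d_1$). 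Combining with the $\tfrac8{\delta m}$ factor from Step~3 yields the bound $\tfrac{64\,(32/3)^{q-2}}{\delta^{q}m d_1}\prod_{i=2}^{q-1}\tfrac1{h_{\ell_i}d_i}\le\tfrac{(32/\delta)^{q}}{md_1}\prod_{i=2}^{q-1}\tfrac1{h_{\ell_i}d_i}$, as claimed. I expect the main obstacle to be exactly this simultaneous treatment of the gaps: gap~$1$ must keep $\beta$ uniform (the ``flattening'' already present in the $q=2$ case), while for $i\ge2$ it is convenient to condition on $\beta$ so that the gap-events decouple; reconciling these via the choice of $\mathcal F$, the $(\mathcal F,\beta)$-uniformity of the $i\ge2$ bounds, and pushing the $\beta$-average back onto gap~$1$ alone is delicate --- as is the elementary bound that keeps a large ``already deleted'' count inside $B_i$ from destroying the anti-concentration for small blocks $B_i$.
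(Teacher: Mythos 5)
Your overall strategy is essentially the paper's: peel off Step 3 by averaging over $e_2$ (the $8/(\delta m)$ factor), reserve for each gap $i\ge 2$ a full block $B_i$ of some layer $\ell_i\in F_i$, apply the compound-binomial anti-concentration (Lemma~\ref{lem:compound-anticoncentration}) to the reserved block randomness to win a $1/(h_{\ell_i}\delta d_i)$ factor per gap, and keep $\beta$ random only for the first gap to win $1/(\delta d_1)$; your first item and the gap-1/Step-3 computations also match the paper's. The difference is organizational: the paper runs the chain rule $\Pr[X_1]\prod_{i\ge 2}\Pr[X_i\mid X_1,\dots,X_{i-1}]$ and, for each $i$, conditions on everything except $q_{\ell_i,B_i}$, controlling the ``noise'' deletions inside $B_i$ by a Hoeffding tail; you instead set up one global conditioning $(\mathcal{F},\beta)$ under which all gap events are conditionally independent and then pull out per-gap bounds claimed to be uniform in $(\mathcal{F},\beta)$.

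As written, though, the key step has an inconsistency. Your $\mathcal{F}$ contains the per-bit coins of all layers other than $\ell_i$ at the positions of $B_i$, so $W_i$ is $\mathcal{F}$-measurable; the bound you actually derive, $\Pr[N_i=v_i\mid\mathcal{F},\beta]\le \frac{4/3}{h_{\ell_i}\delta\,(s_{\ell_i}-W_i+1)}$, is therefore a function of $\mathcal{F}$ and is \emph{not} uniform --- it degenerates to about $\frac{4/3}{h_{\ell_i}\delta}$ (losing the crucial $1/d_i$) on realizations where most of $B_i$ is already marked by other layers. Taking ``$\mathbb{E}_{W_i}$'' inside that conditional bound is conditioning-inconsistent, and you cannot simply postpone the average: once the non-uniform bounds are pulled out, $W_2,\dots,W_{q-1}$ and the gap-1 factor are positively correlated through the shared block parameters $q_{\ell,b}$ of layers whose blocks span several gaps, so $\mathbb{E}\bigl[\prod_{i\ge2}(s_{\ell_i}-W_i+1)^{-1}\cdot(\cdots)\bigr]$ does not factor into a product of expectations (association pushes the inequality the wrong way). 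The fix is local and in the spirit of what you wrote: either remove from $\mathcal{F}$ the non-$\ell_i$-layer per-bit coins at the positions of $B_i$ (keeping their block parameters in $\mathcal{F}$, so the gaps remain conditionally independent given $(\mathcal{F},\beta)$, $W_i$ becomes a sum of independent Bernoullis of mean at most $\delta/4$ independent of the reserved randomness, and your elementary inequality then yields a bound genuinely uniform in $(\mathcal{F},\beta)$), or follow the paper and split on the event $W_i\ge \tfrac34|B_i|$ with a Hoeffding tail inside the chain-rule conditional probability. A related small imprecision: your descriptions of $W_1$ and $W_i$ as sums of independent Bernoullis hold only conditionally on the block parameters (the per-block $q_{\ell,b}$ are shared), which is harmless exactly because the resulting bounds are uniform in those parameters, but it should be stated.
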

\begin{proof}
	For convenience, let $k'_0 = k'$ and $k'_i = k'_0+\sum_{j=1}^i d'_i$. Similar to the proof of Lemma~\ref{lem:distribution}, we will write $k'\mapsto k$ for the event ``the $k'$-th bit is not deleted and has index $k$ after the deletion process'', and write $\tp{k',d_1',\dots,d_{q-1}'}\mapsto\tp{k,d_1,\dots,d_{q-1}}$ for the event $\bigwedge_{i=0}^{q-1}\tp{k_i'\mapsto k_i}$.
	
	To prove the first item, we are going to condition on an set of deletion probabilities (i.e. $q_{\ell,b}$ for each block and $\beta$), and $e_2$ in Step 3. For each $i\in [q-1]$, we consider a random variable $X_i$ denoting the number of deletions introduced to $I_i \coloneqq \set{k_{i-1}'+1,\dots,k_i'-1}$. It always holds that $0 \le X_i \le d_i'-1$. Note that $X_i$ does not depend on the deletions introduced in Step 3. Under the error distribution, each of these bits is deleted independently with probability at most $(h+\beta)\delta \le \delta/2$. Thus, following an analysis similar to the proof of Lemma~\ref{lem:distribution}, the choice of $c = 4\ln(q/\eps)$ guarantees 
	\begin{align*}
	\Pr[d'_i > c d_i] \le \frac{\eps}{q-1}. 
	\end{align*}
	Taking a union bound shows that 
	\begin{align*}
	\Pr\left[ (k', d_1', d_2', \dots, d_{q-1}')\in [2m] \times [d_1, cd_1] \times \dots \times [d_{q-1}, cd_{q-1}] \right] \ge 1 - \eps.
	\end{align*}
	Recall that this holds for any set of deletion probabilities, and thus the first item follows.
	
	We now show the second item: for any $\tp{\ell_1, \dots, \ell_{q-1}} \in F_1 \times \dots \times F_{q-1}$, we have 
	\begin{equation*}
	\Pr\left[\tp{k', d_1', \dots, d_{q-1}'}\mapsto\tp{k,d_1,\dots,d_{q-1}}\right] \leq \frac{\tp{32/\delta}^q}{md_1} \cdot \prod_{i=2}^{q-1}\frac{1}{h_{\ell_i}d_i}.
	\end{equation*}
	
	Denote by $\Pr_{S_1, S_2}[\cdot]$ the error distribution before Step 3. Recall that for $i \in [q-1]$, $X_i$ is the number of deletions introduced to the interval $I_i$, which is independent of Step 3. We first observe that Step 3 does not change the relative distances among the queried indices. Therefore we have
	\begin{align*}
	& \Pr\left[ \tp{k', d_1', \dots, d_{q-1}'}\mapsto\tp{k,d_1,\dots,d_{q-1}} \right] \\
	=& \frac{1}{\floor{\delta m/4}} \cdot \sum_{e_2 = 0}^{\floor{\delta m / 4}} \Pr\left[\tp{k', d_1', \dots, d_{q-1}'}\mapsto\tp{k,d_1,\dots,d_{q-1}} \ \middle| \ e_2 \right] \\
	=& \frac{1}{\floor{\delta m/4}} \cdot \sum_{e_2 = 0}^{\floor{\delta m / 4}} \Pr_{S_1, S_2}\left[ \tp{k', d_1', \dots, d_{q-1}'}\mapsto\tp{k+e_2,d_1,\dots,d_{q-1}} \right] \\
	\le& \frac{8}{\delta m} \cdot \Pr\left[ \tp{X_1 = d_1'-d_1} \land \dots \land \tp{X_{q-1} = d_{q-1}'-d_{q-1}} \right].
	\end{align*}
	
	In the rest of the proof we will think of the error distribution as comprised of only Step 1 and 2. The chain rule of conditional probability gives
	\begin{align*}
	& \Pr\left[ \tp{X_1 = d_1'-d_1} \land \dots \land \tp{X_{q-1} = d_{q-1}'-d_{q-1}} \right] \\
	=& \Pr\left[ X_1 = d_1'-d_1 \right] \cdot \prod_{i=2}^{q-1} \Pr\left[ X_i = d_i'-d_i \mid X_1=d_1'-d_1, \dots, X_{i-1}=d_{i-1}'-d_{i-1} \right].
	\end{align*}
	
	We finish the proof with 2 claims.
	
	\begin{claim}
	$\Pr[X_1 = d_1' - d_1] \le 16/(\delta d_1')$.
	\end{claim}
	\begin{proof}[Proof of the claim]
	We are going to condition on the deletion probabilities $q_{\ell, b}$ and prove the same bound for any $q_{\ell, b} \in [0,h_{\ell}\delta]$. This clearly implies the claim. Moreover, under this conditional distribution, the deletions of individual bits in Step 1 are mutually independent.
	
	Write $X_1=X_1\itn{1}+X_1\itn{2}$ where $X_1\itn{i}$ ($i=1,2$) is the number of deletions occurred in $I_1$, introduced in Step $i$. Since Step 1 deletes each bit independently with probability at most $h\delta \le \delta/4$, Hoeffding's inequality shows that
	\begin{align*}
	\Pr\left[ X_1\itn{1} \ge \frac{1}{2}d_1' \right] \le \exp\tp{-\frac{d_1'}{2}} \le \frac{1}{d_1'},
	\end{align*}
	where the last inequality holds as long as $d_1' \ge 1$. Also notice that given $X_1\itn{1}$, $X_1\itn{2}$ follows a compound distribution $B\tp{d_1'-X_1\itn{1}-1, \*U[0,\delta/4]}$. Therefore
	\begin{align*}
	\Pr\left[ X_1 = d_1'-d_1 \right] &= \E_{X_1\itn{1}}\left[ \Pr\left[X_1\itn{2} = d_1'-d_1-X_1\itn{1} \right] \ \middle| \ X_1\itn{1} \right] \\
	&\le \E_{X_1\itn{1}}\left[ \frac{4}{\delta} \cdot \frac{1}{d_1'-X_1\itn{1}} \right] \\
	&\le \frac{4}{\delta} \cdot \frac{1}{d_1'/2} + \frac{4}{\delta} \cdot \Pr\left[ X_1\itn{1} \ge \frac{1}{2}d_1' \right] \\
	&\le \frac{16}{\delta} \cdot \frac{1}{d_1'}.
	\end{align*}
	Here the first equality uses Lemma~\ref{lem:compound-anticoncentration}.
	\end{proof}
	
	\begin{claim}
	$\forall 2\le i \le q-1$, $\Pr[ X_i = d_i'-d_i \mid \bigwedge_{j=1}^{i-1}(X_j=d_j'-d_j) ] \le 32/(\delta h_{\ell_i}d_i)$.
	\end{claim}
	\begin{proof}[Proof of the claim]
	For the $i$-th term where $2\le i \le q-1$, we recall that $\ell_i \in F_i$. Since all blocks in layer $\ell_i$ have size $s_{\ell_i} \le d_i/2 \le d_i'/2$ by the definition of $F_i$, there exists a block in layer $\ell_i$ which is completely contained in $I_i$. Suppose it is the $b$-th block and denote it by $B_i$. Note that we may also assume $|B_i| \ge d_i/4$ (if $B_i$ is the last block and $|B_i| < d_i/4$, then the second last block is also contained in $I_i$ and has size $s_{\ell_i} \ge d_i/4$). 
	
	Similar to the proof of the previous claim, we are going to condition on $\beta$ and the deletion probabilities $q_{\ell, b'}$ for all $\ell \in [L]$ and $b' \le \ceil{2m/s_{\ell}}$, except for $q_{\ell_i, b}$ which is the deletion probability of $B_i$. Proving the same bound under the conditional distribution will imply the claim.
	
	Write $X_i = X_{i, B} + X_{i, B}' + X_{i, \varnothing}$ where $X_{i, B}$ is the number of deletions introduced to $B_i$ by layer $\ell_i$,  $X_{i, B}'$ is the number of deletions introduced to $B_i$ by other sources, and $X_{i,\varnothing}$ is the number of deletions introduced to $I_i \setminus B_i$. 
	
	A crucial observation is that given $X_{i,B}'$, $X_{i,B}$ is independent of the $X_j$'s for $j \neq i$, and follows a compound distribution $B\tp{|B_i|-X_{i,B}', \*U[0,h_{\ell_i}\delta]}$. Similar to the analysis for $X_1\itn{1}$, since each bit is deleted independently with probability at most $(h+\beta)\delta \le \delta/2$ during Step 1 and 2, Hoeffding's inequality implies
	\begin{align*}
	\Pr\left[ X_{i,B}' \ge \frac{3}{4}|B_i| \right] \le \exp\tp{-\frac{|B_i|}{8}} \le \frac{4}{|B_i|},
	\end{align*}
	where the last inequality holds as long as $|B_i| \ge 1$. Therefore we have
	\begin{align*}
	& \Pr\left[ X_i = d_i'-d_i \mid X_1 = d_1'-d_1,\dots,X_{i-1} = d_{i-1}'-d_{i-1} \right] \\
	=& \E_{X_{i,B}', X_{i, \varnothing}}\left[ \Pr\left[ X_i = d_i'-d_i \mid X_1 = d_1'-d_1,\dots,X_{i-1} = d_{i-1}'-d_{i-1} \right] \ \middle| \ X_{i,B}', X_{i,\varnothing} \right] \\
	=& \E_{X_{i,B}', X_{i, \varnothing}}\left[ \Pr\left[ X_{i,B} = d_i'-d_i-X_{i,B}'-X_{i,\varnothing} \right] \ \middle| \ X_{i,B}',X_{i,\varnothing} \right] \\
	\le& \E_{X_{i,B}', X_{i,\varnothing}}\left[ \frac{1}{h_{\ell_i}\delta} \cdot \frac{1}{|B_i|-X_{i,B}'+1} \right] \\
	\le& \frac{1}{h_{\ell_i}\delta} \cdot \tp{\frac{1}{|B_i|/4} + \frac{4}{|B_i|}} = \frac{8}{h_{\ell_i}\delta} \cdot \frac{1}{|B_i|} \le \frac{32}{\delta} \cdot \frac{1}{h_{\ell_i}d_i}. 
	\end{align*}
	Here the first inequality is again due to Lemma~\ref{lem:compound-anticoncentration}. 
	\end{proof}
	
	Putting everything together, we have shown that
	\begin{align*}
	\Pr\left[ \tp{k', d_1', \dots, d_{q-1}'}\mapsto\tp{k,d_1,\dots,d_{q-1}} \right] &\le \frac{8}{\delta m} \cdot \tp{\frac{16}{\delta} \cdot \frac{1}{d_1'}} \cdot \prod_{i=2}^{q-1}\tp{\frac{32}{\delta}\cdot \frac{1}{h_{\ell_i}d_i}} \\
	&\le \frac{\tp{32/\delta}^q}{md_1} \cdot \prod_{i=2}^{q-1}\frac{1}{h_{\ell_i} d_{i}}.
	\end{align*}
\end{proof}

In the rest of the section, we instantiate $\+D(L, \*s, \*h)$ with two specific sets of parameters, which we now describe.
\subsection{An error distribution independent of the code} \label{sec:error_oblivious}
We now define an error distribution $\+D_{obl}$ which is completely independent of the coding scheme ($C \colon \set{0,1}^n \rightarrow \Sigma^m$,$\Dec$), message $x$ and codeword $C(x)$. As such lower bounds obtained from $\+D_{obl}$ will also apply in the private-key setting where the encoder and decoder share secret random coins.

We take $L_0 = \ceil{\log (2m)} \le \log m + 2$, $\*s_0 = \tp{s_1, \dots, s_{L_0}}$ and $\*h_0 = \tp{h_1, \dots, h_{L_0}}$ where
\begin{align*}
	\forall \ell \in [L_0], \quad s_{\ell} = 2^{\ell}, \quad h_{\ell} = \frac{1}{4L_0}.
\end{align*}
Let $\+D_{obl} = \+D\tp{L_0, \*s_0, \*h_0}$. Note that $\+D_{obl}$ is oblivious to the encoding/decoding scheme.

Clearly $h = 1/4$ for $\+D_{obl}$. Consider an arbitrary query $\tp{k, d_1, \dots, d_{q-1}}$. For each $i \in [q-1]$, we let $\ell_i = \ceil{\log_2 d_i}-2$. Since $\log_2 d_i - 2 \le \ceil{\log_2 d_i} - 2 \le \log_2 d_i - 1$, we have
\begin{align*}
	s_{\ell_i} = 2^{\ell_i} \ge 2^{\log_2 d_i - 2} \ge \frac{d_i}{4}, \textup{ and }s_{\ell_i} \le 2^{\log_2 d_i - 1} \le \frac{d_i}{2},
\end{align*}
which means $\ell_i \in F_i$. The corresponding $h_{\ell_i} = 1/(4L_0) \ge 1/(4(\log m + 2))$. Therefore we obtain the following corollary to Lemma~\ref{lem:distribution-generalized}.
\begin{corollary} \label{cor:distribution-obl}
Let $\tp{k', d_1', \dots, d_{q-1}'}$ be the random tuple which corresponds to the query $\tp{k, d_1, \dots, d_{q-1}}$ under error distribution $\+D_{obl}$. Then any support of $\tp{k', d_1', \dots, d_{q-1}'}$ has probability at most 
\begin{align*}
	\frac{\tp{32/\delta}^q}{md_1} \cdot \prod_{i=2}^{q-1}\frac{4(\log m + 2)}{d_i}.
\end{align*}
\end{corollary}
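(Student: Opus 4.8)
The plan is to obtain Corollary~\ref{cor:distribution-obl} as a direct specialization of the second item of Lemma~\ref{lem:distribution-generalized} to the parameters $(L_0, \*s_0, \*h_0)$ defining $\+D_{obl}$. The first thing to check is that these parameters are admissible, i.e. that $h = \sum_{\ell=1}^{L_0} h_\ell \le 1/4$: since $h_\ell = 1/(4L_0)$ for every $\ell$, we get $h = L_0\cdot\frac{1}{4L_0} = \frac14$, so $\+D_{obl} = \+D(L_0,\*s_0,\*h_0)$ is well-defined and Proposition~\ref{prop:total-error-bound} applies to it.

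Next I would verify the hypothesis of Lemma~\ref{lem:distribution-generalized}, namely that $F_i \neq \varnothing$ for $i = 2,\dots,q-1$, while recording a concrete choice of layers $\ell_i \in F_i$. Fix a query $\tp{k,d_1,\dots,d_{q-1}}$ and set $\ell_i = \ceil{\log_2 d_i} - 2$. The elementary bound $\log_2 d_i - 2 \le \ceil{\log_2 d_i} - 2 \le \log_2 d_i - 1$ gives $\frac{d_i}{4} \le 2^{\ell_i} = s_{\ell_i} \le \frac{d_i}{2}$; since $h_{\ell_i} = 1/(4L_0) \neq 0$, this witnesses $\ell_i \in F_i$, so in particular $F_i \neq \varnothing$. (If $\ceil{\log_2 d_i}-2$ falls outside $[L_0]$ for very small $d_i$, one clamps $\ell_i$ into $[L_0]$; the relevant window $[d_i/4,d_i/2]$ still contains a block size, so this is harmless.)

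Finally, feeding the tuple $(\ell_2,\dots,\ell_{q-1}) \in F_2 \times \cdots \times F_{q-1}$ chosen above into the second item of Lemma~\ref{lem:distribution-generalized} shows that every atom of $(k',d_1',\dots,d_{q-1}')$ has probability at most $\frac{(32/\delta)^q}{md_1}\prod_{i=2}^{q-1}\frac{1}{h_{\ell_i}d_i}$. Substituting $h_{\ell_i} = 1/(4L_0)$ and using $L_0 = \ceil{\log(2m)} \le \log m + 2$ gives $1/h_{\ell_i} = 4L_0 \le 4(\log m + 2)$, which turns the bound into $\frac{(32/\delta)^q}{md_1}\prod_{i=2}^{q-1}\frac{4(\log m+2)}{d_i}$, as claimed. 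There is essentially no genuine obstacle in this argument: all the analytic work lives in Lemma~\ref{lem:distribution-generalized}, and the only point requiring any care is the book-keeping around the choice of $\ell_i$ --- namely that the geometric ladder of block sizes $s_\ell = 2^\ell$ always supplies some layer whose block size lands in the window $[d_i/4, d_i/2]$, which is exactly the reason for taking block sizes that are powers of two and for using $\Theta(\log m)$ layers, each getting only a $1/(4L_0)$ share of the deletion budget.
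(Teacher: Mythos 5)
Your proposal is correct and takes essentially the same route as the paper: confirm $h=\sum_\ell h_\ell = 1/4$, pick $\ell_i=\ceil{\log_2 d_i}-2$ so that $s_{\ell_i}\in[d_i/4,d_i/2]$ and hence $\ell_i\in F_i$, then apply the second item of Lemma~\ref{lem:distribution-generalized} and substitute $1/h_{\ell_i}=4L_0\le 4(\log m+2)$. (Your parenthetical about clamping $\ell_i$ for very small $d_i$ touches an edge case that the paper also leaves implicit, so it does not constitute a divergence from its argument.)
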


\subsection{An adversarial error distribution for \texorpdfstring{$q\ge 3$}{q>=3}}
\label{sec:error_adversarial}
We now define a non-oblivious error distribution $\+D_{adv,i}$ which may depend on the decoder $\Dec$. Analyzing  $\+D_{adv,i}$ allows us to derive tighter lower bounds on the codeword length $m$ for a Insdel LDC with query complexity $q$. However, because the distribution is not oblivious the stronger lower bounds derived from $\+D_{adv,i}$ no longer apply in the private-key setting.

Fix $i \in [n]$. Let $\tp{K,D_1, D_2, \dots, D_{q-1}}$ be the random variable that corresponds to queries of $\Dec\tp{\cdot, m, i}$. For $1\leq \tau \leq \lceil \log(2m) \rceil$, let $p_{\tau, i}$ be the probability that $2^{\tau-1}\leq  D_2< 2^{\tau}$. Thus, $p_{\tau, i}$ is the probability that the decoder for the $i$-th bit ($\Dec(\cdot, m, i)$) queries a tuple $(k, d_1, \dots, d_{q-1})$ such that $2^{\tau-1}\leq d_2 < 2^{\tau}$. We have $\sum_{\tau=1}^{\ceil{\log (2m)}} p_{\tau,i} = 1$. 

We take $L = 2 L_0 $ where $L_0 = \ceil{\log (2m)}$. The vectors $\*s=\tp{s_1,\dots,s_{L}}$ and $\*h=\tp{h_1,\dots,h_{L}}$ are defined as follows.
\begin{itemize}
	\item $\forall \; 1\leq \ell \leq L_0$, $s_{\ell} = 2^{\ell}$, and $h_{\ell} = 1/(8L_0)$.
	\item $\forall \; 1\leq \tau \leq L_0$, $s_{L_0+\tau} = 2^{\tau-2}$, and $h_{d+L_0} = p_{\tau,i}/8$.
\end{itemize}

We define the adversary error distribution depending on $\Dec(\cdot, m, i)$ as $\+D_{adv,i} \coloneqq \+D(L, \*s, \*h)$. For this error distribution we also have
\begin{align*}
	h = \sum_{\ell=1}^{L_0}h_{\ell} + \sum_{\tau=1}^{L_0}h_{\tau+L_0} = L_0 \cdot \frac{1}{8L_0} + \frac{1}{8} \cdot \sum_{\tau=1}^{L_0}p_{\tau,i} = \frac{1}{4}.
\end{align*}

Let $\tp{k, d_1, d_2, \dots, d_{q-1}}$ be an arbitrary query in the support of $\Dec(\cdot, m, i)$ and $1\leq \tau_0\leq t$ be the integer such that $ 2^{\tau_0-1}\leq d_2 <2^{\tau_0}$. We set $\ell_{2} = L_0 + \tau_0$. Since $s_{\ell_{2}} = 2^{\tau_0-2}$, we have $d_2/4 \leq s_{\ell_2} \leq d_2/2$. Thus, $\ell_{2}\in F_{2}$ with $h_{\ell_2} = p_{\tau_0,i}/8$. 

For $3\leq j \leq q-1 $, we set $\ell_j = \lceil \log d_j \rceil-2\in F_j$.Thus, $h_{\ell_j} = 1/(8L_0) \geq 1/(8(\log m + 2 ))$ for $3\leq j \leq q-1$. We have the following corollary to Lemma~\ref{lem:distribution-generalized}.
\begin{corollary} \label{cor:distribution-adv}
Let $\tp{k', d_1', \dots, d_{q-1}'}$ be the random tuple that corresponds to the query\\$\tp{k, d_1, \dots, d_{q-1}}$ under error distribution $\+D_{adv,i}$. Let $\tau_0$ be the integer such that $ 2^{\tau_0-1}\leq d_{2} <2^{\tau_0}$. Then any support of $\tp{k', d_1', \dots, d_{q-1}'}$ has probability at most 
\begin{align*}
\frac{\tp{32/\delta}^q}{m} \cdot \frac{8^{q-2}(\log m+2)^{q-3}}{p_{\tau_0, i}} \cdot \prod_{\ell =1}^{q-1}\frac{1}{d_\ell} .
\end{align*}
\end{corollary}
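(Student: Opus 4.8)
The plan is to verify that the parameters $(L,\*s,\*h)$ chosen to define $\+D_{adv,i}$ satisfy the hypotheses of Lemma~\ref{lem:distribution-generalized}, and then simply read off the probability bound from that lemma with the particular layer indices $\ell_2,\ell_3,\dots,\ell_{q-1}$ specified just above the corollary. First I would confirm that $h=\sum_\ell h_\ell = 1/4 \le 1/4$, which is exactly the computation displayed right before the corollary statement (the $L_0$ layers of size $2^\ell$ contribute $L_0\cdot\frac{1}{8L_0}=\frac18$, and the $L_0$ layers of block size $2^{\tau-2}$ contribute $\frac18\sum_\tau p_{\tau,i}=\frac18$). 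So $\+D_{adv,i}=\+D(L,\*s,\*h)$ is a legitimate instance of the general construction from Section~\ref{sec:error}.

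Next I would check that for the fixed query $(k,d_1,d_2,\dots,d_{q-1})$ in the support of $\Dec(\cdot,m,i)$, each of the sets $F_2,\dots,F_{q-1}$ is nonempty, which is required to invoke the second bullet of Lemma~\ref{lem:distribution-generalized}. For the index $i=2$: letting $\tau_0$ be defined by $2^{\tau_0-1}\le d_2<2^{\tau_0}$ and setting $\ell_2=L_0+\tau_0$, we have $s_{\ell_2}=2^{\tau_0-2}$, so $d_2/4\le s_{\ell_2}\le d_2/2$ and $h_{\ell_2}=p_{\tau_0,i}/8\neq 0$ (note $p_{\tau_0,i}>0$ precisely because the query is in the support of $\Dec(\cdot,m,i)$, so some queried tuple has its second gap in $[2^{\tau_0-1},2^{\tau_0})$); hence $\ell_2\in F_2$. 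For $3\le j\le q-1$: setting $\ell_j=\lceil\log d_j\rceil-2$, the same estimate $\log_2 d_j-2\le\lceil\log_2 d_j\rceil-2\le\log_2 d_j-1$ used in Section~\ref{sec:error_oblivious} gives $d_j/4\le s_{\ell_j}=2^{\ell_j}\le d_j/2$, and $h_{\ell_j}=1/(8L_0)\neq 0$, so $\ell_j\in F_j$.

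Finally I would apply the second bullet of Lemma~\ref{lem:distribution-generalized} with this choice $(\ell_2,\dots,\ell_{q-1})\in F_2\times\cdots\times F_{q-1}$, obtaining that any support point of $(k',d_1',\dots,d_{q-1}')$ has probability at most
\[
\frac{(32/\delta)^q}{md_1}\prod_{i=2}^{q-1}\frac{1}{h_{\ell_i}d_i}
= \frac{(32/\delta)^q}{md_1}\cdot\frac{8}{p_{\tau_0,i}d_2}\cdot\prod_{j=3}^{q-1}\frac{8L_0}{d_j}
= \frac{(32/\delta)^q}{m}\cdot\frac{8^{q-2}L_0^{q-3}}{p_{\tau_0,i}}\cdot\prod_{\ell=1}^{q-1}\frac{1}{d_\ell},
\]
and since $L_0=\lceil\log(2m)\rceil\le\log m+2$ this is bounded by $\frac{(32/\delta)^q}{m}\cdot\frac{8^{q-2}(\log m+2)^{q-3}}{p_{\tau_0,i}}\cdot\prod_{\ell=1}^{q-1}\frac{1}{d_\ell}$, as claimed. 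This is essentially bookkeeping; the only point requiring a moment's care is making sure the substituted values of $h_{\ell_i}$ are the right ones ($p_{\tau_0,i}/8$ for the second gap versus $1/(8L_0)$ for the rest) and that the factor-counting of the $8$'s and $L_0$'s matches — there are $q-2$ factors of $8$ (one for each of $i=2,\dots,q-1$) and $q-3$ factors of $L_0$ (one for each of $j=3,\dots,q-1$). There is no real obstacle here; the substantive work was already done in proving Lemma~\ref{lem:distribution-generalized}.
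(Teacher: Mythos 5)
Your proposal is correct and matches the paper's own argument: the paper likewise verifies $h=1/4$, selects $\ell_2=L_0+\tau_0$ and $\ell_j=\ceil{\log d_j}-2$ to witness $\ell_j\in F_j$ with $h_{\ell_2}=p_{\tau_0,i}/8$ and $h_{\ell_j}=1/(8L_0)$, and then reads the bound off the second bullet of Lemma~\ref{lem:distribution-generalized}. Your bookkeeping of the $q-2$ factors of $8$ and $q-3$ factors of $L_0\le\log m+2$ agrees with the stated bound, and your remark that $p_{\tau_0,i}>0$ because the query lies in the decoder's support is a correct (implicit in the paper) detail.
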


\section{Lower Bounds For Private-key Insdel LDCs}
\label{sec:LB_obl}

We will prove the second part of Theorem~\ref{thm:obliviousLBcombined} in this section, since the proof is simpler. The error distribution is going to be $\+D_{obl}$ defined in Section~\ref{sec:error_oblivious}. Because  $\+D_{obl}$ is independent of the coding scheme, message and codeword the lower bounds apply in the private-key setting. Of course the lower bounds still apply for general LDCs. However, we can derive tighter lower bounds for general LDCs using a different error distribution which may depend on the local decoder $\Dec$ --- see Section \ref{sec:LB_adv}.

Let $c=4\ln(q/\eps) \ge 2$ be the constant from Lemma~\ref{lem:distribution-generalized}. For $j_1, j_2, \dots, j_{q-1} \in [t]$ where $t = \ceil{\log_c (2m)} \le \log m + 2$, denote
\begin{align*}
P_{j_1, \dots, j_{q-1}} = [2m] \times [c^{j_1-1}, c^{j_1}) \times \dots \times [c^{j_{q-1}-1}, c^{j_{q-1}}).
\end{align*}

\begin{claim} \label{clm:fraction-generalized}
Let $\gamma = \eps/\tp{256c^2/\delta}^q$. For any $i \in [n]$, there exist $j_1, \cdots, j_{q-1} \in [t]$ such that 
\begin{align*}
\abs{P_{j_1,\dots,j_{q-1}} \cap \Goody_i} \ge \frac{\gamma \abs{P_{j_1,\dots,j_{q-1}}}}{\tp{\log m + 2}^{q-2}}.
\end{align*}
\end{claim}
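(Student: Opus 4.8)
The plan is to repeat, in the $q-1$ "gap" coordinates, the argument that proves Claim~\ref{clm:fraction}, now using the oblivious error distribution $\+D_{obl}$ together with Lemma~\ref{lem:distribution-generalized}/Corollary~\ref{cor:distribution-obl} in place of Lemma~\ref{lem:distribution}. Fix $i\in[n]$ and assume $m$ is large enough (in terms of $\delta,\eps$) that the two failure probabilities in Proposition~\ref{prop:total-error-bound} sum to at most $\eps/4$. Conditioning $\+D_{obl}$ on the event $\{|D\cap[m]|\le\delta m\}\cap\{|D|\le m\}$, which has probability at least $1-\eps/4$, Lemma~\ref{lem:hitting} applies and shows $\Dec(\cdot,m,i)$ hits $\Goody_i$ with probability at least $3\eps/2$; hence unconditionally (over the decoder's coins and $D\sim\+D_{obl}$) this hit probability is at least $3\eps/2-\eps/4=5\eps/4$. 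Averaging over which query the non-adaptive decoder issues, there is a single query $Q=(k,d_1,\dots,d_{q-1})$ in the support of $\Dec(\cdot,m,i)$ with $\Pr_{D\sim\+D_{obl}}\left[Q^D\in\Goody_i\right]\ge 5\eps/4$.

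Next I would localize the random tuple $Q^D$ and count. We may assume $d_i\ge 4$ for every $i\in\{2,\dots,q-1\}$: this ensures $F_i\neq\varnothing$, so Lemma~\ref{lem:distribution-generalized} and Corollary~\ref{cor:distribution-obl} apply to $Q$; in the degenerate case where some gap is at most $3$, that coordinate of $Q^D$ takes only $O(1)$ values by the concentration bound and the corresponding conditional factor in the proof of Lemma~\ref{lem:distribution-generalized} can simply be replaced by $1$, which removes a $\log m$ factor and makes the target only easier. Set $c=4\ln(q/\eps)$ and $\+B=[2m]\times[d_1,cd_1]\times\cdots\times[d_{q-1},cd_{q-1}]$. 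By the first item of Lemma~\ref{lem:distribution-generalized}, $\Pr[Q^D\notin\+B]\le\eps$, so $\Pr[Q^D\in\Goody_i\cap\+B]\ge 5\eps/4-\eps=\eps/4$. By Corollary~\ref{cor:distribution-obl}, each individual outcome of $Q^D$ has probability at most $p_{\max}=(32/\delta)^q\,(4(\log m+2))^{q-2}/(m\,d_1\cdots d_{q-1})$, whence $|\Goody_i\cap\+B|\ge(\eps/4)/p_{\max}=\eps\,m\,d_1\cdots d_{q-1}\big/\big(4^{q-1}(32/\delta)^q(\log m+2)^{q-2}\big)$.

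Finally I would pass from $\+B$ to the scale classes $P_{j_1,\dots,j_{q-1}}$. Choosing $j_i'$ with $c^{j_i'-1}\le d_i<c^{j_i'}$, we have $[d_i,cd_i]\subseteq[c^{j_i'-1},c^{j_i'})\cup[c^{j_i'},c^{j_i'+1})$, so $\+B$ is covered by the at most $2^{q-1}$ classes $P_{j_1,\dots,j_{q-1}}$ with each $j_i\in\{j_i',j_i'+1\}\cap[t]$ (only coordinates $<2m$ are achievable, so no index leaves $[t]$); by pigeonhole one of them contains at least $2^{-(q-1)}|\Goody_i\cap\+B|$ points of $\Goody_i$. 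For that class $c^{j_i}\le c\cdot c^{j_i'}\le c^2d_i$, so $|P_{j_1,\dots,j_{q-1}}|\le 2m\prod_{i=1}^{q-1}c^{j_i}\le 2mc^{2(q-1)}\,d_1\cdots d_{q-1}$. Combining the last three estimates, the quantity $|\Goody_i\cap P_{j_1,\dots,j_{q-1}}|\,(\log m+2)^{q-2}/|P_{j_1,\dots,j_{q-1}}|$ is at least $\eps\big/\big(2\cdot 8^{q-1}(32/\delta)^q c^{2(q-1)}\big)$; since $256^q/(8^{q-1}32^q)=8$ and $8c^2\ge 2$ (as $c\ge 2$), this is at least $\eps/(256c^2/\delta)^q=\gamma$, which is exactly the claimed inequality.

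The heart of the argument is not any single step — the hitting lemma, the anti-concentration of $\+D_{obl}$, and the dyadic covering are already in place — but the constant bookkeeping: making sure exactly $q-2$ stray $\log m$ factors appear (one per gap coordinate $d_2,\dots,d_{q-1}$, coming from the block weights $h_\ell=1/(4L_0)$ in $\+D_{obl}$), and cleanly disposing of the degenerate short-gap queries for which Corollary~\ref{cor:distribution-obl} was not literally stated. A minor additional point is that the claim should be read for $m$ above a constant depending on $\delta,\eps$; otherwise $n$ is itself $O_{\delta,\eps,q}(1)$ and the downstream theorem holds vacuously.
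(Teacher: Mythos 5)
Your proposal is correct and follows essentially the same route as the paper's own proof: Proposition~\ref{prop:total-error-bound} plus Lemma~\ref{lem:hitting} to get a query hitting $\Goody_i$ with probability $5\eps/4$, the concentration and anti-concentration bounds of Lemma~\ref{lem:distribution-generalized}/Corollary~\ref{cor:distribution-obl} to lower-bound $\abs{\Goody_i}$ inside the box $[2m]\times[d_1,cd_1]\times\cdots\times[d_{q-1},cd_{q-1}]$, and the dyadic covering by $2^{q-1}$ subcubes with a pigeonhole step, with the constant bookkeeping matching the paper's. The only difference is your explicit handling of the short-gap case where some $F_i=\varnothing$ (which the paper leaves implicit), and this only strengthens the bound.
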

\begin{proof}
Fix any $i \in [n]$. Let $D \subseteq [2m]$ be a random set of deletions generated by $\+D_{obl}$. Let $\+E$ be the event that $\abs{D \cap [m]} \le \delta m$ and $\abs{D} \le m$. By Proposition~\ref{prop:total-error-bound} and a union bound, $\+E$ happens with probability at least $1-\exp\tp{-\delta^2 m/8}-\exp\tp{-(1-\delta)^2m} \ge 1-\eps/4$ for large enough $n$ (and thus large enough $m$). Therefore by Lemma~\ref{lem:hitting}, we have
\begin{align*}
\Pr\left[ \Dec(\cdot, m, i)\textup{ hits }\Goody_i \right] &\ge \Pr\left[ \Dec(\cdot, m, i)\textup{ hits }\Goody_i \mid \+E \right] \cdot \Pr\left[ \+E \right] \\
&\ge \frac{3\eps}{2} \cdot \tp{1-\frac{\eps}{4}} \\
&\ge \frac{5\eps}{4}.
\end{align*}
Here in the case of $|D| > m$ we simply assume that $\Dec(\cdot,m,i)$ never hits $\Goody_i$. By the first item of Lemma~\ref{lem:distribution-generalized} and a union bound, for at least one query $\tp{k,d_1,\dots,d_{q-1}}$ we have
\begin{align*}
	\Pr\left[ \tp{k',d_1', \dots, d_{q-1}'} \in \tp{[2m] \times [d_1, cd_1) \times \dots \times [d_{q-1}, cd_{q-1})} \cap \Goody_i \right] \ge \frac{\eps}{4},
\end{align*}
where $\tp{k',d_1', \dots, d_{q-1}'}$ corresponds to $\tp{k,d_1,\dots,d_{q-1}}$ under $\+D_{obl}$. By Corollary~\ref{cor:distribution-obl}, we have that
\begin{align*}
	\abs{\tp{[2m] \times [d_1, cd_1) \times \dots \times [d_{q-1}, cd_{q-1})} \cap \Goody_i} \ge \frac{\eps md_1\cdots d_{q-1}}{4\tp{32/\delta}^q \cdot \tp{ 4\tp{\log m + 2}}^{q-2}}.
\end{align*}
Take $j_1, \cdots, j_{q-1} \in [t]$ such that $c^{j_{\ell}-1} \le d_{\ell} < c^{j_{\ell}}$ for all $1 \le \ell \le q-1$. Note that for each $\ell \le q-1$, $[d_{\ell}, cd_{\ell}] \subseteq [c^{j_{\ell}-1}, c^{j_{\ell}+1}) = [c^{j_{\ell}-1}, c^{j_{\ell}}) \cup [c^{j_{\ell}}, c^{j_{\ell}+1})$. This implies
\begin{align*}
	[2m] \times [d_1, cd_1] \times \dots \times [d_{q-1}, cd_{q-1}] \subseteq \bigcup_{\forall 1\le \ell \le q-1, j_{\ell}' \in \set{j_{\ell}, j_{\ell}+1}}P_{j_1', \dots, j_{q-1}'}.
\end{align*}
Therefore for some $j_1', \dots, j_{q-1}' \in [t]$ we have
\begin{align*}
\abs{P_{j_1',\dots,j_{q-1}'} \cap \Goody_i} &\ge \frac{1}{2^{q-1}} \cdot \abs{\tp{[2m] \times [d_1, cd_1] \times \dots \times [d_{q-1}, cd_{q-1}]} \cap \Goody_i} \\
&\ge \frac{1}{2^{q-1}} \cdot \frac{\eps md_1\cdots d_{q-1}}{4\tp{32/\delta}^q \cdot \tp{ 4\tp{\log m + 2}}^{q-2}} \\
&\ge \frac{\eps}{(256c^2 /\delta)^{q}} \cdot \frac{\abs{P_{j_1',\dots,j_{q-1}'}}}{\tp{\log m + 2}^{q-2}} \\
&= \frac{\gamma \abs{P_{j_1',\dots,j_{q-1}'}}}{\tp{\log m + 2}^{q-2}}.
\end{align*}
Here the last inequality is because $\abs{P_{j_1',\dots,j_{q-1}'}} \le c^{2q}\cdot 2md_1d_2\cdots d_{q-1}$.
\end{proof}

For each $i \in [n]$, we fix a tuple $\*J_i = \tp{j_1, \dots, j_{q-1}} \in [t]^{q-1}$ such that 
\begin{align*}
\abs{P_{\*J_i} \cap \Goody_i} \ge \frac{\gamma\abs{P_{\*J_i}}}{\tp{\log m + 2}^{q-2}}.
\end{align*}
Such a $\*J_i$ exists as guaranteed by Claim~\ref{clm:fraction-generalized}. Given a tuple $\*J = \tp{j_1, \dots, j_{q-1}} \in [t]^{q-1}$, define
\begin{align*}
	G_{\*J} = \set{i \in [n] \colon \*J = \*J_i}.
\end{align*}

\begin{claim} \label{clm:GJ-ub}
$\forall \*J \in [t]^{q-1}, \abs{G_{\*J}} \le q\tp{\log m + 2}^{q-2} /\tp{\gamma\tp{1 - \+H(1/2+\eps/4)}}$.
\end{claim}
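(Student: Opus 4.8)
The plan is to adapt the information-theoretic argument behind Proposition~\ref{prop:info-theory} from a single query tuple $Q$ to the whole rectangular family $P_{\*J}$. First I would observe that for each $i \in G_{\*J}$, the set $\Goody_i$ contains at least a $\gamma/(\log m + 2)^{q-2}$-fraction of the tuples in $P_{\*J}$. Thus if I pick a tuple $Q$ uniformly at random from $P_{\*J}$, then for every fixed $i \in G_{\*J}$ we have $\Pr_Q[Q \in \Goody_i] \ge \gamma/(\log m + 2)^{q-2}$, which by linearity of expectation gives
\begin{align*}
	\E_Q\left[ \abs{H_Q \cap G_{\*J}} \right] \ge \frac{\gamma \abs{G_{\*J}}}{\tp{\log m + 2}^{q-2}},
\end{align*}
where $H_Q \subseteq [n]$ is the set of indices $i$ with $Q \in \Goody_i$, as defined right before Proposition~\ref{prop:info-theory}. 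In particular there exists at least one tuple $Q \in P_{\*J} \subseteq \binom{[2m]}{q}$ with $\abs{H_Q} \ge \abs{H_Q \cap G_{\*J}} \ge \gamma \abs{G_{\*J}}/(\log m + 2)^{q-2}$.

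Next I would invoke Proposition~\ref{prop:info-theory} directly on this tuple $Q$: it gives $\abs{H_Q} \le q/(1 - \+H(1/2 + \eps/4))$. Combining the two bounds on $\abs{H_Q}$ yields
\begin{align*}
	\frac{\gamma \abs{G_{\*J}}}{\tp{\log m + 2}^{q-2}} \le \abs{H_Q} \le \frac{q}{1 - \+H(1/2+\eps/4)},
\end{align*}
and rearranging gives $\abs{G_{\*J}} \le q\tp{\log m + 2}^{q-2}/\tp{\gamma\tp{1 - \+H(1/2+\eps/4)}}$, which is exactly the claimed bound. One technical point to be careful about: the averaging over $Q$ should be done over $P_{\*J}$ viewed as a subset of $\binom{[2m]}{q}$ via the bijection $\psi_{2m,q}$, so that $\Goody_i$ (a subset of $\binom{[2m]}{q}$) and $P_{\*J}$ (a subset of $\+S_{2m,q}$, i.e.\ indexed by $(k,d_1,\dots,d_{q-1})$) are compared consistently; this is the notational abuse already sanctioned in Section~\ref{sec:prelim}.

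I do not expect a serious obstacle here — the only mild subtlety is making sure the counting fraction $\gamma/(\log m+2)^{q-2}$ is applied as a lower bound on $\Pr_Q[Q\in\Goody_i]$ uniformly over $i \in G_{\*J}$ (which holds by the defining property of $\*J_i = \*J$), and then extracting a single good $Q$ from the expectation bound. The heart of the argument, the entropy inequality, is already packaged as Proposition~\ref{prop:info-theory}, so this claim is essentially a routine averaging wrapper around it.
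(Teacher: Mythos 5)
Your proposal is correct and is essentially the paper's argument: the paper counts pairs $(Q,i)$ with $Q \in \Goody_i$ in two ways over $P_{\*J} \times G_{\*J}$ and applies Proposition~\ref{prop:info-theory} termwise, which is exactly your averaging step (extracting a single $Q$ with $\abs{H_Q \cap G_{\*J}}$ at least the mean) phrased as a summation. The rearrangement and the use of the defining property of $\*J_i = \*J$ match the paper's proof, so no gap remains.
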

\begin{proof}
By counting the number of pairs $\tp{Q, i} \in P_{\*J} \times G_{\*J}$ such that $Q \in \Goody_i$ in two ways, we have 
\begin{align*}
	\sum_{Q \in P_{\*J}}\abs{H_Q \cap G_{\*J}} = \sum_{i \in G_{\*J}}\abs{P_{\*J} \cap \Goody_i}.  
\end{align*}
On the one hand, by Proposition~\ref{prop:info-theory} we have
\begin{align*}
\sum_{Q \in P_{\*J}}\abs{H_Q \cap G_{\*J}} \le \sum_{Q \in P_{\*J}}\abs{H_Q} \le \frac{q\abs{P_{\*J}}}{1-\+H(1/2+\eps/4)}.
\end{align*}
On the other hand, by definition of $G_{\*J}$ we have
\begin{align*}
\sum_{i \in G_{\*J}}\abs{P_{\*J} \cap \Goody_i} \ge \abs{G_{\*J}} \cdot \frac{\gamma\abs{P_{\*J}}}{\tp{\log m + 2}^{q-2}}.
\end{align*}
Rearranging gives the claim.
\end{proof}

Now we are ready to prove the second part of Theorem~\ref{thm:obliviousLBcombined}.  Of course the lower bound also applies in settings where the encoding/decoding scheme do not share secret random coins, but in these settings we can establish an even stronger bound by modifying $\+D_{obl}$ to depend on the specific encoding/decoding scheme. 

\obliviousLB*
\begin{proof}[Proof of the second part]
Note that $\cup_{\*J \in [t]^{q-1}}G_{\*J} = [n]$. Therefore by Claim~\ref{clm:GJ-ub} and substituting $\gamma = \tp{256c^2/\delta}^q$ we have
\begin{align*}
	n \le \sum_{\*J \in [t]^{q-1}}\abs{G_{\*J}} \le t^{q-1} \cdot \frac{q \tp{\log m + 2}^{q-2}}{\gamma\tp{1 - \+H(1/2+\eps/4)}} \le \frac{24}{(\ln 2)^2} \cdot \frac{1}{\eps^3} \cdot \tp{\frac{512c^2}{\delta}}^q \cdot \tp{\log m + 2}^{2q-3}
\end{align*}
where in the last inequality we used Proposition~\ref{prop:entropy-estimate}, $q \le 2q-3$ for $q\ge 3$ and $q \le 2^q$ for $q \ge 1$.
Substituting $c = 4\ln(q/\eps)$ and taking $C$ to be a large enough constant, we can write
\begin{align*}
    n \le \frac{1}{\eps^3} \cdot \tp{\frac{C}{\delta} \cdot \ln^2\tp{\frac{q}{\eps}} \cdot \log m}^{2q-3}.
\end{align*}
Solving for $m$ gives
\begin{align*}
    m \ge \exp\tp{\Omega\tp{\frac{\delta}{\ln^2(q/\eps)} \cdot \tp{\eps^3 n}^{1/(2q-3)}}}.
\end{align*}
Finally, we observe that $\+D_{obl}$ is oblivious to the encoding/decoding scheme and the specific codeword. Thus, the lower bound still applies even if the encoder/decoder share secret random coins.
\end{proof}

\section{Stronger Lower Bounds For Insdel LDCs}
\label{sec:LB_adv}

In this section, we prove the first part of Theorem~\ref{thm:obliviousLBcombined}. We assume the error distribution is $\+D_{adv, i}$ introduced in section~\ref{sec:error_adversarial}. Following the notation from section~\ref{sec:error_adversarial} and section~\ref{sec:LB_obl}, let $p_{\tau, i}$ be the probability that $\Dec(\cdot, m, i)$ queries a tuple $(k, d_1, \dots, d_{q-1})$ such that $2^{\tau-1}\leq d_2 < 2^{\tau}$. We have $\sum_{\tau=1}^{\ceil{\log (2m)}} p_{\tau,i} = 1$. Take $\eta = (256/\delta)^q$, $c=4\ln(q/\eps) \ge 2$ and denote $t=\ceil{\log_c(2m)}$. For $j_1, j_2, \dots, j_{q-1} \in [t]$, denote
\begin{align*}
P_{j_1, \dots, j_{q-1}} = [2m] \times [c^{j_1-1}, c^{j_1}) \times \dots \times [c^{j_{q-1}-1}, c^{j_{q-1}}).
\end{align*}

Let $I_\tau = \{P_{j_1, \dots, j_{q-1}} \colon 2^{\tau-1} \leq c^{j_2} \leq c^2 2^{\tau}\}$ be a set of subcubes. We define 
$$\beta_{\tau,i} = \max_{P_{\*J}\in I_\tau }\frac{\abs{P_{\*J}\cap \Goody_i}}{\abs{P_{\*J}}}. $$ 
Thus, $\beta_{\tau,i}$ is the maximum fraction of good points in any subcube $P_{\*J}$ in the set $I_{\tau}$.  

\begin{claim}
\label{clm:beta_lb}
For any $i\in [n]$, we have \[\sum_{\tau=1}^t \beta_{\tau, i}\geq \frac{\eps}{8\eta(2c^2)^{q-1}(\log m+2)^{q-3} }.\]
\end{claim}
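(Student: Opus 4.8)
The plan is to mimic the argument of Claim~\ref{clm:fraction-generalized}, but track the parameter $\tau$ corresponding to the second interval $d_2$ of the query, and to average the "hitting probability" across the relevant subcube in $I_\tau$. First I would fix $i \in [n]$ and run the deletion distribution $\+D_{adv,i}$. By Proposition~\ref{prop:total-error-bound} and a union bound the event $\+E$ that $\abs{D\cap[m]}\le \delta m$ and $\abs{D}\le m$ holds with probability at least $1-\eps/4$ for large enough $m$; combined with Lemma~\ref{lem:hitting} this gives that $\Dec(\cdot,m,i)$ hits $\Goody_i$ with probability at least $5\eps/4$ (declaring a miss when $|D|>m$). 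Hence there is at least one query $\tp{k,d_1,\dots,d_{q-1}}$ in the support of $\Dec(\cdot,m,i)$ whose corresponding tuple $\tp{k',d_1',\dots,d_{q-1}'}$ lands in $\Goody_i$ with probability at least $5\eps/4$, and by the first item of Lemma~\ref{lem:distribution-generalized} and a union bound, it lands in $\tp{[2m]\times[d_1,cd_1)\times\cdots\times[d_{q-1},cd_{q-1})}\cap\Goody_i$ with probability at least $\eps/4$.

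Next I would invoke Corollary~\ref{cor:distribution-adv}: letting $\tau_0$ be the integer with $2^{\tau_0-1}\le d_2<2^{\tau_0}$, every element in the support of $\tp{k',d_1',\dots,d_{q-1}'}$ has probability at most $\frac{(32/\delta)^q}{m}\cdot\frac{8^{q-2}(\log m+2)^{q-3}}{p_{\tau_0,i}}\cdot\prod_{\ell=1}^{q-1}\frac{1}{d_\ell}$. Dividing the $\eps/4$ lower bound by this per-point upper bound yields
\begin{align*}
\abs{\tp{[2m]\times[d_1,cd_1)\times\cdots\times[d_{q-1},cd_{q-1})}\cap\Goody_i} \ge \frac{\eps\, p_{\tau_0,i}\, m\, d_1\cdots d_{q-1}}{4\,(32/\delta)^q\,8^{q-2}(\log m+2)^{q-3}}.
\end{align*}
As in Claim~\ref{clm:fraction-generalized}, covering each $[d_\ell,cd_\ell)$ by at most two dyadic-in-$c$ intervals $[c^{j_\ell-1},c^{j_\ell})$, there exist $j_1,\dots,j_{q-1}\in[t]$ with $\abs{P_{j_1,\dots,j_{q-1}}\cap\Goody_i}\ge 2^{-(q-1)}$ times the left-hand side above; using $\abs{P_{j_1,\dots,j_{q-1}}}\le c^{2q}\cdot 2m\,d_1\cdots d_{q-1}$ this gives a lower bound on $\frac{\abs{P_{\*J}\cap\Goody_i}}{\abs{P_{\*J}}}$ of order $\frac{\eps\,p_{\tau_0,i}}{\eta\,(2c^2)^{q-1}(\log m+2)^{q-3}}$ for a suitable absorption of constants into $\eta=(256/\delta)^q$. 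The key point is that since $2^{\tau_0-1}\le d_2<2^{\tau_0}$ and $c^{j_2-1}\le d_2<c^{j_2}$ (after the covering, $c^{j_2}\le c^2 d_2 < c^2 2^{\tau_0}$ and $c^{j_2}\ge d_2\ge 2^{\tau_0-1}$), the subcube $P_{\*J}$ lies in $I_{\tau_0}$, so this ratio is a lower bound for $\beta_{\tau_0,i}$.

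Finally, summing over $\tau$: since $\beta_{\tau_0,i}\ge \frac{\eps\,p_{\tau_0,i}}{8\eta(2c^2)^{q-1}(\log m+2)^{q-3}}$ for the $\tau_0$ associated with the good query, and since for every other $\tau$ we trivially have $\beta_{\tau,i}\ge 0$, I would like to conclude $\sum_\tau \beta_{\tau,i}\ge \frac{\eps}{8\eta(2c^2)^{q-1}(\log m+2)^{q-3}}\sum_\tau p_{\tau,i} = \frac{\eps}{8\eta(2c^2)^{q-1}(\log m+2)^{q-3}}$; the step that needs care is that the inequality $\beta_{\tau,i}\ge \frac{\eps\,p_{\tau,i}}{8\eta(2c^2)^{q-1}(\log m+2)^{q-3}}$ must hold \emph{for every} $\tau$, not just the one coming from a single good query. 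The main obstacle, then, is to upgrade the single-query argument to a per-$\tau$ statement: I would do this by restricting attention, for each fixed $\tau$, to the sub-distribution of $\Dec(\cdot,m,i)$'s queries with $2^{\tau-1}\le d_2<2^{\tau}$ (which has total mass $p_{\tau,i}$) and re-running the hitting/anti-concentration argument conditioned on that event, so that the $\eps/4$ lower bound becomes $\Omega(\eps p_{\tau,i})$ (the factor $p_{\tau,i}$ appears because a random query is in this class only with probability $p_{\tau,i}$, while on the complementary event we can only guarantee a trivial contribution). Carrying the $p_{\tau,i}$ factor through the division by the Corollary~\ref{cor:distribution-adv} bound — whose denominator already contains $p_{\tau,i}$ — must be done carefully so the two do not spuriously cancel; the correct bookkeeping is that the $\eps/4$ becomes $\eps p_{\tau,i}/4$ while the per-point probability bound's $1/p_{\tau,i}$ stays, giving a net $p_{\tau,i}^2$ in the numerator of the count but only one $p_{\tau,i}$ after dividing by $\abs{P_{\*J}}$ is, in fact, $p_{\tau,i}$ in the final ratio, matching the claim. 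Once this per-$\tau$ bound is in hand, summing over $\tau\in[t]$ and using $\sum_\tau p_{\tau,i}=1$ finishes the proof.
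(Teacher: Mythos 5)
Your first half (the single-query analysis giving $\beta_{\tau_0,i} \ge \eps\, p_{\tau_0,i}/\bigl(8\eta(2c^2)^{q-1}(\log m+2)^{q-3}\bigr)$ for the class $\tau_0$ of one good query) is sound, and you correctly identify that it is not enough because $p_{\tau_0,i}$ may be tiny. The genuine gap is in the proposed upgrade: you assert that, after restricting to the sub-distribution of queries with $2^{\tau-1}\le d_2<2^{\tau}$, ``the $\eps/4$ lower bound becomes $\Omega(\eps\, p_{\tau,i})$'' \emph{for every} $\tau$. This does not follow from anything available. Lemma~\ref{lem:hitting} only lower-bounds the \emph{aggregate} probability that $\Dec(\cdot,m,i)$ hits $\Goody_i$; it gives no per-class guarantee. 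A perfectly legitimate decoder could have conditional hitting probability $0$ on some classes $\tau$ (so that $\beta_{\tau,i}$ may be $0$ even when $p_{\tau,i}$ is large) and compensate on other classes, so the inequality $\beta_{\tau,i}\ge \eps\, p_{\tau,i}/(8\eta(2c^2)^{q-1}(\log m+2)^{q-3})$ you need for the final summation is simply false in general. (The accompanying bookkeeping sentence about a ``net $p_{\tau,i}^2$'' also does not parse: if you condition on the class, the per-point bound of Corollary~\ref{cor:distribution-adv} contributes exactly one factor $1/p_{\tau,i}$ and the conditional hitting probability contributes none, so no cancellation issue of the kind you describe arises.)

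The paper avoids this by running the inequality in the opposite direction: for each \emph{fixed} query $Q$ it \emph{upper}-bounds the probability that the corresponding tuple hits $\Goody_i$ by $\eps + 2\beta_{\tau_Q,i}\cdot(2c^2)^{q-1}\eta(\log m+2)^{q-3}/p_{\tau_Q,i}$ (using Lemma~\ref{lem:distribution-generalized}, Corollary~\ref{cor:distribution-adv}, and the fact that all $2^{q-1}$ covering subcubes lie in $I_{\tau_Q}$, so their good fractions are at most $\beta_{\tau_Q,i}$), then averages over the decoder's query distribution $\mu_i$; within each class the weight $\sum_{Q\colon\tau_Q=\tau}\mu_i(Q)=p_{\tau,i}$ cancels the $1/p_{\tau,i}$, and comparing with the aggregate lower bound $5\eps/4$ yields $\sum_\tau\beta_{\tau,i}\ge \eps/\bigl(8\eta(2c^2)^{q-1}(\log m+2)^{q-3}\bigr)$ directly. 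If you want to keep your per-class framing, the correct repair is to treat the conditional hitting probability $h_\tau$ of each class as an unknown: your argument gives $\beta_{\tau,i}\ge (h_\tau-\eps)p_{\tau,i}/\bigl(2\eta(2c^2)^{q-1}(\log m+2)^{q-3}\bigr)$ (trivially true when $h_\tau\le\eps$), and only at the end do you use the single aggregate fact $\sum_\tau p_{\tau,i}h_\tau\ge 5\eps/4$; never assume $h_\tau\ge\eps/4$ classwise.
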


\begin{proof}

We fix $i\in [n]$. Let $Q=(k, d_1, \dots, d_{q-1})$ be an arbitrary query in the support of $\Dec(\cdot, m, i)$, and let $Q' = (k',d'_1, \dots, d'_{q-1})$ be the random tuple corresponding to $(k, d_1,\dots,d_{q-1})$ under error distribution $\+D_{adv,i}$. 

For $1\leq \ell \leq q-1$, we let $j_\ell'$ be the integer such that $c^{j_\ell'-1}\leq d_\ell < c^{j_\ell'}$. We have $[d_\ell, cd_\ell]\subseteq [c^{j_\ell'-1}, c^{j_\ell'})\cup [c^{j_\ell'}, c^{j_\ell'+1}) $. Let $U_Q$ be a set of $2^{q-1}$ tuples $\tp{j_1, \dots, j_{q-1}}$ such that $j_{\ell} \in \set{j_{\ell}', j_{\ell}'+1}$ for all $\ell \in [q-1]$ (if $j'_\ell=t$, fix $j_{\ell} = j'_\ell$). By Lemma~\ref{lem:distribution-generalized}, with probability at least $1-\eps$, we have
\[(k', d'_1, \dots, d'_{q-1})\in \bigcup_{\*J \in U} P_{\*J}. \]

Denote this event by $\+E$. We now give an upper bound of the probability that $Q'$ hits $\Goody_i$ in terms of the $\beta_{\tau,i}$'s. Let $1\leq \tau_Q \leq \ceil{ \log (2m) }$ be the integer such that $2^{\tau_Q-1}\leq d_2 < 2^{\tau_Q}$. Notice that $2^{\tau_Q-1}\leq d_2 < c^{j'_2} < c^{j'_2+1}$ and $c^{j'_2+1} \leq c^2 d_2 < c^2 2^{\tau_Q} $. We have $2^{\tau_Q-1}\leq c^{j'_2}< c^{j'_1+1}\leq c^2 2^{\tau_Q}$. Thus for any $\*J \in U_{Q}$, we have $P_{\*J}\in I_{\tau_Q}$. By our definition of $\beta_{\tau_Q, i}$, we have 
\[\beta_{\tau_Q, i}\geq \frac{\abs{P_{\*J}\cap \Goody_i}}{\abs{P_{\*J}}}.\]

By Corollary~\ref{cor:distribution-adv}, any support of $(k',d'_1, \cdots, d'_{q-1})$ has probability at most
\begin{align*}
\frac{\eta (\log m+2)^{q-3} }{md_1 \dots d_{q-1} p_{\tau_Q, i}} 
\end{align*}
for $\eta = \tp{256/\delta}^q$. 

The size of any subcube $P_{j_1,\dots,j_{q-1}}$ is bounded by $2mc^{j_1}\cdots c^{j_{q-1}} $. Since for $\*J \in U_Q$ we have $c^{j_\ell}\leq c^{j_\ell'+1}\leq c^2d_\ell$ for any $1\leq \ell \leq q-1$, we have $\abs{P_{\*J}}\leq (c^2)^{q-1}\cdot 2md_1\cdots d_{q-1}$. Thus the probability that $(k', d'_1, \dots, d'_{q-1})$ hits $\Goody_i$ can be bounded by
\begin{align*}
& \Pr\left[(k', d'_1, \dots, d'_{q-1})\text{ hits }\Goody_i \right]\\
\leq &  \Pr\left[ \overline{\+E} \right] + \Pr\left[(k', d'_1, \dots, d'_{q-1}) \in \Goody_i \cap \bigcup_{\*J \in U_Q}P_{\*J} \right] \\
\leq &  \eps + \frac{\eta (\log m+2)^{q-3} }{md_1 \cdots d_{q-1} p_{\tau_Q, i}}\cdot \beta_{\tau_Q, i} \cdot \sum_{\*J\in U_Q} \abs{P_{\*J}} \\
\leq &\eps + 2\beta_{\tau_Q, i}\cdot \frac{(2c^2)^{q-1} \eta (\log m +2)^{q-3}}{p_{\tau_Q, i}}.
\end{align*}

Denote by $\mu_i(\cdot)$ the probability distribution of the queries of $\Dec(\cdot,m,i)$. For the probability that $\Dec(\cdot, m, i)$ hits $\Goody_i$, we have
\begin{align*}
    &\Pr\left[ \Dec(\cdot, m, i)\text{ hits } \Goody_i \right] \\
    = & \sum_{Q\in \binom{[2m]}{q} }\mu_i(Q) \cdot \Pr\left[ \Dec(\cdot,m,i) \text{ hits } \Goody_i \mid \Dec(\cdot,m,i) \textup{ queries } Q \right] \\
    \leq & \sum_{Q\in \binom{[2m]}{q} } \mu_i(Q) \cdot \tp{\eps + 2 \beta_{\tau_Q, i}\cdot \frac{(2c^2)^{q-1}\eta(\log m+2)^{q-3}}{p_{\tau_Q, i}}} \\
    = & \sum_{Q\in \binom{[2m]}{q} }\mu_i(Q)\eps + \sum_{Q\in \binom{[2m]}{q} } \mu_i(Q) \cdot 2\beta_{\tau_Q, i}\cdot \frac{(2c^2)^{q-1}\eta (\log m + 2)^{q-3}}{p_{\tau_Q, i}} \\
    =& \eps + (2c^2)^{q-1}\eta(\log m+2)^{q-3} \cdot \sum_{\tau=1}^{\ceil{\log (2m)}} 2\beta_{\tau,i} \cdot \frac{1}{p_{\tau,i}} \cdot \sum_{Q \colon \tau_Q=\tau} \mu_i(Q) \\
    \le& \eps + 2(2c^2)^{q-1}\eta(\log m+2)^{q-3} \sum_{\tau=1}^{\ceil{\log(2m)}} \beta_{\tau,i}. 
\end{align*}

The last equality is due to the fact that for any $1\leq \tau \leq \ceil{\log(2m)}$, 
\[p_{\tau, i} = \sum_{Q \colon \tau_Q = \tau} \mu_i(Q).\]

Similar to the argument used in the proof of Claim~\ref{clm:fraction-generalized}, by Proposition~\ref{prop:total-error-bound} and Lemma~\ref{lem:hitting}, for large enough $n$ (and thus $m$) the probability that $\Dec(\cdot,m,i)$ hits $\Goody_i$ is at least $3\eps/2-\eps/4 = 5\eps/4$. Thus
\[ 2(2c^2)^{q-1}\eta(\log m+2)^{q-3} \sum_{\tau=1}^{\ceil{\log(2m)}} \beta_{\tau,i}\geq \eps/4.\]

\end{proof}

\begin{claim}
\label{clm:beta_ub}
$\sum_{i=1}^n\sum_{\tau=1}^{\ceil{\log(2m)}} \beta_{\tau, i}\leq 3q\log c \cdot t^{q-1}/\tp{1-\+H(1/2+\eps/4)}.$
\end{claim}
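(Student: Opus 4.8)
The plan is to prove Claim~\ref{clm:beta_ub} by a double-counting argument on good queries, mirroring the structure of Claim~\ref{clm:GJ-ub} but summing over the relevant family of subcubes rather than a single one. First I would fix $i \in [n]$ and recall that $\beta_{\tau, i} = \max_{P_{\*J} \in I_\tau} \abs{P_{\*J} \cap \Goody_i}/\abs{P_{\*J}}$; for each $\tau$ let $P_{\*J(\tau, i)}$ be a subcube in $I_\tau$ attaining this maximum. Then $\beta_{\tau,i} \le \abs{P_{\*J(\tau,i)} \cap \Goody_i}/\abs{P_{\*J(\tau,i)}}$, and I want to bound $\sum_{i,\tau} \abs{P_{\*J(\tau,i)} \cap \Goody_i}/\abs{P_{\*J(\tau,i)}}$. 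The idea is to pass to counting pairs $(Q, i)$ with $Q \in P_{\*J(\tau,i)}$ and $Q \in \Goody_i$ (equivalently $i \in H_Q$), weighted by $1/\abs{P_{\*J(\tau,i)}}$, and then apply Proposition~\ref{prop:info-theory} which says $\abs{H_Q} \le q/(1 - \+H(1/2+\eps/4))$ for every $Q$.

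The key technical point is to handle the overcounting: a fixed tuple $\*J \in [t]^{q-1}$ can belong to $I_\tau$ for several values of $\tau$, but only boundedly many. Specifically, $P_{\*J} \in I_\tau$ requires $2^{\tau-1} \le c^{j_2} \le c^2 2^\tau$, i.e. $\tau \in [\log_2(c^{j_2}) - 2, \log_2(c^{j_2}) + 1]$, an interval of length $3$, so each $\*J$ lies in at most $3\log c + O(1)$ — more precisely, let me just say at most $3\log c$ after being slightly careful — of the sets $I_\tau$ (the factor $\log c$ enters because going from the $c$-adic index $j_2$ to the dyadic scale $\tau$ multiplies the window size by $\log_2 c$). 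This is where the $3q\log c$ factor in the statement comes from, combined with the $q$ from Proposition~\ref{prop:info-theory}. Concretely I would write
\begin{align*}
\sum_{i=1}^n \sum_{\tau=1}^{\ceil{\log(2m)}} \beta_{\tau,i} &\le \sum_{i=1}^n \sum_{\tau} \frac{\abs{P_{\*J(\tau,i)} \cap \Goody_i}}{\abs{P_{\*J(\tau,i)}}} \le \sum_{\*J \in [t]^{q-1}} \sum_{\tau : P_{\*J} \in I_\tau} \sum_{i : \*J(\tau,i) = \*J} \frac{\abs{P_{\*J} \cap \Goody_i}}{\abs{P_{\*J}}},
\end{align*}
then for a fixed $\*J$ and fixed $\tau$ bound $\sum_{i : \*J(\tau,i)=\*J} \abs{P_{\*J}\cap\Goody_i} = \sum_{i} \abs{\{Q \in P_{\*J} : i \in H_Q, \*J(\tau,i)=\*J\}} \le \sum_{Q \in P_{\*J}} \abs{H_Q} \le q\abs{P_{\*J}}/(1-\+H(1/2+\eps/4))$, so each inner double sum over $i$ contributes at most $q/(1-\+H(1/2+\eps/4))$, the number of valid $\tau$ per $\*J$ is at most $3\log c$, and there are $t^{q-1}$ choices of $\*J$, giving the claimed bound.

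The main obstacle I anticipate is getting the constant in the ``number of $\tau$ with $P_{\*J} \in I_\tau$'' count exactly right so it matches $3\log c$ and not something slightly larger, since the definition $2^{\tau-1} \le c^{j_2} \le c^2 2^\tau$ has asymmetric endpoints; I would resolve this by taking logs base $2$, observing the admissible $\tau$ form an integer interval of real-length $3$ hence containing at most $\lfloor 3 \rfloor + 1 = 4$ — but since we only need an upper bound of the form $3q\log c \cdot t^{q-1}/(1-\+H(1/2+\eps/4))$ and $\log c \ge 1$, any constant absorbed into a window count of size $O(\log c)$ is fine, so I would just verify the window has size at most $3\log c$ (using $\log c \ge 1$ to absorb the additive $O(1)$, or directly noting $c \ge 2$ so $\log_2 c \ge 1$). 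A secondary minor point is that the maximizing subcube $P_{\*J(\tau,i)}$ for a given $(\tau,i)$ is not unique; this is harmless since we only need an upper bound, so I fix an arbitrary maximizer. The rest is the routine double-counting and an application of Proposition~\ref{prop:info-theory}, exactly as in Claim~\ref{clm:GJ-ub}.
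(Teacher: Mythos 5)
Your proposal is correct and follows essentially the same double-counting argument as the paper: bound each $\beta_{\tau,i}$ by the good-fraction of a subcube in $I_\tau$, apply Proposition~\ref{prop:info-theory} to get $\sum_i \abs{P_{\*J}\cap\Goody_i} \le q\abs{P_{\*J}}/(1-\+H(1/2+\eps/4))$, and control the overcount by noting each $\*J$ lies in at most $3\log c$ of the sets $I_\tau$ while there are at most $t^{q-1}$ subcubes (the paper instead bounds the max by the sum over $I_\tau$ and then bounds $\sum_\tau \abs{I_\tau}$, a cosmetic difference). One small fix: the admissible window is $\tau \in [j_2\log_2 c - 2\log_2 c,\; j_2\log_2 c + 1]$, of length $2\log c + 1$ rather than $3$, so the count of valid $\tau$ per $\*J$ is at most $2\log c + 2 \le 3\log c$, which needs $\log c \ge 2$ (true here since $c = 4\ln(q/\eps) > 4$), exactly as in the paper's bound $\ceil{\log(2c^2)} \le 3\log c$.
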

\begin{proof}
By Proposition~\ref{prop:info-theory}, for any tuple $Q\in \binom{[m]}{k}$, $Q$ is in $\Goody_i$ for at most $\frac{q}{1-\+H(1/2+\eps/4)}$ different $i$'s. Thus for any subcube $P_{\*J}$, we have
\[\sum_{i=1}^n\abs{P_{\*J}\cap\Goody_i}\leq \frac{q}{1-\+H(1/2+\eps/4)}\abs{P_{\*J}}. \]

Meanwhile, by the definition of $\beta_{\tau,i}$, we have
\[\beta_{\tau, i}=\max_{P_{\*J}\in I_\tau}\frac{\abs{P_{\*J}\cap \Goody_i}}{\abs{P_{\*J}}}\leq \sum_{P_{\*J}\in I_\tau}\frac{\abs{P_{\*J}\cap \Goody_i}}{\abs{P_{\*J}}}. \]
Combining the above two inequalities, we have 
\[\sum_{i=1}^n\beta_{\tau,i}\leq \sum_{i=1}^n \sum_{P_{\*J}\in I_\tau}\frac{\abs{P_{\*J}\cap \Goody_i}}{\abs{P_{\*J}}} \leq \frac{q}{1-\+H(1/2+\eps/4)}\abs{I_\tau}.\]

By the definition of $I_\tau$, each subcube $P_{\*J}$ belongs to at most $\ceil{\log 2 c^2}\leq 3\log c$ consecutive $I_{\tau}$'s. Notice that the total number of subcubes is bounded by $t^{q-1}$. By counting the number of subcubes, we have 
\[\sum_{\tau=1}^{\ceil{\log(2m)}} \abs{I_\tau}\leq 3\log c \cdot t^{q-1}. \]
Thus,
\[\sum_{i=1}^n\sum_{\tau=1}^{\ceil{\log(2m)}} \beta_{\tau, i}\leq  \frac{q}{1-\+H(1/2+\eps/4)}\cdot \sum_{\tau=1}^{t}\abs{I_\tau}\leq \frac{q\cdot 3\log c \cdot t^{q-1}}{1-\+H(1/2+\eps/4)}. \]

\end{proof}

Now we are ready to prove the first part of Theorem~\ref{thm:obliviousLBcombined}.

\obliviousLB*

\begin{proof}[Proof of the first part]
By Claim~\ref{clm:beta_lb}, we have 
\[\sum_{i=1}^n\sum_{\tau=1}^{\ceil{\log(2m)}} \beta_{\tau, i}\geq \frac{n\eps}{8\eta(2c^2)^{q-1}(\log m+2)^{q-3}}.\]

Combined with Claim~\ref{clm:beta_ub}, we have 
\begin{align*}
    \frac{n\eps}{8\eta(2c^2)^{q-1}(\log m+2)^{q-3}}
    & \leq \frac{q\cdot 3\log c \cdot t^{q-1}}{1-\+H(1/2+\eps/4)}.
\end{align*}

Plugging in $\eta = (256/\delta)^q$ and $c = 4\ln(q/\eps)$, and noticing that $t \le \log m + 2$, $q\le 2^q$, $3\log c \le c^2$, for some large enough constant $C$ we have 
\begin{align*}
    n &\leq \frac{1}{\eps(1-\+H(1/2+\eps/4))}\cdot \tp{\frac{C\ln^2(q/\eps)}{\delta}}^q (\log m+ 2)^{2q-4}.
\end{align*}

By Proposition~\ref{prop:entropy-estimate}, we have $1-\+H(1/2+\eps/4) = \Omega(\eps^2)$. We can rewrite the above inequality as

\[ m = \exp \tp{ \Omega\tp{ \tp{\frac{\delta}{\ln^2(q/\eps)}}^{\frac{q}{2q-4}}\cdot \tp{\eps^3 n}^{\frac{1}{2q-4}}}}.\]

Thus, for $q = 3$, we have $m = \exp(\Omega_{\delta, \eps}(\sqrt{n}))$. For $q\geq 4$, we have $\frac{q}{2q-4}\leq 1$ and  $\tp{\frac{\delta}{\ln^2(q/\eps)}}^{\frac{q}{2q-4}} = \Omega\tp{\frac{\delta}{\ln^2(q/\eps)}}$. We can write 
\[ m = \exp \tp{ \Omega\tp{ \frac{\delta}{\ln^2(q/\eps)}\cdot \tp{\eps^3 n}^{\frac{1}{2q-4}}}}.\]

\end{proof}

\bibliographystyle{alpha}
\bibliography{references}

\appendix
\section{A Note on the Definition of Insdel LDCs}\label{sec:note}
Recall the definition of Insdel LDCs from Definition \ref{def:InsdelLDC}. 

\DefInsdelLDC*

In our definition of Insdel LDCs we assume that the decoder $\Dec$ is directly given $m'$, the length of the corrupted codeword $y$. Arguably it may be more reasonable to require $\Dec$ to recover $x_i$ without a priori knowledge of the length $m'$. This question does not arise in the definition of Hamming LDCs as the length of the corrupted codeword is fixed. If we do not give the Insdel decoder the length $m'$ then $\Dec$ may query for an out of range index $j>m'$ and we would need to define how such queries are handled e.g., if $\Dec$ queries for $y[j]$ for $j > m'$ we might return $\bot$ to indicate that the query is out of range. In this case the decoder could always recover $m'$ after $O(\log m)$ queries by using binary search to find the maximum $j$ such  that $y[j] \neq \bot$.

We stress that giving the local decoder access to $m'$ can only help the decoder. If the information is not helpful the decoder can always chose to ignore the extra information $m'$. Since our focus is on proving lower bounds we chose to give $\Dec$ access to the length $m'$ which only makes the lower bounds stronger. 

Another modification of Definition \ref{def:InsdelLDC} might allow for the Insdel codewords $C(x)$ to have variable length i.e., $C: \Sigma^n \rightarrow \Sigma^{\leq m}$. Now if we require that this insdel distance between $C(x)$ and the corrupted codeword $y$ is at most $2\delta |C(x)|$ and if we additionally give $\Dec$ access to the length $m' = |y|$ of of the corrupted codeword then the encoding algorithm can ``cheat'' and use codeword length to encode $x$. For example, when $\Sigma=\{0,1\}$ and $\delta<1/6$ we could define a $(q=0, \delta, \epsilon=\frac{1}{2})$-insdel LDC $C:\{0,1\}^n \rightarrow \{0,1\}^{\leq m}$ with $m=2^{2^n}$ as follows: define a bijective mapping $\mathtt{Int}:\{0,1\}^n \rightarrow \{0,\ldots, 2^n-1\}$ in the natural way and then set
$C(x) = 1^{2^{\mathtt{Int}(x)+1}}$ i.e., $1$ repeated $2^{{\mathtt{Int}(x)+1}}$ times. If $\delta < 1/6$ then $m' = |y|$ must lie in the range \[ \frac{2}{3} 2^{{\mathtt{Int}(x)+1}} < m' < \frac{4}{3} 2^{{\mathtt{Int}(x)+1}} = \frac{2}{3} 2^{\mathtt{Int}(x)+2}  \] since we require that the insdel distance between $C(x)$ and $y$ is at most $2\delta |C(x)| < 2^{\mathtt{Int}(x)+1}/3$. This allows the local insdel decoder to recover the entire message $x$ (and any particular bit $x_i$) from $m'$ without any queries to the corrupted codeword $y$. In particular, the decoder could find the unique integer $k$ such that $\frac{2}{3} 2^{2^{k+1}} < m' < \frac{2}{3} 2^{k+2}$ and then recover $x = \mathtt{Int}^{-1}(k)$. 

Arguably the above construction ``cheats'' by encoding the message $x$ in the length of the codeword (unary) and allowing the decoder to directly learn the length of the (corrupted) codeword. If we do not allow the decoder to directly learn the length $m'$ of the corrupted message then there are no known constructions of $(q=2,\delta, \epsilon)$-insdel LDCs for any constants $\delta, \epsilon > 0$ --- for any information rate $n/m$. 

\section{Definition of Private-Key (Insdel) LDCs} \label{subsec:privkeyinsdel}
In the private-key setting the encoder $C(x; R)$ and decoder $\Dec(y,m',i; R)$ are given access to shared (secret) set of random coins $R$ which is not given to the channel $\mathcal{A}$. Fixing $x$ and $R$ we say that a corrupted codeword $y$ $\epsilon$-fools the decoder if there exists an index $i \leq n$ such that $ \Pr[\Dec(y,m',i; R) = x_i] < \frac{1}{2} + \epsilon$. In the classical setting (no shared randomness) we require that no corrupted codeword $y$ with $\ED\tp{C(x;R), y}$ has the property that it $\epsilon$-fools the decoder. In the private-key setting we relax this requirement and allow that $\epsilon$-fooling codewords $y$ exist so long as the probability the channel $\mathcal{A}$ outputs such a codeword is negligible. Formally, let $\mathtt{FoolED}(\mathcal{A}, x, R) = 1$ (resp. $\mathtt{FoolHamm}(\mathcal{A}, x, R) = 1$)  denote the event that the corrupted codeword $y=\mathcal{A}(x,C(x;R))$ output by the channel $\epsilon$-fools the decoder and $\ED\tp{C(x;R), y} \leq 2\delta m$ (resp. $\mathtt{Hamm}(C(x;R), y) \leq \delta m$). Crucially, the random coins $R$ are not known to the channel otherwise $\mathcal{A}$ could do a brute-force search to find such an $\epsilon$-fooling string. 

We say that the pair $(C,\Dec)$ is a $(q,\delta, \epsilon)$-private-key LDC for Insdel (resp. Hamming) errors if there is a negligible function $\mu(\cdot)$ such that for all channels $\mathcal{A}$ and messages $x \in \Sigma^n$ we have 
$\Pr_R[\mathtt{FoolED}(\mathcal{A}, x, R)] \leq \mu(n)$ (resp. $\Pr_R[\mathtt{FoolHamm}(\mathcal{A}, x, R)] \leq \mu(n)$). 

\paragraph{Prior Private-Key Constructions} \cite{OPS07} gives a $(q,\delta, \epsilon)$-private-key Hamming LDC with query complexity $q= \log^2 n$ and  $\epsilon > \frac{1}{2} - \mu(n)$ for a negligible function $\mu(n)$ i.e., except with negligible probability $\mu(n)$ the channel outputs a codeword $y$ such that $\forall i \leq n$ we have $\Pr[\Dec(y,m',i; R) = x_i] \geq 1-\mu(n)$. \cite{block2021private} gives a $(q,\delta, \epsilon)$-private-key Insdel LDC with query complexity $q= \log^c n$ and  $\epsilon > \frac{1}{2} - \mu(n)$ for a negligible function $\mu(n)$. The result is obtained by applying the Hamming to Insdel compiler of \cite{BlockBGKZ20} to \cite{OPS07}. 

\paragraph{$q=1$ query private-key Hamming LDC} 
We can also extend ideas from \cite{OPS07} to obtain a $(q=1, \delta, \epsilon)$-private-key Hamming LDC for suitable constants $\delta + \epsilon < \frac{1}{2}$. The length of the codeword is just $m=n \log^2 n$. In specific we define $C(x; R=(\pi, \mathtt{OTP})) = \pi\left (x^t \oplus \mathtt{OTP}\right)$. Here, $\mathtt{OTP} \in \{0,1\}^{tn}$ is a uniformly random $tn$-bit string used as a one-time-pad and $x^t \in \{0,1\}^{tn}$ denotes the string $x$ concatenated to itself $t$ times i.e., $x^1 \doteq x$ and $x^{i+1} \doteq x^{i} \circ x$. Note that for any $x$ the string $\mathtt{OTP} \oplus x^t$ is distributed uniformly at random. The random permutation $\pi: [tn] \rightarrow [tn]$  randomly shuffles the bits of $z=x^t \oplus \mathtt{OTP}$ e.g., if $z=z[1] \circ \ldots \circ z[tn]$ then $\pi(z) = z[\pi(1)] \circ \ldots \circ z[\pi(tn)]$. 

Fixing $\pi$ and an index $i \leq n$ we can define $S_{\pi,i}\doteq \{ \pi(i), \ldots, \pi(i+tn)\}$ to be the set of indices of the codeword $C(x; \pi, \mathtt{OPT})$ which correspond to $x[i]$ i.e., such that $C(x; \pi, \mathtt{OPT})[j] \oplus \mathtt{OTP}[j] = x[i]$. The decoder $\Dec(y, m', i; R=(\pi, \mathtt{OTP}))$ will randomly pick $j \in S_{\pi, i}$ and return $y[j] \oplus \mathtt{OTP}[j]$ as our guess for $x[i]$. Now a corrupted codeword $y$ $\epsilon$-fools the decoder if and only if for some $i \leq n$ at least $\left(\frac{1}{2}-\epsilon\right)$-fraction of the bits in $S_{\pi, i}$ were flipped i.e.,  $\left| \left\{j \in S_{\pi,i} : y[j] = C(x; \pi, \mathtt{OPT})[j]   \right\} \right| \leq t \left(\frac{1}{2} + \epsilon\right)$.  

Because the channel $\mathcal{A}$ does not have $\pi$ or $\mathtt{OTP}$ and can flip at most $\delta m$ bits in the codeword the expected number of bit flips in $S_{\pi, i}$ is just $\delta t< \left(\frac{1}{2}-\epsilon \right) t$. Applying concentration bounds and union bounding we have 
\[ \Pr[\exists i. ~\left| \left\{j \in S_{\pi,i} : y[j] = C(x; \pi, \mathtt{OPT})[j]   \right\} \right| \leq t \left(\frac{1}{2} + \epsilon\right)] \leq \mu(n) \] for a negligible function $\mu(n)$ whenever we set $t=\log^2 n$. Here, the randomness is taken over the selection of $\pi$ and $\mathtt{OTP}$.

\section{A Note on Hadamard Codes and Type 1 Errors}
\label{sec:hadamard_note}

In this section, we consider the following error pattern corresponding to error type 1 mentioned in the technique overview i.e., pick arbitrary $e \leq \delta m/2$ and delete the first $e$ bits from the codeword and append $e$ arbitrary bits at the end of the codeword. In particular, if $y$ is a codeword with length $m$ then the error pattern $D_e$ deletes the first $e$ bits of the codeword $y$ and then insert $e$ arbitrary bits to the end. We denote the the corrupted codeword we obtained after this error pattern by $D_e(y)$. 

This section shows that a variant of the Hadamard code allows the local decoder to recover from type 1 errors using just $q=2$ queries.

\newcommand{\Had}{\textsf{Had}}

    We prove the following theorem.
    \begin{theorem}
        There exist explicit $(2,\delta, 1/2-\delta-2^{-t})$ insdel LDCs $C \colon \set{0,1}^n \rightarrow \set{0,1}^m$ with $m=2^{nt}$, which corrects error Type \textbf{1}.
    \end{theorem}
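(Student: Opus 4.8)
The goal is to exhibit an explicit code together with a two-query decoder and then check, by a direct combinatorial argument, that the decoder recovers any message bit with probability at least $1-\delta-2^{-t}$ against Type~$\mathbf 1$ errors. I would use a \emph{layered Hadamard code}: each message bit is Hadamard-encoded $t$ times in a single codeword, and the $t$ copies furnish $t$ candidate query pairs for the same message bit, one of which survives the (unknown, but bounded) shift.

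\textbf{Construction.} Given $x\in\{0,1\}^n$, let $\bar x=(x,x,\dots,x)\in\{0,1\}^{nt}$ be the $t$-fold repetition, and let $C(x)\in\{0,1\}^{2^{nt}}$ be the Hadamard codeword of $\bar x$ in lexicographic order: index codeword positions by integers $a\in\{0,\dots,2^{nt}-1\}$, write $\mathrm{bin}(a)=(b^{(t-1)},\dots,b^{(0)})$ as $t$ consecutive $n$-bit blocks, and set
\[
 C(x)_a=\langle \bar x,\mathrm{bin}(a)\rangle=\big\langle x,\;b^{(0)}\oplus b^{(1)}\oplus\cdots\oplus b^{(t-1)}\big\rangle .
\]
This is linear, hence a genuine ``variant of the Hadamard code'', and it has block length $m=2^{nt}$; a Type~$\mathbf 1$ error preserves the block length, so the decoder is handed $m'=m$. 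First I would record the effect of the error: the corrupted prefix $y[0\ldots m-e-1]$ equals the shifted original $C(x)[e\ldots m-1]$ while $y[m-e\ldots m-1]$ is adversarial, and $e\le\delta m/2$, so the ``garbage window'' has density at most $\delta/2$. Consequently, if the decoder queries a position $p<m-e$ it actually reads $C(x)[p+e]$.

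\textbf{Decoder and key step.} To decode $x_j$, the decoder flips bit $j$ of one of the $t$ blocks: it samples a level $k\in\{0,\dots,t-1\}$ and an offset $p$ from a distribution to be designed, reads $y[p]$ and $y[p+2^{kn+j}]$, and outputs their XOR. The arithmetic fact driving everything is: \emph{if bit $kn+j$ of $p+e$ is $0$, then (i) $p+2^{kn+j}<m-e$ as well, so both reads land on genuine shifted codeword symbols, and (ii) adding $2^{kn+j}$ to $p+e$ flips exactly bit $j$ of block $b^{(k)}$ with no carry, so the output equals $\langle x,e_j\rangle=x_j$.} Thus the only failure mode is choosing a ``bad'' level, i.e.\ one with bit $kn+j$ of $p+e$ equal to~$1$. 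The heart of the proof is to choose the distribution of $(p,k)$ so that, \emph{for every} shift $e\le\delta m/2$, a good level is hit with probability $\ge 1-\delta-2^{-t}$: here the $t$-fold layering is essential, since (for $p$ spread over a suitable window) the $t$ bits $\{kn+j:0\le k\le t-1\}$ of $p+e$ behave like $t$ nearly-independent fair coins, so all of them equal $1$ only with probability $2^{-t}$; the remaining $\delta$ slack comes from the fact that $e\le\delta m/2$ only pins down the relevant high-order behavior of $p+e$ up to an additive $\delta m/2$ window, so the decoder can certify a good level only after paying this loss. Combining the two bounds gives success probability $\ge 1-\delta-2^{-t}$, i.e.\ $C$ is a $(2,\delta,1/2-\delta-2^{-t})$ insdel LDC correcting Type~$\mathbf 1$ errors, with $m=2^{nt}$.

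\textbf{Main obstacle.} The difficulty is entirely that the decoder does not know $e$ and so cannot simply read off which block is carry-free. The crux is therefore the design of the sampling of $(p,k)$ --- possibly restricting $p$ to a window near the top of the codeword, where the top block is shift-insensitive, while letting the other levels handle the low-order part of the shift --- together with verifying the carry-confinement claim (no carry escapes bit $kn+j$, and no spurious within-block carry occurs) \emph{uniformly in $e$}. Once these facts about binary arithmetic are nailed down, the remainder is a routine union bound over the two queries plus the garbage-window estimate.
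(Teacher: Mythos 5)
Your construction is genuinely different from the paper's, and as described it has a gap that I do not think can be repaired within your framework. The paper does not repeat the message $t$ times; it pads it with zeros, encoding $x^{(t)}=(x_0 0^{t-1})(x_1 0^{t-1})\cdots(x_{n-1}0^{t-1})$ with $\mathsf{Had}_{tn}$ and querying a pair of positions differing by $2^{ti}$. The point of the zero padding is that when $2^{ti}$ is added to the shifted index $a+e$, any carry that stays inside the $t$-bit block $[ti,t(i+1)-1]$ only flips coordinates of $x^{(t)}$ that are identically zero, so the decoded value is still $x_i$; the decoder fails only if \emph{all} $t$ bits of that block of $a+e$ equal $1$, an event of probability about $2^{-t}$ over a uniform query, degraded by at most an additive $\delta$ by the shift. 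That is the only way ``$2^{-t}$'' legitimately enters a two-query argument.

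In your repetition construction every coordinate of $\bar x$ carries a message bit, so correctness of the pair $(p,\,p+2^{kn+j})$ requires the single bit at position $\pi=kn+j$ of $p+e$ to be $0$. Your $2^{-t}$ bound is for the event that all $t$ levels are simultaneously bad, but a $2$-query decoder must commit to one level $k$ without knowing $e$, so what matters is the probability that the \emph{chosen} level is good, and no joint distribution over $(p,k)$ can push this above $1/2$ uniformly in $e$ for general $j$. Concretely, fix any $j\le n-2-\log_2(1/\delta)$; then every position $kn+j$ used by the decoder satisfies $2^{kn+j+1}\le\delta m/2$, and if $e$ is drawn uniformly from $\{0,\dots,2^{(t-1)n+j+1}-1\}$ (an admissible range, and a multiple of every period $2^{kn+j+1}$), then for each fixed $(p,k)$ the bit $kn+j$ of $p+e$ is $0$ for exactly half the values of $e$; hence $\mathbb{E}_e\bigl[\Pr_{(p,k)}[\text{bit }kn+j\text{ of }p+e=0]\bigr]=1/2$, so some admissible $e$ caps the good event at probability $1/2$. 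Moreover the bad event is not harmless in your code: the carry run XORs genuine message coordinates into the output (e.g.\ for the all-ones message the answer is wrong whenever the run has even length), unlike the paper's code where intra-block carries are absorbed by the padded zeros. So the success probability cannot reach $1-\delta-2^{-t}$, and the ``main obstacle'' you flag --- choosing $(p,k)$ without knowing $e$ --- is not a deferred technicality but exactly where the approach breaks; the fix is the paper's: put the $t$-fold redundancy into zero padding of the message so that carries are absorbed, rather than into $t$ repeated copies among which the decoder would have to guess the carry-free one.
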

	
	Let $\Had_{n}$ be the Hadamard encoding for message length $n$. Given a string $\*x = x_0x_2\dots x_{n-1} \in \set{0,1}^n$, we view the codeword $\Had_{n}(\*x)$ as a function $f_{\*x} \colon [2^n-1] \rightarrow \set{0,1}^n$, where
	\begin{align*}
		\forall a \in [2^n-1]\textup{ with binary representation }a = \sum_{j=0}^{n-1}a_j \cdot 2^j, \quad f_{\*x}(a) = \bigoplus_{j=0}^{n-1}a_j x_j.
	\end{align*}
	
	\textbf{Encoder} Let $t \in \mathbb{N}$ be a parameter. For a message $\*x = x_0x_1\cdots x_{n-1} \in \set{0,1}^n$, let 
	\begin{align*}
	\*x\itn{t} \coloneqq (x_00^{t-1})(x_10^{t-1})\dots (x_{n-1}0^{t-1}), 
	\end{align*}
	i.e. the string $\*x$ with $t-1$ zeros following each bit. The encoder function is given by 
	\begin{align*}
	\Enc(\*x) = \Had_{tn}(\*x\itn{t}).
	\end{align*}
	Therefore the codeword length is $m = 2^{tn}$. Note that $x_i$ has index $ti$ in $\*x\itn{t}$, we can recover $x_i = f_{\*x\itn{t}}(2^{ti})$.
	
	\textbf{Decoder} Consider the following decoder $\Dec$. Let $f \colon [2^{tn}-1] \rightarrow \set{0,1}$ be the received string. To decode $x_i$, $\Dec$ picks a random $a \in [2^{tn}-1]$ such that $a + 2^{ti} \le 2^{tn}-1$ and outputs $f(a) \oplus f(a+2^{ti})$. 
	
	\textbf{Analysis}
	For integers $a, b \in \mathbb{N}$ with binary representations $a_j$ and $b_j$ we write $a \le_2 b$ if $a_j \le b_j$ for all $j$. We say $a \le 2^{tn}-2^{ti}-1$ is \emph{bad} for $i$ if $2^{ti}(2^t-1) \le_2 a$, i.e. $a_j = 1$ for all $ti \le j \le t(i+1)-1$. Otherwise we say $a$ is \emph{good} for $i$.
	
	The success rate of the decoder is implied by the following two claims.
	\begin{claim}
	If $a$ is good for $i$, then for any $\*x \in \set{0,1}^n$, $f_{\*x\itn{t}}(a) \oplus f_{\*x\itn{t}}(a + 2^{ti}) = x_i$.
	\end{claim}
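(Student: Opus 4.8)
The plan is to reduce the statement to a single fact about binary addition. First I would unpack the definitions: since $(\*x\itn{t})_j = x_{j/t}$ when $t \mid j$ and $(\*x\itn{t})_j = 0$ otherwise, for any $a$ with binary expansion $a = \sum_{j=0}^{tn-1} a_j 2^j$ we have
\[
f_{\*x\itn{t}}(a) \;=\; \bigoplus_{j=0}^{tn-1} a_j (\*x\itn{t})_j \;=\; \bigoplus_{k=0}^{n-1} a_{tk}\, x_k .
\]
Consequently $f_{\*x\itn{t}}(a) \oplus f_{\*x\itn{t}}(a+2^{ti}) = \bigoplus_{k=0}^{n-1}\bigl(a_{tk}\oplus (a+2^{ti})_{tk}\bigr) x_k$, so it suffices to show that, when $a$ is good for $i$, adding $2^{ti}$ to $a$ changes the bit in position $tk$ for $k=i$ only (it may change other bits, but only at positions not divisible by $t$).

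Next I would analyze the carry chain of the addition $a \mapsto a + 2^{ti}$. Let $j^\ast \ge ti$ be the least index with $a_{j^\ast} = 0$ (this exists and is at most $tn-1$ because $a \le 2^{tn}-2^{ti}-1$ by the definition of the good/bad dichotomy, so $a+2^{ti}$ is a legal argument). Adding $2^{ti}$ flips the bits in positions $ti, ti+1, \dots, j^\ast-1$ from $1$ to $0$ and the bit in position $j^\ast$ from $0$ to $1$, and changes nothing else; in the edge case $a_{ti}=0$ we simply have $j^\ast = ti$ and only that bit changes. Here is where the hypothesis is used: "good for $i$" means it is \emph{not} true that $a_j=1$ for all $ti \le j \le t(i+1)-1$, i.e. the maximal run of $1$'s of $a$ starting at position $ti$ stays within the $i$-th block, so $j^\ast \le t(i+1)-1$. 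Hence every position whose bit changes lies in $\{ti, ti+1, \dots, t(i+1)-1\}$, and among those the only multiple of $t$ is $ti$ itself. Therefore $a_{tk}\oplus(a+2^{ti})_{tk} = [k=i]$ for all $k$, and the displayed XOR collapses to $x_i$, as required.

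I do not expect any real obstacle: the only delicate point is ensuring the carry cannot propagate out of the $i$-th block of $t$ consecutive positions (which would spuriously flip the bit at position $t(i+1)$, a multiple of $t$), and this is exactly what being good for $i$ guarantees; it also handles the boundary case $i=n-1$ with no extra work. Everything else is the routine computation sketched above.
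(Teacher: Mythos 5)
Your proof is correct and follows essentially the same route as the paper: both arguments reduce $f_{\*x\itn{t}}(a)\oplus f_{\*x\itn{t}}(a+2^{ti})$ to the bits at positions divisible by $t$ and then use goodness of $a$ to show that adding $2^{ti}$ cannot alter any bit at or beyond position $t(i+1)$, so only the bit at position $ti$ contributes $x_i$. The only difference is presentational: you justify the containment of the changed bits via an explicit carry-chain argument (the carry stops at the first zero bit $j^*\le t(i+1)-1$), whereas the paper derives the same fact by an arithmetic contradiction; both are sound.
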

	\begin{proof}
	Let $a_j$ and $a_j'$ be the binary representations of $a$ and $a+2^{ti}$, respectively. Note that $a_j = a_j'$ for $j < ti$, and $a_{ti} \neq a_{ti}'$. If $a$ is good for $i$, we must also have $a_j = a_j'$ for $j \ge t(i+1)$, since otherwise there exists $j_0 \ge t(i+1)$ such that $a_{j_0}' = 1$ and  $a_{j_0} = 0$, and then
	\begin{align*}
		2^{ti} = (a+2^{ti})-a = \sum_{j=ti}^{j_0}(a_j' - a_j)2^j > 2^{j_0} - \sum_{j=ti}^{j_0-1}2^j = 2^{ti},
	\end{align*}
	which is a contradiction. The inequality is strict because $a_j = 0$ for at least one $ti \le j \le j_0-1$.
	
	Finally, note that $x\itn{t}_j = 0$ for $ti < j < t(i+1)$. We thus have
	\begin{align*}
		f_{\*x\itn{t}}(a) \oplus f_{\*x\itn{t}}(a + 2^{ti}) = \bigoplus_{j=ti}^{t(i+1)-1}a_j x\itn{t}_j \oplus \bigoplus_{j=ti}^{t(i+1)-1}a_j' x\itn{t}_j = a_{ti} x\itn{t}_{ti} \oplus a_{ti}' x\itn{t}_{ti} = \tp{a_{ti} \oplus a_{ti}'}x_i = x_i.
	\end{align*}
	\end{proof}

	\begin{claim}
	For any $0\le i\le n-1$, $\Pr_{a \in [2^{tn}-2^{ti}-1]}\left[ a\textup{ is good for }i \right] \ge 1-2^{-t}$.
	\end{claim}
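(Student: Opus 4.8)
The plan is to prove this by a direct counting argument on binary representations. Write $R_i \coloneqq [2^{tn}-2^{ti}-1] = \{0,1,\dots,2^{tn}-2^{ti}-1\}$, which has exactly $|R_i| = 2^{tn}-2^{ti}$ elements, and recall that $a$ is bad for $i$ precisely when bits $ti, ti+1,\dots,t(i+1)-1$ of $a$ are all equal to $1$. I would partition $R_i$ into consecutive blocks of length $2^{t(i+1)}$: the "full blocks" indexed by $c = 0,1,\dots,B-1$ with $B \coloneqq 2^{t(n-i-1)}-1$, where block $c$ is $\{c\cdot 2^{t(i+1)},\dots,(c+1)2^{t(i+1)}-1\}$, together with one final partial block $P \coloneqq \{(2^{t(n-i-1)}-1)2^{t(i+1)},\dots,2^{tn}-2^{ti}-1\}$ of size $2^{t(i+1)}-2^{ti}$. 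One checks that $B\cdot 2^{t(i+1)} + (2^{t(i+1)}-2^{ti}) = 2^{tn}-2^{ti} = |R_i|$, and that when $i = n-1$ one simply has $B = 0$ and $P = R_i$.

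For each full block, since the bits in positions $\ge t(i+1)$ are fixed by $c$ while the low $t(i+1)$ bits range over all values, exactly $2^{ti}$ of its $2^{t(i+1)}$ elements are bad for $i$: namely those whose low $t(i+1)$-bit word lies in $[2^{t(i+1)}-2^{ti},\, 2^{t(i+1)}-1]$ (equivalently, has its top $t$ bits all $1$). For the partial block $P$, I would argue there are \emph{no} bad elements at all: adding the base $(2^{t(n-i-1)}-1)2^{t(i+1)}$ does not change the bits below position $t(i+1)$, so an element of $P$ is bad iff its offset $r$, with $0\le r \le 2^{t(i+1)}-2^{ti}-1$, has top $t$ bits all $1$, i.e. $r \ge 2^{t(i+1)}-2^{ti}$ — a contradiction.

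Putting these together, the number of $a\in R_i$ that are bad for $i$ equals $B\cdot 2^{ti}$, so the desired conclusion $\Pr_{a\in R_i}[a\text{ good for }i] \ge 1-2^{-t}$ is equivalent to $2^t\cdot B\cdot 2^{ti} \le 2^{tn}-2^{ti}$, i.e. $B\cdot 2^{t(i+1)} \le 2^{tn}-2^{ti}$. This holds since $2^{tn}-2^{ti} = 2^{t(n-i-1)}\cdot 2^{t(i+1)} - 2^{ti} \ge (2^{t(n-i-1)}-1)\,2^{t(i+1)} = B\cdot 2^{t(i+1)}$, using $2^{t(i+1)}\ge 2^{ti}$; phrasing it as this inequality between numerator and $2^{-t}$ times the denominator also sidesteps any division issue in the degenerate case $B=0$.

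I expect no real obstacle here: the content is entirely bookkeeping, and the only place requiring care is getting the block decomposition of the \emph{truncated} range $R_i$ exactly right — in particular correctly accounting for the partial final block, checking it contributes zero bad elements, and handling the edge case $i = n-1$ where there are no full blocks at all. An essentially equivalent alternative is to write each $a\in R_i$ as $a = H\cdot 2^{t(i+1)} + M\cdot 2^{ti} + L$ with $0\le L<2^{ti}$, $0\le M<2^t$, $0\le H<2^{t(n-i-1)}$, count directly the triples with $M = 2^t-1$ and $a \le 2^{tn}-2^{ti}-1$ (one finds $H$ must satisfy $H \le 2^{t(n-i-1)}-2$, giving $B\cdot 2^{ti}$ such triples), and conclude with the same inequality.
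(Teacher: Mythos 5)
Your proof is correct and rests on the same idea as the paper's: the bad condition depends only on $a \bmod 2^{t(i+1)}$, so one counts bad elements period by period and observes that each full period of length $2^{t(i+1)}$ contains exactly $2^{ti}$ bad values. The only difference is that you treat the truncated final block explicitly (showing it contains no bad elements), whereas the paper's proof compresses this into a one-line reduction to a single period; your extra bookkeeping only strengthens the bound and does not change the argument.
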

	\begin{proof}
	Note that for $a \ge 2^{t(i+1)}$, $2^{ti}(2^t-1) \le_2 a$ if and only if $2^{ti}(2^t-1) \le_2 a - 2^{t(i+1)}$. Therefore we only need to show $\Pr_{a \in [2^{t(i+1)}-1]}\left[ a\textup{ is good for }i \right] \ge 1-2^{-t}$. Clearly $a$ is good for $i$ when $a < 2^{ti}(2^t-1)$, and hence
	\begin{align*}
		\Pr_{a \in [2^{tn}-2^{ti}-1]}\left[ a\textup{ is good for }i \right] \ge \frac{2^{ti}(2^t-1)}{2^{t(i+1)}} = \frac{2^t-1}{2^t} = 1 - 2^{-t}.
	\end{align*}
	\end{proof}

	If the distances among the indices are preserved, each deletion at the beginning of the codeword reduces the number of good indices by at most 1. So after $\delta m$ deletions we are left with at least
	\begin{align*}
		(1-2^{-t})(2^{tn}-2^{ti}) - \delta \cdot 2^{tn} = (1-\delta)2^{tn} - 2^{t(n-1)} - 2^{ti} \ge (1-\delta - 2^{-t+1})2^{tn}
	\end{align*}
	good $a$'s for $i$. Therefore it is a $(2,\delta,1/2-\delta-2^{-t+1})$ LDC for such errors.

\section{Proof of Theorem~\ref{thm:LCCtoLDCreduction}}\label{sec:LCCtoLDC}

The following theorem says that from any linear (resp. affine) insdel LCC, we can obtain a linear (resp. affine) insdel LDC that has the same parameter. The theorem is essentially Lemma 2.3 from ~\cite{Yekhanin12}. 

\begin{theorem}\label{thm:LCCtoLDCreduction_linear}
Let $\mathbb{F}$ be a finite field. Suppose $C\subseteq \mathbb{F}^m$ is a linear (resp. affine) $(q, \delta, \eps)$-insdel LCC, then there exists a linear (resp. affine) $(q, \delta, \eps)$-insdel LDC $C'$ encoding messages of length $\mathsf{dim}(C)$ to codewords of length $m$. The same holds for the insdel LCCs and insdel LDCs in the private-key setting.
\end{theorem}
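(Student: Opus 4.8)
The plan is to exploit that linear and affine codes are \emph{systematic}: once the message symbols literally appear among the codeword symbols, the local corrector of the LCC can be reused verbatim as a local decoder. First I would establish that $C$ admits an \emph{information set}, i.e. a set $S = \set{i_1 < \dots < i_k} \subseteq [m]$, where $k = \mathsf{dim}(C)$, such that the restriction map $\pi_S \colon C \to \mathbb{F}^k$ sending $c$ to $\tp{c_{i_1}, \dots, c_{i_k}}$ is a bijection. In the linear case this is the standard fact that a generator matrix of $C$ has $k$ linearly independent columns (equivalently, $C$ can be brought to systematic form after a coordinate permutation). In the affine case, write $C = v_0 + C_0$ for a linear code $C_0$ with $\mathsf{dim}(C_0) = k$ and take any information set of $C_0$: then $\pi_S(C)$ is a coset of $\pi_S(C_0) = \mathbb{F}^k$, hence equals $\mathbb{F}^k$, and since $\abs{C} = \abs{\mathbb{F}}^k = \abs{\mathbb{F}^k}$ the surjection $\pi_S$ is a bijection.

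Next I would define the encoder $C' \colon \mathbb{F}^k \to \mathbb{F}^m$ by $C'(x) = \pi_S^{-1}(x)$, i.e. $C'(x)$ is the unique codeword of $C$ whose restriction to $S$ equals $x$. Since $\pi_S$ is a linear (resp. affine) bijection, so is $\pi_S^{-1}$, hence $C'$ is linear (resp. affine); its image is exactly $C$, and by construction $C'(x)_{i_j} = x_j$ for every $j \in [k]$. The local decoder for $C'$ is then simply $\Dec'(y, m', j) \coloneqq \Dec(y, m', i_j)$, where $\Dec$ is the local corrector of the LCC; it issues the same number $q$ of queries, and it is non-adaptive whenever $\Dec$ is.

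For correctness, fix $x \in \mathbb{F}^k$ and $y \in \Sigma^{m'}$ with $\ED\tp{C'(x), y} \le \delta \cdot 2m$. Because $C'(x) \in C$ and the block length $m$ is unchanged, the defining guarantee of the $(q,\delta,\eps)$-insdel LCC applied to coordinate $i_j$ yields $\Pr[\Dec'(y,m',j) = C'(x)_{i_j}] = \Pr[\Dec(y,m',i_j) = C'(x)_{i_j}] \ge \tfrac12 + \eps$, and $C'(x)_{i_j} = x_j$. Hence $C'$ is a linear (resp. affine) $(q,\delta,\eps)$-insdel LDC with message length $k = \mathsf{dim}(C)$ and codeword length $m$, as claimed.

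Finally, for the private-key setting the argument transfers essentially unchanged: $C$ (and therefore the information set $S$) is a fixed linear/affine object, so one sets $C'(x) = \pi_S^{-1}(x)$ and $\Dec'(y,m',j;R) = \Dec(y,m',i_j;R)$, threading the shared secret coins $R$ through untouched. Any word $y$ that $\eps$-fools $\Dec'$ at message index $j$ is a word that $\eps$-fools $\Dec$ at codeword index $i_j$ while keeping the same edit distance to $C'(x) \in C$, so the negligible-fooling guarantee for $C$ immediately gives one for $C'$ (and if the private-key encoder is itself randomized, the same reasoning applies after fixing the encoder's randomness). The only place requiring a moment's care is ensuring the information set is chosen independently of $R$, which is automatic here since the underlying code is not randomness-dependent; beyond that the proof is routine, and this is the step I would flag as the (mild) main point rather than a genuine obstacle.
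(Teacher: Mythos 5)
Your proposal is correct and follows essentially the same route as the paper: exhibit an information set on which the linear (resp.\ affine) code is systematic, encode by the inverse of the restriction map, and reuse the local corrector at the corresponding codeword coordinates as the local decoder, with the argument carrying over verbatim to the private-key setting. The only cosmetic difference is that the paper realizes the systematic encoding by choosing a suitable generator matrix (and, in the affine case, subtracting $\mathbf{b}|_I$ at decoding time) whereas you fold the affine shift into the encoder, which changes nothing of substance.
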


\begin{proof}[Proof of Theorem~\ref{thm:LCCtoLDCreduction_linear}]
For any linear code $C\subseteq \mathbb{F}^m$, it encodes a message $\mathbf{x}\in \mathbb{F}^n$ to a codeword $\mathbf{y}\in \mathbb{F}^m$ through encoding function $\mathbf{y} = \mathbf{x} \cdot G$ with generating matrix $G\in \mathbb{F}^{n\times m}$. Let $I\subseteq [m]$ be a set of $\mathsf{dim}(C)$ information coordinates of $C$ (i.e. a set of coordinates whose value uniquely determines an element in $C$). For $\mathbf{y}\in C$, let $\mathbf{y}|_I\in \mathbb{F}^n$ be the restriction of $\mathbf{y}$ to the coordinates in $I$. We can find another generator matrix $G'$ such that for any message $\mathbf{x}\in \mathbb{F}^n$, $\mathbf y = \mathbf x\cdot G'\in C$ and $\mathbf y|_I = \mathbf x$. It is easy to verify that the locally correctability of $C$ implies the locally decodability of $C'$.

If $C$ is an affine code, the encoding function becomes $\mathbf{y} = \mathbf{x} \cdot G + \mathbf{b}$ for some $\mathbf{b}\in \mathbb{F}^m$. Again, let $I\subseteq [m]$ be a set of $\mathsf{dim}(C)$ information coordinates of $C$. Similarly, we can pick generator matrix $G'$ such that for any message $\mathbf{x}\in \mathbb{F}^n$, $\mathbf y = \mathbf x\cdot G' + \mathbf b\in C$ and $\mathbf y|_I - \mathbf b|_I= \mathbf x$. The locally correctability of $C$ implies the locally decodability of $C'$. 

We note the above argument holds for insdel LCCs and insdel LDCs in the private-key setting. 

\end{proof}

The proof of Theorem~\ref{thm:LCCtoLDCreduction} uses the same reduction introduced by \cite{bhattacharyya2016lower}. We first introduce two lemmas. The following lemma says that insdel LCCs must have large Hamming distance.

\begin{lemma}
\label{lem:lcc_hamming}
If $C\in \Sigma^m$ is a $(q, \delta, \eps)$-LCC, then for any two codewords $c, c'\in C$, the Hamming distance between $c$ and $c'$ is larger than $2\delta m$. 
\end{lemma}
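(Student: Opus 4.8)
The plan is to argue by contradiction in the usual ``decoder cannot distinguish two codewords'' style, adapted to the insdel setting. Suppose $c, c' \in C$ are distinct codewords with $\dist_H(c,c') \le 2\delta m$, where $\dist_H$ is Hamming distance. Since $c \ne c'$ there is a coordinate $i \in [m]$ with $c_i \ne c'_i$. The key point is that Hamming errors are a special case of insdel errors: replacing a symbol costs at most $2$ insdels (one deletion, one insertion), so $\ED(c, c') \le 2\dist_H(c,c') \le 4\delta m$. That bound is too weak directly, so instead I will use $c'$ itself as the received word: $\ED(c, c') \le 2\dist_H(c,c') $ — but more usefully, for the received word $y = c'$ we have $\ED(c', y) = 0 \le \delta \cdot 2m$ and also, since $\dist_H(c,c')\le 2\delta m$ gives $\ED(c,c') \le 2\cdot 2\delta m$, we do not even get $\ED(c,c')\le 2\delta m$ for free. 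The cleaner route: take $y := c'$. Then trivially $\ED(c', y) = 0$, and we must also check $\ED(c, y) \le \delta\cdot 2m$. Here I use that the Hamming distance bound $\dist_H(c,c') \le 2\delta m$ — wait, this only gives $\ED(c,c') \le 4\delta m$. So the correct statement to prove by contradiction is ``$\dist_H(c,c') > 2\delta m$'' using received word $y$ obtained by correcting half the disagreements: let $y$ be the string that agrees with $c$ on some $\delta m$ of the coordinates where $c,c'$ disagree and agrees with $c'$ elsewhere. Then $\dist_H(c,y) \le \delta m$ and $\dist_H(c',y)\le \delta m$ (assuming $\dist_H(c,c')\le 2\delta m$), hence $\ED(c,y) \le 2\delta m = \delta\cdot 2m$ and likewise $\ED(c',y)\le \delta\cdot 2m$.

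Now fix an index $i\in[m]$ with $c_i\ne c'_i$. By the definition of a $(q,\delta,\eps)$-insdel LCC applied with codeword $c$ and received word $y$, since $\ED(c,y)\le \delta\cdot 2m$ and $y$ has some length $m'$, we get $\Pr[\Dec(y,m',i) = c_i] \ge \tfrac12 + \eps$. Applying the same definition with codeword $c'$ and the \emph{same} received word $y$ (which satisfies $\ED(c',y)\le\delta\cdot 2m$ and has the same length $m'$), we get $\Pr[\Dec(y,m',i) = c'_i]\ge \tfrac12+\eps$. The decoder's output distribution on input $(y,m',i)$ is a single fixed distribution, so the events $\{\Dec(y,m',i)=c_i\}$ and $\{\Dec(y,m',i)=c'_i\}$ are disjoint (as $c_i\ne c'_i$), whence $1 \ge \Pr[\Dec(y,m',i)=c_i] + \Pr[\Dec(y,m',i)=c'_i] \ge 1 + 2\eps$, contradicting $\eps > 0$.

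I expect the main obstacle to be purely bookkeeping rather than conceptual: one must be careful that the ``halfway'' word $y$ truly satisfies $\ED(c,y) \le \delta\cdot 2m$ and $\ED(c',y)\le \delta\cdot 2m$ simultaneously — this uses that each Hamming substitution contributes at most one unit of \emph{edit} distance when we only change a symbol in place, so in fact $\ED(c,y) \le \dist_H(c,y) \le \delta m \le \delta\cdot 2m$, with room to spare. (One could even get the stronger statement that the Hamming distance exceeds $\delta m$, but the stated bound $2\delta m$ follows from this argument as written by choosing $y$ to split the disagreement set evenly.) The other minor point is ensuring $y$ and the corrupted codewords all have the same length so that feeding the length parameter $m'$ to $\Dec$ is consistent; since $y$ is obtained from $c$ (or $c'$) by substitutions only, $|y| = m$ and no length ambiguity arises.
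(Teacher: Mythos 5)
Your proof is correct and follows essentially the same route as the paper: both arguments construct a common received word $y$ within edit distance $\delta\cdot 2m$ of each of $c$ and $c'$ (the paper takes a midpoint along an optimal edit sequence, you take a Hamming midpoint that splits the disagreement set, which amounts to the same thing) and then derive a contradiction from $\Dec(y,m',i)$ having to equal both $c_i$ and $c'_i$ with probability at least $\frac12+\eps$. One minor aside: your parenthetical claim that a substitution contributes one unit of $\ED$ is false under the paper's insertion/deletion-only definition (it costs two), but this is harmless since your earlier factor-$2$ bound $\ED(c,y)\le 2\delta m=\delta\cdot 2m$ already suffices.
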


\begin{proof}
Assume there are two codewords $c, c'\in C$ such that $c\neq c'$ and the Hamming distance between $c$ and $c'$ is at most $2\delta m$. Then we can conclude that $\ED(c, c') \le 4\delta m$. This is because we can transform $c $ into $c'$ with less then $2\delta m$ symbol substitutions. Meanwhile, each symbol substitution can be replaced by an insertion and a deletion. Thus it takes at most $4\delta m$ insertions/deletions to transform $c$ into $c'$. 

Since $\ED(c, c') \le 4\delta m$, there must exists a $y\in \Sigma^{m'}$ such that $\ED(c, y)\le 2\delta m$ and $\ED(c', y) \le 2\delta m$. Let $i$ be one of the positions that $c$ and $c'$ differs. By the definition of insdel LCCs, there is a probabilistic algorithm $\Dec$ such that $\Pr\left[ \Dec(y, m', i) = c_i \right] \ge \frac{1}{2} + \eps$ and $\Pr\left[ \Dec(y, m', i) = c'_i \right] \ge \frac{1}{2} + \eps$. This is impossible because $c_i\neq c'_i$. Thus, the Hamming distance between $c$ and $c'$ must be larger than $2\delta m$.

\end{proof}

The reduction needs the following notion of VC-dimension. 

\begin{definition}[VC-dimension]
Let $A\subseteq \{0,1\}^n$, then the VC-dimension of $A$, denoted by $\mathsf{vc}(A)$ is the cardinality of the largest set $I\subseteq [n]$ which is shattered by $A$. That is, the restriction of $A$ to $I$, $A|_I = \{0,1\}^{|I|}$.
\end{definition}

The following lemma due to Dudley~\cite{dudley1978central} is the key to the reduction. 

\begin{lemma}[Theorem 14.12 in \cite{ledoux2013probability}]
\label{lem:vc_dimmension}
Let $A\subseteq \{0,1\}^n$ such that for every distinct $x, y\in A$, $ \| x-y\|_{\ell_2} \ge \eps \sqrt{n}$. Then, 
\[\mathsf{vc}(A) = \Omega\tp{\frac{\log(|A|)}{\log(2/\eps)}}. \]
\end{lemma}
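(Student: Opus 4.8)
The statement is Dudley's bound: a set of binary points that is well-separated in $\ell_2$ must have VC-dimension growing like $\log(\text{size})$, up to a factor depending only on the separation. Since this is a known fact from empirical process theory, I would prove it as a packing/entropy estimate with no reference to coding. First I would rewrite the hypothesis: for $x,y \in \set{0,1}^n$ one has $\|x-y\|_{\ell_2}^2 = \abs{\set{i : x_i \neq y_i}}$, so ``$\|x-y\|_{\ell_2}\ge \eps\sqrt n$ for all distinct $x,y\in A$'' is exactly the statement that $A$ is a $\gamma n$-packing of $\set{0,1}^n$ in Hamming distance with $\gamma = \eps^2$, i.e.\ a binary code of relative distance at least $\eps^2$. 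The naive estimate is Sauer--Shelah, $\abs{A} \le (en/\mathsf{vc}(A))^{\mathsf{vc}(A)}$, which only gives $\mathsf{vc}(A) = \Omega(\log\abs{A}/\log n)$; the whole content of the lemma is that the separation hypothesis lets us replace $\log n$ in the denominator by $\log(1/\eps)$.

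The clean route is Haussler's packing lemma: a family of subsets of $[n]$ of VC-dimension $d$ whose pairwise symmetric differences all exceed $\gamma n$ has at most $e(d+1)(2e/\gamma)^d$ members. Applying this with $\gamma = \eps^2$ and the $\abs{A}$ members of $A$, and writing $d = \mathsf{vc}(A)$, gives $\log\abs{A} \le \log(e(d+1)) + d\log(2e/\eps^2)$; since $\log(d+1)\le d$ and $\log(2e/\eps^2) = O(\log(2/\eps))$, this rearranges to $d \ge \Omega(\log\abs{A}/\log(2/\eps))$, as claimed. A self-contained but slightly weaker version avoids Haussler entirely: pick a random set $S$ of $t = \Theta(\eps^{-2}\log\abs{A})$ coordinates (or all of $[n]$ if that is smaller); any two codewords agree on all of $S$ with probability at most $(1-\eps^2)^t$, so a union bound over the $\binom{\abs{A}}{2}$ pairs makes the projection $A|_S$ injective, and since restricting the ground set cannot raise the VC-dimension, Sauer--Shelah applied to $A|_S \subseteq \set{0,1}^t$ gives $\mathsf{vc}(A) \ge \mathsf{vc}(A|_S) = \Omega\big(\log\abs{A} / \log(\eps^{-2}\log\abs{A})\big)$. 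This weaker bound already suffices for every qualitative consequence drawn from the lemma in this paper.

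The one genuine obstacle is shaving the extra $\log\log\abs{A}$ factor that the union-bound argument leaves behind, so as to get the sharp $\log(2/\eps)$ denominator. Closing this gap is exactly what Haussler's lemma does, and it requires replacing the union bound by a second-moment argument on a random restriction to $\Theta(d/\eps^2)$ coordinates: one lower bounds the expected number of collisions by convexity (codewords being far apart, they rarely collide, so no few points can carry most of the preimages), and upper bounds the same quantity using the structural fact that the one-inclusion graph of a VC-dimension-$d$ class has average degree $O(d)$ --- the latter being the delicate combinatorial core, proved by a shifting/compression argument. Balancing the two estimates removes the $\log\log$ loss. The cited Theorem 14.12 of Ledoux--Talagrand is the metric-entropy/chaining (dual Sudakov) packaging of the same fact, and reproducing that chaining step, rather than the elementary projection argument, is where I expect the real work to lie.
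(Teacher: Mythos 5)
Your derivation is correct, but note that the paper does not prove this lemma at all: it is quoted verbatim as Theorem 14.12 of Ledoux--Talagrand (Dudley's theorem), so the comparison is between your proof route and the textbook argument behind the citation. Your main route --- translate the $\ell_2$ separation into relative Hamming distance $\eps^2$ and invoke Haussler's packing lemma ($|A| \le e(d+1)(2e/\eps^2)^d$ for $d=\mathsf{vc}(A)$), then take logarithms --- is complete and valid, and citing Haussler as a black box is no less rigorous than the paper's own citation. The one place where your account is off is the diagnosis of where the difficulty lies: the sharp $\log(2/\eps)$ denominator does \emph{not} require Haussler's one-inclusion-graph machinery. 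Dudley's original argument (which is essentially what the cited reference gives) already achieves it: project onto a random set $S$ of $t$ coordinates, lower-bound the number of colliding pairs by convexity using the fact that Sauer--Shelah caps the number of distinct projections at $(et/d)^d$ \emph{for every} $S$, upper-bound the expected number of colliding pairs by $|A|^2(1-\eps^2)^t/2$, and balance with $t = \Theta(\eps^{-2} d\log(1/\eps))$; this yields $\log|A| = O(d\log(1/\eps))$ with no $\log\log|A|$ loss and no appeal to the average-degree bound for the one-inclusion graph. Haussler's refinement only sharpens the base of the exponent from $O(\eps^{-2}\log(1/\eps))$ to $O(\eps^{-2})$, which the $\Omega(\cdot)$ in the statement absorbs. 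Your fallback union-bound argument (injective random projection plus Sauer--Shelah) is also correct but, as you note, loses a $\log\log|A|$ factor; in this paper that would degrade Theorem~\ref{thm:LCCtoLDCreduction} to $k' = \Omega(k/\log k)$, which still preserves the qualitative exponential LCC lower bounds but not the stated $\Omega(k/\log(1/\delta))$ form. Finally, Theorem 14.12 in Ledoux--Talagrand is proved by exactly this combinatorial random-selection argument, not by a chaining/dual-Sudakov device, so the "real work" is more elementary than you anticipate.
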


\begin{proof}[Proof of Theorem~\ref{thm:LCCtoLDCreduction}]
Assume $C: \{0,1\}^k \rightarrow \Sigma^m$ is a $(q,\delta, \eps)$-insdel LCC. Without loss of generality, we can assume $\Sigma = \{0,1\}^s$. Consider an error correcting code $C^0:
\{0,1\}^s \rightarrow \{0,1\}^t$ with constant distance $\delta_0$, i.e. any two codewords in $C^0$ has Hamming distance $\delta_0 t$. Let $C^1:\{0,1\}^k\rightarrow \{0,1\}^{m t}$ be the concatenation of code $C$ and $C^0$. That is, every codeword in $C^1$ is obtained by first encoding the message with code $C$ and then encoding each symbol in $\Sigma$ with code $C^0$. By Lemma~\ref{lem:lcc_hamming}, any two codewords in $C^1$ must have Hamming distance at least $2\delta\delta_0 m t$. Thus, let $\eps= \sqrt{2\delta \delta_0}$, for any $c, c' \in C^1$, we have 
\[\| c - c'\|_{\ell_2}\geq \eps\sqrt{mt}.\]
By Lemma~\ref{lem:vc_dimmension}, the VC-dimension of $C^1$ ( $\mathsf{vc}(C^1)$) is  \[\Omega\tp{\frac{\log(|C^1|)}{\log(2/\eps)}} = \Omega\tp{\frac{\log(|C|)}{\log(2/\eps)}}  = \Omega\tp{\frac{k}{\log(1/\delta)}} \]

Let $k' = \mathsf{vc}(C^1)$. By the definition of VC-dimension, there exists a set $I\subseteq [mt]$ of size $k'$ that shatters $C^1$, i.e. $C^1|_{I} = \{0,1\}^{|I|}$. 

We can build a $(q, \delta, \eps)$-insdel LDC $C':\{0,1\}^{k'} \rightarrow \Sigma^m$ as follows. For any message $x\in \{0,1\}^{k'}$, let $C'(x) = z \in C$ such that $C^0(z)|_I = x$. Here, by $C^0(z)$, we mean encoding each symbol of $z$ with code $C^0$. By the property of set $I$, we know such a codeword $z \in C$ must exist. If there are more than one $z\in C$ satisfying $C^0(z)|_I = x$, we pick one of them arbitrarily. Assuming $I = \{t_1, t_2, \dots, t_{k'}\}$, for any message $x \in \{0,1\}^{k'}$ and any $i \in [k']$, we have $ x_i =  \tp{C^0(C'(x))}_{t_i}$. 

We now show that $C'$ is indeed a $(q, \delta, \eps)$-insdel LDC. To decode a message bit $x_i$, we only need to look at the block of $\tp{\{0,1\}^{t}}^m$ that contains $t_i$, say it is the $j$-th block.
Assume we are given a $y\in \Sigma^*$ with $\ED\tp{y, C'(x)} \le 2\delta m$, and notice that $C'(x)$ is also a codeword of $C$. The decoding algorithm for $C$ can recover $C'(x)_j$ with probability of at least $1/2 + \eps$, using at most $q$ queries to $y$. Applying $C^0$ to $C'(x)_j$ will give us $x_i$. Thus, $x_i$ can be decoded with probability at least $1/2 + \eps$ using at most $q$ queries to $y$.

\end{proof}

\end{document}